\RequirePackage{amsmath}
\documentclass[runningheads]{llncs}
\usepackage[utf8]{inputenc}
\usepackage[T1]{fontenc}
\usepackage{lmodern}
\usepackage{microtype}
\usepackage{graphicx}
\usepackage{fontawesome}
\usepackage{enumitem}
	\setlist[enumerate]{label=(\arabic*)}

\usepackage{csquotes}
\usepackage{amssymb}
\usepackage{mathtools}
\usepackage[%
]{todonotes}
\usepackage{tikz}
	\usetikzlibrary{positioning,arrows,arrows.meta,matrix,shapes.geometric,calc}
\usepackage[%
	colorlinks,
	allcolors=blue,
	breaklinks
]{hyperref}

\spnewtheorem{fact}[theorem]{Fact}{\bfseries}{\itshape}
\spnewtheorem{construction}{Construction}{\bfseries}{}

\DeclareMathOperator{\Sh}{Shift}
\DeclareMathOperator{\Le}{L}
\newcommand{\M}{\mathcal{M}}
\newcommand{\CC}{\mathcal{C}}
\newcommand{\Eprime}{\mathcal{E}^{\prime}}
\newcommand{\Kprime}{K^{\prime}}
\newcommand{\lprime}{l^{\prime}}
\newcommand{\llprime}{l^{\prime\prime}}
\newcommand{\rprime}{r^{\prime}}

\newcommand{\kprime}{k^{\prime}}
\newcommand{\kprimeprime}{k^{\prime\prime}}
\newcommand{\pprime}{p^{\prime}}
\newcommand{\Gprime}{G^{\prime}}

\newcommand{\Kcap}{K_{\cap}}
\newcommand{\Lcap}{L_{\cap}}
\newcommand{\Rcap}{R_{\cap}}
\newcommand{\Lcup}{L_{\cup}}
\newcommand{\scrule}[3]{#1^{-1} \ltimes_{#3} #2}
\newcommand{\conrule}{\rho_1 *_E \rho_2}
\newcommand{\genconrule}{\rho_1 *_{E,k} \rho_2}
\newcommand{\cnode}{\tikz[baseline=-0.6ex]{\filldraw circle (2pt);}}


\newif\iflong
\longtrue

\newcommand{\LongShort}[2]{%
	\iflong
		#1
	\else
		#2
	\fi
}

\hyphenation{mo-no-mor-phism mo-no-mor-phisms}

\begin{document}

\title{A Generalized Concurrent Rule Construction for Double-Pushout Rewriting}

\author{Jens Kosiol\,\textsuperscript{\faEnvelopeO}\orcidID{0000-0003-4733-2777} \and Gabriele Taentzer\orcidID{0000-0002-3975-5238}}

\institute{%
  Philipps-Universität Marburg, Marburg, Germany\\
	\email{\{kosiolje,taentzer\}@mathematik.uni-marburg.de}
}

\authorrunning{J. Kosiol and G. Taentzer}

\maketitle

\begin{abstract}
	Double-pushout rewriting is an established categorical approach to the rule-based transformation of graphs and graph-like objects. 
	One of its standard results is the construction of concurrent rules and the Concurrency Theorem pertaining to it: 
	The sequential application of two rules can equivalently be replaced by the application of a concurrent rule and vice versa. 
	We extend and generalize this result by introducing \emph{generalized concurrent rules} (GCRs). 
	Their distinguishing property is that they allow identifying and preserving elements that are deleted by their first underlying rule and created by the second one. 
	We position this new kind of composition of rules among the existing ones and obtain a Generalized Concurrency Theorem for it. 
	We conduct our work in the same generic framework in which the Concurrency Theorem has been presented, namely double-pushout rewriting in $\M$-adhesive categories via rules equipped with application conditions. 
	
	\keywords{Graph transformation \and Double-pushout rewriting \and $\M$-adhesive categories \and Concurrency Theorem \and Model editing}
\end{abstract}

\section{Introduction}
\label{sec:introduction}
The composition of transformation rules has long been a topic of interest for (theoretical) research in graph transformation. 
Classical kinds of rule composition are the ones of \emph{parallel} and \emph{concurrent}~\cite{ER80} as well as of \emph{amalgamated rules}~\cite{BFH85}. 
Considering the \emph{double-pushout approach} to graph transformation, these rule constructions have been lifted from ordinary graphs to the general framework of $\M$-adhesive categories and from plain rules to such with application conditions~\cite{EEPT06,EHL10,GHE14,EGHLO14}. 
These central forms of rule composition have also been developed for other variants of transformation, like \emph{single-} or \emph{sesqui-pushout rewriting}~\cite{Loewe93,Loewe15,Behr19}. 

In this work, we are concerned with simultaneously generalizing two variants of sequential rule composition in the context of double-pushout rewriting. 
We develop \emph{generalized concurrent rules} (GCRs), which comprise concurrent as well as so-called \emph{short-cut rules}~\cite{FKST18}. 
The concurrent rule construction, on the one hand, is optimized concerning {\em transient} model elements: 
An element that is created by the first rule and deleted by the second does not occur in a concurrent rule.  
A model element that is deleted by the first rule, however, cannot be reused in the second one. 
A short-cut rule, on the other hand, takes a rule that only deletes elements and a monotonic rule (i.e., a rule that only creates elements) and combines them into a single rule, where elements that are deleted and recreated may be preserved throughout the process.
GCRs fuse both effects, the omission of transient elements and the reuse of elements, into a single construction. 

The reuse of elements that is enabled by short-cut rules has two distinct advantages. 
First, information can be preserved. 
In addition, a rule that reuses model elements instead of deleting and recreating them is often applicable more frequently since necessary context does not get lost: 
Considering the double-pushout approach to graph transformation, a rule with higher reuse satisfies the dangling edge condition more often in general. 
These properties allowed us to employ short-cut rules to improve model synchronization processes~\cite{FKST19,FKST20,FKMST20}. 
Our construction of GCRs provides the possibility of reusing elements when sequentially composing arbitrary rules. 
Hence, it generalizes the restricted setting in which we defined short-cut rules. 
Thereby, we envision new possibilities for application, for example, the automated construction of complex (language-preserving) editing operations from simpler ones (which are not monotonic in general). 
This work, however, is confined to developing the formal basis. 
We present our new theory in the general and abstract framework of double-pushout rewriting in $\M$-adhesive categories~\cite{LS05,EEPT06}. 
We restrict ourselves to the case of $\M$-matching of rules, though. 
While results similar to the ones we present here also hold in the general setting, their presentation and proof are much more technical. 

In Sect.~\ref{sec:motivational-example}, we introduce our running example and motivate the construction of GCRs by contrasting it to the one of concurrent rules. 
Section~\ref{sec:preliminaries} recalls preliminaries. 
In Sect.~\ref{sec:construction-gcr}, we develop the construction of GCRs. 
We characterize under which conditions the GCR construction results in a rule and prove that it generalizes indeed both, the concurrent as well as the short-cut rule constructions. 
Section~\ref{sec:generalized-concurrency-theorem} contains our main result: 
The Generalized Concurrency Theorem 
states that subsequent rule applications can be synthesized into the application of a GCR. 
It also characterizes the conditions under which the application of a GCR can be equivalently split up into the subsequent application of its two underlying rules. 
Finally, we consider related work in Sect.~\ref{sec:related-work} and conclude in Sect.~\ref{sec:conclusion}. 
\LongShort{An appendix contains additional preliminaries and all proofs.}{A long version of this paper contains additional preliminaries and all proofs~\cite{KT20}.}

\section{Running Example}
\label{sec:motivational-example}
In this section, we provide a short practical motivation for our new rule construction. 
It is situated in the context of model editing, more precisely class refactoring~\cite{Fow99}. 
Refactoring is a technique to  improve the design of a software system without changing its behavior. 
Transformation rules can be used to specify suitable refactorings of class models. 
For the sake of simplicity, we focus on the class structure here, where classes are just blobs. 
Two kinds of class relations are specified using typed edges, namely class references and generalizations; they are typed with \textsf{ref} and \textsf{gen}, respectively.
All rules are depicted in an integrated fashion, i.e., as a single graph where annotations determine the roles of the elements. 
Black elements (without further annotations) need to exist to match a rule and are not changed by its application. 
Elements in red (additionally annotated with $--$) need to exist and are deleted upon application; green elements (annotated with $++$) get newly created. 

The refactoring rules for our example are depicted in the first line of Fig.~\ref{fig:rules-class-diagram}.
The rule \emph{removeMiddleMan} removes a \textsf{Class} that merely delegates the work to the real \textsf{Class} and directs the reference immediately to this, instead. 
The rule \emph{extractSubclass} creates a new \textsf{Class} that is generalized by an already existing one; to not introduce unnecessary abstraction, the rule also redirects an existing reference to the newly introduced subclass.

\begin{figure}[t]
	\centering
	\includegraphics[width=\linewidth]{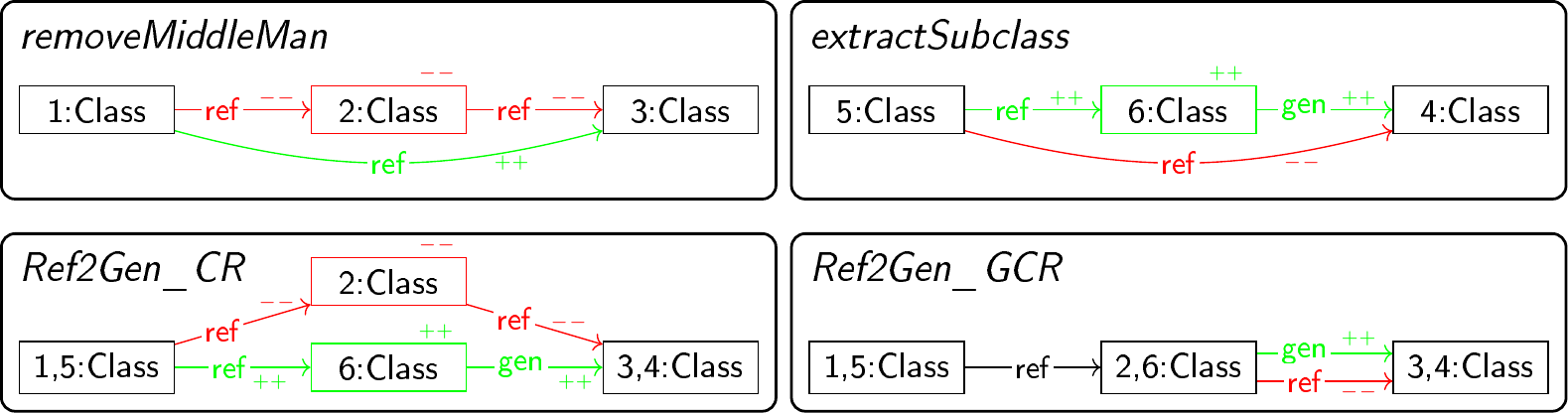}
	\caption{Two refactoring rules for class diagrams (first line) and sequentially composed rules derived from them (second line).}
	\label{fig:rules-class-diagram}
\end{figure}

Sequentially combining these two refactorings results in further ones. 
For example, this allows us to replace the second reference of a chain of two references with a generalization. 
The according concurrent rule is depicted as \emph{Ref2Gen\_CR} in Fig.~\ref{fig:rules-class-diagram}. 
It arises with \emph{removeMiddleMan} as its first underlying rule and \emph{extractSubclass} as its second, where \textsf{Classes 1} and \textsf{5} and \textsf{3} and \textsf{4} are identified, respectively. 
The new \textsf{reference}-edge created by \emph{removeMiddleMan} is deleted by \emph{extractSubclass} and, thus, becomes transient. 
But the \textsf{Class 2} that originally delegated the reference is deleted and cannot be reused. 
Instead, \textsf{Class 6} has to be newly created to put \textsf{Classes 1} and \textsf{3} into this new context. 

In many situations, however, it would be preferable to just reuse \textsf{Class 2} and only replace the reference with a generalization. 
In this way, information (such as references, values of possible attributes, and layout information) is preserved. 
And, maybe even more importantly, when adopting the double-pushout approach, such a rule is typically more often  applicable, namely also when \textsf{Class 2} has adjacent edges. 
In our construction of generalized concurrent rules, we may identify elements deleted by the first rule and recreated by the second and decide to preserve them. 
In this example, our new construction allows one to also construct \emph{Ref2Gen\_GCR} (Fig.~\ref{fig:rules-class-diagram}) from \emph{removeMiddleMan} and \emph{extractSubclass}.
In contrast to \emph{Ref2Gen\_CR}, it identifies \textsf{Class 2} deleted by \emph{removeMiddleMan} and \textsf{Class 6} created by \emph{extractSubclass} and the respective incoming references and preserves them.  This rule specifies a suitable variant of the refactoring \emph{Remove Middle Man}, where the middle man is turned into a subclass of the real class.

\section{Preliminaries}
\label{sec:preliminaries}
In this section, we recall the main preliminaries for our work. 
We introduce $\M$-adhesive categories, double-pushout rewriting, initial pushouts, $\M$-effective unions, and concurrent rules. 

Adhesive categories can be understood as categories where pushouts along monomorphisms behave like pushouts along injective functions in $\mathbf{Set}$. 
They have been introduced by Lack and Soboci\'nski~\cite{LS05} and offer a unifying formal framework for double-pushout rewriting. 
Later, more general variants, which cover practically relevant examples that are not adhesive, have been suggested~\cite{EEPT06,EGHLO14}. 
In this work, we address the framework of \emph{$\mathcal{M}$-adhesive categories}. 

\begin{definition}[$\M$-adhesive category]
	A category $\mathcal{C}$ is \emph{$\M$-adhesive} with respect to a class of monomorphisms $\mathcal{M}$ if 
	\begin{itemize}
		\item $\mathcal{M}$ contains all isomorphisms and is \emph{closed under composition and decomposition}, i.e., $f:A \hookrightarrow B,g: B\hookrightarrow C \in \mathcal{M}$ implies $g \circ f \in \mathcal{M}$ 
		and $g \circ f, g \in \mathcal{M}$ implies $f \in \mathcal{M}$.
		\item $\mathcal{C}$ has pushouts and pullbacks along $\mathcal{M}$-morphisms and \emph{$\mathcal{M}$-morphisms are closed under pushouts and pullbacks} such that if Fig.~\ref{fig:po-square} depicts a pushout square with $m \in \mathcal{M}$, then also $n \in \mathcal{M}$, and analogously, if it depicts a pullback square with $n \in \mathcal{M}$, then also $m \in \mathcal{M}$.
		\item Pushouts in $\CC$ along $\mathcal{M}$-morphisms are \emph{vertical weak van Kampen squares}: For any commutative cube as depicted in Fig.~\ref{fig:van-kampen-square} where the bottom square is a pushout along an $\mathcal{M}$-morphism, $b,c,d \in \M$, and the backfaces are pullbacks, then the top square is a pushout if and only if both front faces are pullbacks.
	\end{itemize}
	\vspace{-0.5cm}
	
	\begin{figure}
		\begin{minipage}[b]{.3\textwidth}
			\centering
			\begin{tikzpicture}
				\matrix (m) [	matrix of math nodes,
											row sep=1.25em,
											column sep=1.25em,
											minimum width=1.25em]
				{
					A	&							&	B	\\
						&	\phantom{A}	&	\\
					C	&							&	D	\\};
				\path[-stealth]
					(m-1-1) edge [->] node [left] {\scriptsize $f$} (m-3-1)
									edge [->] node [above] {\scriptsize $m$} (m-1-3)
					(m-3-1) edge [->] node [below] {\scriptsize $n$} (m-3-3)
					(m-1-3) edge [->] node [right] {\scriptsize $g$} (m-3-3);
			\end{tikzpicture}
			\caption{A pushout square.}
			\label{fig:po-square}
		\end{minipage}
		\hfill
		\begin{minipage}[b]{.65\textwidth}
			\centering
			\begin{tikzpicture}
				\matrix (m) [	matrix of math nodes,
											row sep=.35em,
											column sep=.35em,
											minimum width=.35em]
				{
							&							&			&	A^{\prime}	&	\phantom{A}	& \\
					C^{\prime}	&	\phantom{A}	&			&			&							& B^{\prime} \\
							& 						& D^{\prime}	& 		& 						& \\
							&			 				&			&	A		& 						& \\
					C		&							&			&			& 						& B \\
							&							& D		&			& 						& \\};
				\path[-stealth]
					(m-1-4) edge [->] node [above] {\scriptsize $f^{\prime}$} (m-2-1)
									edge [->] node [above] {\scriptsize $m^{\prime}$} (m-2-6)
									edge [->] node [left,near start] {\scriptsize $a$} (m-4-4)
					(m-2-1) edge [->] node [below] {\scriptsize $n^{\prime}$} (m-3-3)
									edge [->] node [left] {\scriptsize $c$} (m-5-1)
					(m-4-4) edge [->] node [above] {\scriptsize $f$} (m-5-1)
									edge [->] node [above] {\scriptsize $m$} (m-5-6)
					(m-5-1) edge [->] node [below] {\scriptsize $n$} (m-6-3)
					(m-5-6) edge [->] node [below] {\scriptsize $g$} (m-6-3)
					(m-2-6) edge [-,draw=white, line width=4pt] (m-3-3)
									edge [->] node [below] {\scriptsize $g^{\prime}$} (m-3-3)
									edge [->] node [right] {\scriptsize $b$} (m-5-6)
					(m-3-3) edge [-,draw=white, line width=4pt] (m-6-3)
									edge [->] node [right,pos=.6] {\scriptsize $d$} (m-6-3);
			\end{tikzpicture}
			\caption{Commutative cube over pushout square.}
			\label{fig:van-kampen-square}
		\end{minipage}
	\end{figure}
\end{definition}
\vspace{-0.5cm}

We write that $(\CC,\M)$ is an $\M$-adhesive category to express that a category $\CC$ is $\M$-adhesive with respect to the class of monomorphisms $\M$ and denote morphisms belonging to $\M$ via a hooked arrow. 
Typical examples of $\M$-adhesive categories are $\mathbf{Set}$ and $\mathbf{Graph}$ (for $\M$ being the class of all injective functions or homomorphisms, respectively). 

\emph{Rules} are used to declaratively describe the \emph{transformation} of objects. 
We use application conditions without introducing \emph{nested conditions} as their formal basis; they are presented in~\LongShort{\cite{HP09} and recalled in Appendix~\ref{sec:nested-conditions}.}{\cite{HP09,KT20}.} 
Moreover, we restrict ourselves to the case of $\M$-matching. 

\begin{definition}[Rules and transformations]\label{def:rules}
	A \emph{rule} $\rho = (p,\mathit{ac})$ consists of a \emph{plain rule} $p$ and an \emph{application condition} $\mathit{ac}$. 
	The plain rule is a span of $\M$-morphisms $p = (L \xhookleftarrow{l} K \xhookrightarrow{r} R)$; the objects are called \emph{left-hand side} (LHS), \emph{interface}, and \emph{right-hand side} (RHS), respectively. 
	The application condition $\mathit{ac}$ is a nested condition over $L$. 
	A \emph{monotonic rule} is a rule, where $l$ is an isomorphism; it is just denoted as $\rho=(r:L \hookrightarrow R,\mathit{ac})$. 
	Given a rule $\rho = (L \xhookleftarrow[]{l} K \xhookrightarrow[]{r} R,\mathit{ac})$ and a morphism $m: L \hookrightarrow G \in \M$, a \emph{(direct) transformation} $G \Rightarrow_{\rho,m} H$ from $G$ to $H$ is given by the diagram in Fig.~\ref{fig:definition-transformation-dpo} where both squares are pushouts and  $m \vDash \mathit{ac}$. 
	If such a transformation exists, the morphism $m$ is called a \emph{match} and rule $\rho$ is \emph{applicable} at match $m$. 
	
	\begin{figure}
		\centering
		\begin{tikzpicture}
			\matrix (m) [	matrix of math nodes,
										row sep=1.25em,
										column sep=1.25em,
										minimum width=1.25em,
										nodes={anchor=center}]
			{
				L	&								&	K	&								& R \\
					&	\phantom{(0)}	&		&	\phantom{(0)}	&	\\
				G & 							& D	& 							& H \\};
			\path[-stealth]
				(m-1-3) edge [left hook->] node [above] {\scriptsize $l$} (m-1-1)
								edge [right hook->] node [above] {\scriptsize $r$} (m-1-5)
								edge [right hook->] node [left] {\scriptsize $d$} (m-3-3)
				(m-1-1) edge [right hook->] node [below,sloped] {\scriptsize $m \vDash \mathit{ac}$} (m-3-1)
				(m-1-5) edge [right hook->] node [right] {\scriptsize $n$} (m-3-5)
				(m-3-3) edge [left hook->] node [below] {\scriptsize $g$} (m-3-1)
								edge [right hook->] node [below] {\scriptsize $h$} (m-3-5);
		\end{tikzpicture}
		\caption{Definition of a direct transformation via two pushouts.}
		\label{fig:definition-transformation-dpo}
	\end{figure}
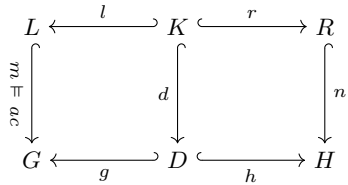
\end{definition}

For some of the following results to hold, we will need $\mathcal{M}$-adhesive categories with further properties. 
\emph{Initial pushouts} are a way to generalize the set-theoretic complement operator categorically. 
\begin{definition}[Boundary and initial pushout]\label{def:initial-pushout}
	Given a morphism $m: L \rightarrow G$ in an $\M$-adhesive category $(\CC,\M)$, an \emph{initial pushout over $m$} is a pushout $(1)$ over $m$ (as depicted in Fig.~\ref{fig:definition-ipo}) such that $b_m \in \M$ and this pushout factors uniquely through every pushout $(3)$ over $m$ where $b_m^{\prime} \in \M$. 
	I.e., for every pushout $(3)$ over $m$ with $b_m^{\prime} \in \M$, there exist unique morphisms $b_m^*,c_m^*$ with $b_m = b_m^{\prime} \circ b_m^*$ and $c_m = c_m^{\prime} \circ c_m^*$. 
	If $(1)$ is an initial pushout, $b_m$ is called \emph{boundary over $m$}, $B_m$ the \emph{boundary object}, and $C_m$ the \emph{context object with respect to $m$}.
	
	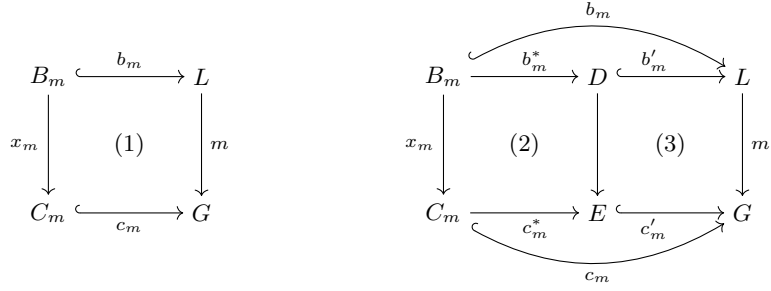
\begin{figure}
		\begin{minipage}{.39\textwidth}
			\centering
			\begin{tikzpicture}
				\matrix (m) [	matrix of math nodes,
											row sep=1.25em,
											column sep=1.25em,
											minimum width=1.25em,
											nodes={anchor=center}]
				{
					B_m	&			& L \\
							&	(1)	&	\\
					C_m & 		&	G \\};
				\path[-stealth]
					(m-1-1) edge [right hook->] node [above] {\scriptsize $b_m$} (m-1-3)
									edge [->] node [left] {\scriptsize $x_m$} (m-3-1)
					(m-1-3) edge [->] node [right] {\scriptsize $m$} (m-3-3)
					(m-3-1) edge [right hook->] node [below] {\scriptsize $c_m$} (m-3-3);
			\end{tikzpicture}
		\end{minipage}
		\hfill
		\begin{minipage}{.59\textwidth}
			\centering
			\begin{tikzpicture}
				\matrix (m) [	matrix of math nodes,
											row sep=1.25em,
											column sep=1.25em,
											minimum width=1.25em,
											nodes={anchor=center}]
				{
					B_m	&			&	D	&			& L \\
							&	(2)	&		&	(3)	&	\\
					C_m & 		& E	& 		&	G \\};
				\path[-stealth]
					(m-1-3) edge [->] (m-3-3)
									edge [right hook->] node [above,pos=.35,inner ysep=2pt] {\scriptsize $b_m^{\prime}$} (m-1-5)
					(m-1-1) edge [->] node [above,pos=.6,inner ysep=2pt] {\scriptsize $b_m^*$} (m-1-3)
									edge [right hook->, bend left] node [above] {\scriptsize $b_m$} (m-1-5)
									edge [->]  node [left] {\scriptsize $x_m$} (m-3-1)
					(m-1-5) edge [->] node [right] {\scriptsize $m$} (m-3-5)
					(m-3-1) edge [->] node [below,pos=.6,inner ysep=2pt] {\scriptsize $c_m^*$} (m-3-3)
									edge [right hook->, bend right] node [below] {\scriptsize $c_m$} (m-3-5)
					(m-3-3) edge [right hook->] node [below,pos=.35,inner ysep=2pt] {\scriptsize $c_m^{\prime}$} (m-3-5);
			\end{tikzpicture}
		\end{minipage}
		\caption{Initial pushout $(1)$ over the morphism $m$ and its factorization property.}
		\label{fig:definition-ipo}
	\end{figure}
\end{definition}
In an $\M$-adhesive category, the square $(2)$ in Fig.~\ref{fig:definition-ipo} is a pushout and $b_m^*,c_m^* \in \M$.  
In $\mathbf{Graph}$, if $m$ is injective, $C_m$ is the minimal completion of $G \setminus m(L)$ (the componentwise set-theoretic difference on nodes and edges) to a subgraph of $G$, and $B_m$ contains the boundary nodes that have to be added for this completion~\cite[Example~6.2]{EEPT06}. 

The existence of \emph{$\M$-effective unions} ensures that the $\M$-subobjects of a given object constitute a lattice. 
\begin{definition}[$\M$-effective unions]\label{def:M-effective-unions}
	An $\M$-adhesive category $(\CC,\M)$ has \emph{$\M$-effective unions} if, for each pushout of a pullback of a pair of $\M$-morphisms, the induced mediating morphism belongs to $\M$ as well. 
\end{definition}

Finally, we recall \emph{$E$-concurrent rules}, which combine the actions of two rules into a single one. 
Their definition assumes a given class $\mathcal{E}^{\prime}$ of pairs of morphisms with the same codomain. 
For the computation of the application condition of a concurrent rule, we refer to~\LongShort{Appendix~\ref{sec:nested-conditions} or \cite{EGHLO14}.}{\cite{EGHLO14,KT20}.}

\begin{definition}[$E$-concurrent rule]\label{def:concurrent-rule}
	Given two rules $\rho_i = (L_i \xhookleftarrow[]{l_i} K_i \xhookrightarrow[]{r_i} R_i, \mathit{ac}_i)$, where $i = 1,2$, an object $E$ with morphisms $e_1: R_1 \to E$ and $e_2:L_2 \to E$ is an \emph{$E$-dependency relation} for $\rho_1$ and $\rho_2$ if $(e_1,e_2) \in \mathcal{E}^{\prime}$ and the pushout complements $(1a)$ and $(1b)$ for $e_1 \circ r_1$ and $e_2 \circ l_2$ 
	(as depicted in Fig.~\ref{fig:E-con-rule}) exist.
	
	\begin{figure}
		\centering
		\includegraphics[width=\textwidth]{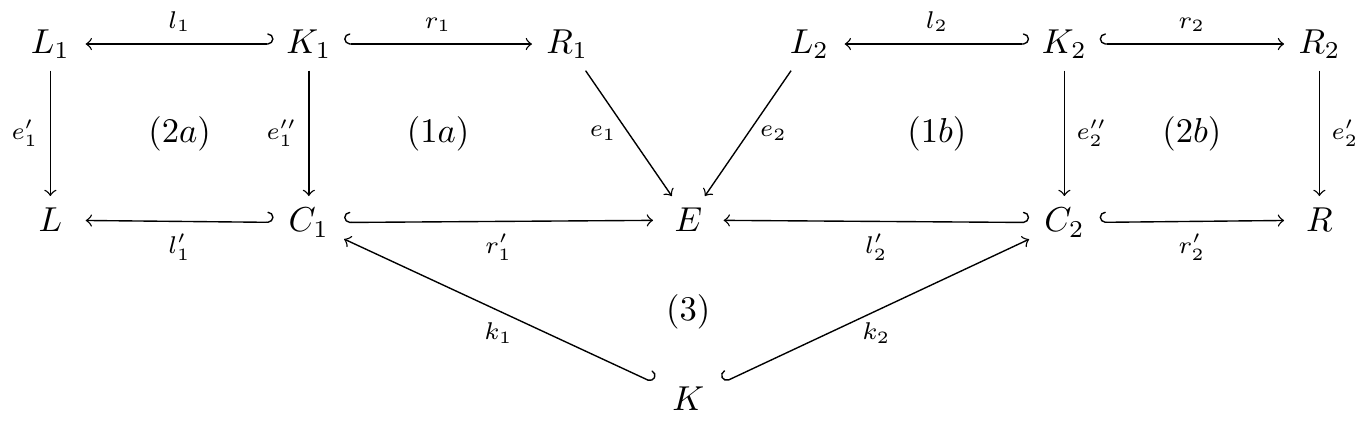}
		\caption{$E$-dependency relation and $E$-concurrent rule.}
		\label{fig:E-con-rule}
	\end{figure}
	
	Given an $E$-dependency relation $E = (e_1,e_2) \in \mathcal{E}^{\prime}$ for rules $\rho_1,\rho_2$, their \emph{$E$-concurrent rule} is defined as
		$\conrule \coloneqq (L \xhookleftarrow[]{l} K \xhookrightarrow[]{r} R, \mathit{ac})$,
	where $l \coloneqq l_1^{\prime} \circ k_1$, $r \coloneqq r_2^{\prime} \circ k_2$, $(1a)$, $(1b)$, $(2a)$, and $(2b)$ are pushouts, $(3)$ is a pullback (also shown in Fig.~\ref{fig:E-con-rule}), and $\mathit{ac}$ is computed in a way that suitably combines the semantics of $\mathit{ac}_1$ and $\mathit{ac}_2$.

	\begin{figure}
		\centering
		\includegraphics[width=\textwidth]{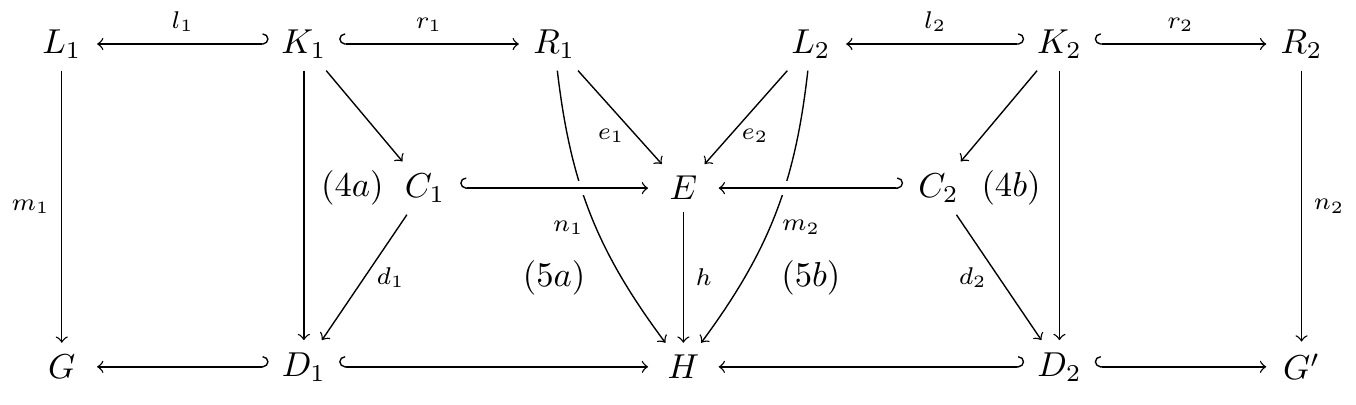}
		\caption{$E$-related transformation.}
		\label{fig:E-related-transformation}
	\end{figure}
	
	A transformation sequence $G \Rightarrow_{\rho_1,m_2} H \Rightarrow_{\rho_2,m_2} \Gprime$ is called \emph{$E$-related} for the $E$-dependency relation $(e_1,e_2) \in \Eprime$ if there exists $h: E \to H$ with $h \circ e_1 = n_1$ and $h \circ e_2 = m_2$ and morphisms $d_i: C_i \to D_i$, where $i= 1,2$, such that $(4a)$ and $(4b)$ commute and $(5a)$ and $(5b)$ are pushouts 
	(see Fig.~\ref{fig:E-related-transformation}).
\end{definition}

The Concurrency Theorem \cite[Theorem~5.23]{EEPT06} states that two $E$-related rule applications may be synthesized into the application of their $E$-concurrent rule and that an application of an $E$-concurrent rule may be analyzed into a sequence of two $E$-related rule applications.

\section{Constructing Generalized Concurrent Rules}
\label{sec:construction-gcr}
In this section, we develop our construction of \emph{generalized concurrent rules} (GCRs) in the context of an $\M$-adhesive category $(\CC,\M)$. 
We first define GCRs and relate them to the construction of concurrent and short-cut rules. 
Subsequently, we elaborate the conditions under which our construction results in a rule and characterize the kinds of rules that are derivable with our construction. 

\subsection{Construction}
Our construction of \emph{generalized concurrent rules} combines the constructions of concurrent and short-cut rules~\cite{FKST18} into a single one. 
It is based on the choice of an $E$-dependency relation as well as of a \emph{common kernel}. 
Intuitively, the $E$-dependency relation captures how both rules are intended to overlap (potentially producing transient elements) whereas the common kernel identifies elements that are deleted by the first rule and recreated by the second one. 
As both concepts identify parts of the interfaces of the involved rules, the construction of a GCR assumes an $E$-dependency relation and a common kernel that are \emph{compatible}. 

\begin{definition}[(Compatible) Common kernel]\label{def:common-kernel_compatibility}
	Given two rules $\rho_i = (L_i \xhookleftarrow{l_i} K_i \xhookrightarrow{r_i} R_i, \mathit{ac}_i)$, where $i = 1,2$, a \emph{common kernel} for them is an $\M$-morphism $k: \Kcap \hookrightarrow V$ with $\M$-morphisms $u_i: \Kcap \hookrightarrow K_i,\ v_1: V \hookrightarrow L_1,\ v_2: V \hookrightarrow R_2$ such that both induced squares 
	$(1a)$ and $(1b)$ in Fig.~\ref{fig:compatibility} are pullbacks. 
	
	\begin{figure}
		\centering
		\begin{tikzpicture}
			\matrix (m) [	matrix of math nodes,
										nodes in empty cells,
										row sep=1.25em,
										column sep=1.25em,
										minimum width={width("$R_2$")},
										nodes={anchor=center}]
			{
				V		&				&			&	&			& \Kcap	&			&	& 		&				& V\\
						& (1a)	&			&	&			&				&			&	&			& (1b)	&	\\	
				L_1	&				& K_1	&	& R_1	&	(2)		& L_2	& & K_2 &				& R_2 \\	
						&				&			&	&			&				&			&	&			&				& \\
						&				&			&	&			& E			&			&	& 		&				& \\};
			\path[-stealth]
				(m-1-1) 		edge [right hook->] node [left] {\scriptsize $v_1$} (m-3-1)
				(m-1-6) 		edge [left hook->] node [above] {\scriptsize $u_1$} (m-3-3.45)
										edge [right hook->] node [above] {\scriptsize $u_2$} (m-3-9.135)
										edge [left hook->] node [above] {\scriptsize $k$} (m-1-1)
										edge [right hook->] node [above] {\scriptsize $k$} (m-1-11)
				(m-1-11)		edge [right hook->] node [right] {\scriptsize $v_2$} (m-3-11) 
				(m-3-3) 		edge [left hook->] node [above] {\scriptsize $l_1$} (m-3-1)
										edge [right hook->] node [above] {\scriptsize $r_1$} (m-3-5)
				(m-3-3.315)	edge [right hook->] node [below,sloped] {\scriptsize $e_1 \circ r_1$} (m-5-6)
				(m-3-5) 		edge [right hook->] node [left] {\scriptsize $e_1$} (m-5-6)
				(m-3-7) 		edge [left hook->] node [right] {\scriptsize $e_2$} (m-5-6)
				(m-3-9) 		edge [left hook->] node [above] {\scriptsize $l_2$} (m-3-7)
										edge [right hook->] node [above] {\scriptsize $r_2$} (m-3-11)
				(m-3-9.225)	edge [left hook->] node [below,sloped] {\scriptsize $e_2 \circ l_2$} (m-5-6);
		\end{tikzpicture}
		\caption{Compatibility of common kernel with $E$-dependency relation.}
		\label{fig:compatibility}
	\end{figure}
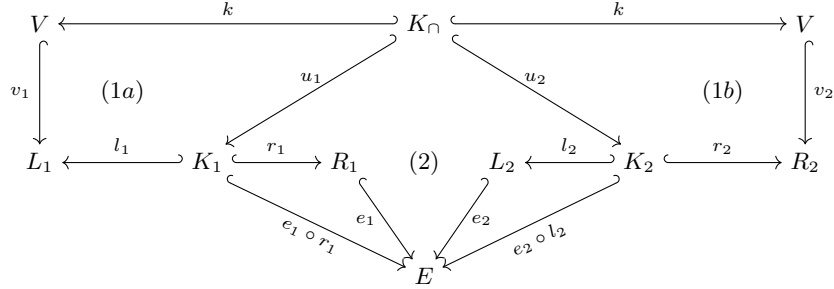
	
	Given additionally an $E$-dependency relation $E = (e_1: R_1 \hookrightarrow E, e_2:L_2 \hookrightarrow E) \in \Eprime$ for $\rho_1$ and $\rho_2$, $E$ and $k$ are \emph{compatible} if square $(2)$ 
	is a pullback.  
\end{definition}

In the following, we will often suppress the morphisms $u_i,v_i$ from our notation and just speak of a common kernel $k: \Kcap \hookrightarrow V$. 
As an $\M$-morphism $k$ might constitute a common kernel for a pair of rules in different ways, we implicitly assume the embedding to be given.

\begin{example}\label{ex:common-kernel}
	Figure~\ref{fig:example-common-kernel} shows an $E$-dependency relation and a common kernel compatible with it for rules \emph{removeMiddleMan} and \emph{extractSubclass} (Fig.~\ref{fig:rules-class-diagram}). 
	The names of the nodes also indicate how the morphisms are defined. 
	The concurrent rule for this $E$-dependency relation is the rule \emph{Ref2Gen\_CR} in Fig.~\ref{fig:rules-class-diagram}. 
	
	\begin{figure}%
		\centering
		\includegraphics[width=\textwidth]{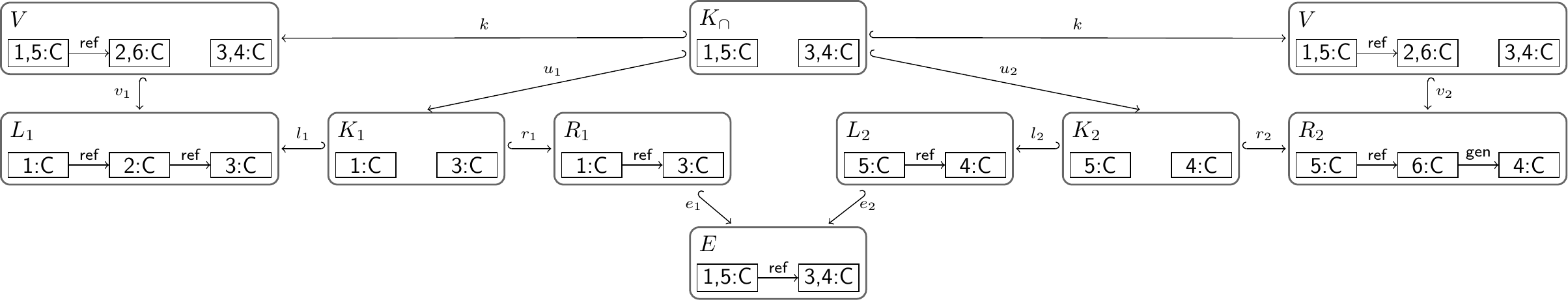}%
		\caption{Common kernel for rules \emph{removeMiddleMan} and \emph{extractSubclass}.}
		\label{fig:example-common-kernel}
	\end{figure}
\end{example}

The following lemma is the basis for the construction of generalized concurrent rules; it directly follows from the definition of the interface $K$ of a concurrent rule as pullback.  

\begin{lemma}\label{lem:existence-extension-morphism}
	Given two rules $\rho_1,\rho_2$, an $E$-dependency relation $E$, and a common kernel $k$ for  $\rho_1,\rho_2$ that is compatible with $E$, there exists a unique $\M$-morphism $p:\Kcap \hookrightarrow K$, where $K$ is the interface of the concurrent rule $\rho_1 *_E \rho_2$, such that $k_i \circ p = e_i^{\prime\prime} \circ u_i$ for $i=1,2$ (compare the diagrams in Definitions~\ref{def:concurrent-rule} and \ref{def:common-kernel_compatibility}). 
\end{lemma}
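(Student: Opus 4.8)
The plan is to obtain $p$ from the universal property of the pullback defining the concurrent rule's interface $K$, and then to derive $p \in \M$ from the closure properties of $\M$. Recall from Definition~\ref{def:concurrent-rule} (and Fig.~\ref{fig:E-con-rule}) that $(1a)$ and $(1b)$ are the pushout complements of $e_1 \circ r_1$ and $e_2 \circ l_2$; they produce objects $C_1,C_2$ with morphisms $e_1^{\prime\prime}\colon K_1\to C_1$, $e_2^{\prime\prime}\colon K_2\to C_2$ and $\gamma_1\colon C_1\to E$, $\gamma_2\colon C_2\to E$ such that $\gamma_1\circ e_1^{\prime\prime}=e_1\circ r_1$ and $\gamma_2\circ e_2^{\prime\prime}=e_2\circ l_2$, and that $(3)$ presents $K$ together with its projections $k_1\colon K\to C_1$ and $k_2\colon K\to C_2$ as the pullback of $\gamma_1$ and $\gamma_2$ over $E$.

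First I would assemble a cone over this pullback with tip $\Kcap$, using the legs $e_1^{\prime\prime}\circ u_1\colon \Kcap\to C_1$ and $e_2^{\prime\prime}\circ u_2\colon \Kcap\to C_2$. These agree after composition with $\gamma_1$ resp.\ $\gamma_2$, since
\[
\gamma_1\circ e_1^{\prime\prime}\circ u_1 \;=\; e_1\circ r_1\circ u_1 \;=\; e_2\circ l_2\circ u_2 \;=\; \gamma_2\circ e_2^{\prime\prime}\circ u_2,
\]
where the middle equality is exactly the commutativity of square $(2)$ in Fig.~\ref{fig:compatibility}, which holds because $E$ and $k$ are compatible (so $(2)$ is, in fact, a pullback). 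The universal property of $(3)$ then yields a unique $p\colon\Kcap\to K$ with $k_1\circ p=e_1^{\prime\prime}\circ u_1$ and $k_2\circ p=e_2^{\prime\prime}\circ u_2$; as $k_1,k_2$ are jointly monic, this is the uniqueness asserted in the lemma.

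It remains to check $p\in\M$, which I would do by applying closure of $\M$ under decomposition to $k_1\circ p=e_1^{\prime\prime}\circ u_1$. On the one hand, $(1b)$ is a pushout along $l_2\in\M$, so $\gamma_2\in\M$; since $(3)$ is a pullback, $k_1$ is the pullback of $\gamma_2$ along $\gamma_1$ and hence lies in $\M$. On the other hand, $(1a)$ is a pushout along $r_1\in\M$, hence also a pullback, so $e_1^{\prime\prime}$ is the pullback of $e_1$ along $\gamma_1$; because $e_1\in\M$ (in the $\M$-matching setting an $E$-dependency relation consists of $\M$-morphisms) we get $e_1^{\prime\prime}\in\M$, and composing with $u_1\in\M$ gives $e_1^{\prime\prime}\circ u_1\in\M$. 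From $k_1\circ p\in\M$ and $k_1\in\M$, decomposition yields $p\in\M$.

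The statement is light, so I expect no genuine obstacle; the one thing that needs care is the bookkeeping — checking that the objects and arrows named $C_i$, $E$, $u_i$, $e_i^{\prime\prime}$ in the diagrams of Definitions~\ref{def:concurrent-rule} and \ref{def:common-kernel_compatibility} are literally identified, and noting that it is already the commutativity of square $(2)$ (a by-product of compatibility) that is used to build the cone. Everything beyond that is the universal property of a pullback together with the standard stability of $\M$ under pushout, pullback, composition, and decomposition.
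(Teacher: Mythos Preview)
Your proof is correct and follows essentially the same route as the paper: build a cone over the pullback $(3)$ defining $K$ via the compatibility square $(2)$, invoke the universal property to obtain $p$, and conclude $p\in\M$ by decomposition. The paper's proof is terser (it writes the same chain of equalities using the names $r_1',l_2'$ where you write $\gamma_1,\gamma_2$, and simply says ``$p\in\M$ by decomposition of $\M$-morphisms''), whereas you spell out explicitly why $k_1$ and $e_1^{\prime\prime}\circ u_1$ lie in $\M$; but the argument is the same.
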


A GCR extends a concurrent rule by enhancing its interface $K$ with the additional elements in $V$ of a given common kernel. 
Formally, this means to compute a pushout along the just introduced morphism $p$. 

\begin{construction}\label{con:span-gen-con-rule-1}
	Given two plain rules $\rho_i = (L_i \xhookleftarrow{l_i} K_i \xhookrightarrow{r_i} R_i)$, where $i=1,2$, an $E$-dependency relation $E = (e_1: R_1 \hookrightarrow E, e_2: L_2 \hookrightarrow E) \in \mathcal{E}^{\prime}$, and a common kernel $k: \Kcap \hookrightarrow V$ of $\rho_1$ and $\rho_2$ that is compatible with $E$, we construct the span $L \xleftarrow{l^{\prime}} K^{\prime} \xrightarrow{r^{\prime}} R$ as follows (compare Fig.~\ref{fig:construction-gcr-new}):
	\begin{enumerate}
		\item Compute the concurrent rule $\rho_1 *_E \rho_2 = (L \xhookleftarrow{l} K \xhookrightarrow{r} R)$. 
		\item Compute $\Kprime$ as pushout of $k$ along $p$, where $p: \Kcap \hookrightarrow K$ is the unique morphism existing according to Lemma~\ref{lem:existence-extension-morphism} (depicted twice in Fig.~\ref{fig:construction-gcr-new}).
		\item The morphism $\lprime: \Kprime \to L$ is the unique morphism with $\lprime \circ \pprime = e_1^{\prime} \circ v_1$ and $\lprime \circ \kprime = l_1^{\prime} \circ k_1$ that is induced by the universal property of the pushout computing $\Kprime$. 
		The morphism $\rprime$ is defined analogously.
	\end{enumerate}
	\begin{figure}
		\centering
		\includegraphics[width=\textwidth]{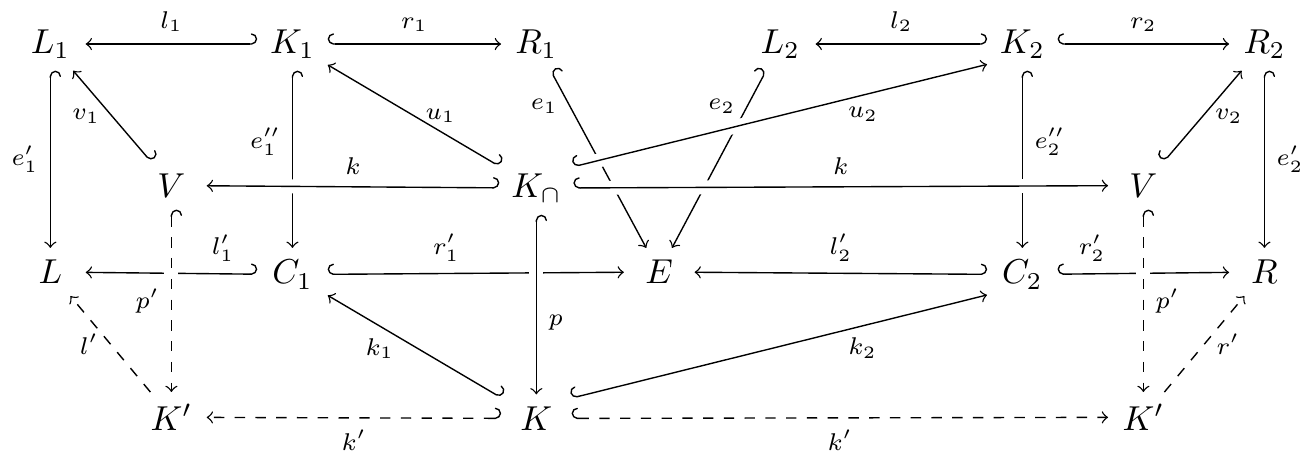}%
		\caption{Construction of a generalized concurrent rule.}%
		\label{fig:construction-gcr-new}%
	\end{figure}
\end{construction}

\begin{definition}[Generalized concurrent rule. Enhancement morphism]\label{def:generalized-concurrent-rule}
	Given two rules $\rho_1,\rho_2$, an $E$-dependency relation $E$, and a common kernel $k$ of $\rho_1$ and $\rho_2$ that are compatible such that the span $L \xleftarrow{l^{\prime}} K^{\prime} \xrightarrow{r^{\prime}} R$ obtained from Construction~\ref{con:span-gen-con-rule-1} consists of $\M$-morphisms, the \emph{generalized concurrent rule} of $\rho_1$ and $\rho_2$, given $E$ and $k$, is defined as $\genconrule \coloneqq (L \xhookleftarrow{l^{\prime}} K^{\prime} \xhookrightarrow{r^{\prime}} R, \mathit{ac})$ with $\mathit{ac}$ being the application condition of the concurrent rule $\rho_1 *_E \rho_2 = (L \xhookleftarrow{l} K \xhookrightarrow{r} R, \mathit{ac})$.
	
	The unique $\M$-morphism $\kprime: K \hookrightarrow \Kprime$ with $l = l_1^{\prime} \circ k_1 = \lprime \circ \kprime$ and $r = r_2^{\prime} \circ k_2 = \rprime \circ \kprime$, which is obtained directly from the construction, is called \emph{enhancement morphism}. 
	We also say that $\genconrule$ is a GCR \emph{enhancing} $\conrule$. 
\end{definition}

\begin{example}
	\emph{Ref2Gen\_GCR} is a GCR that enhances \emph{Ref2Gen\_CR} (Fig.~\ref{fig:rules-class-diagram}); it is constructed using the common kernel presented in Example~\ref{ex:common-kernel}.
	Figure~\ref{fig:example-construction-gcr} illustrates the computation of its interface $\Kprime$ and left-hand morphism $\lprime$. 
	The pushout of $k$ and $p$ extends the interface of \emph{Ref2Gen\_CR} by the \textsf{Class 2,6} and its incoming reference. 
	
	\begin{figure}[t]
		\centering
		\includegraphics[scale=.7]{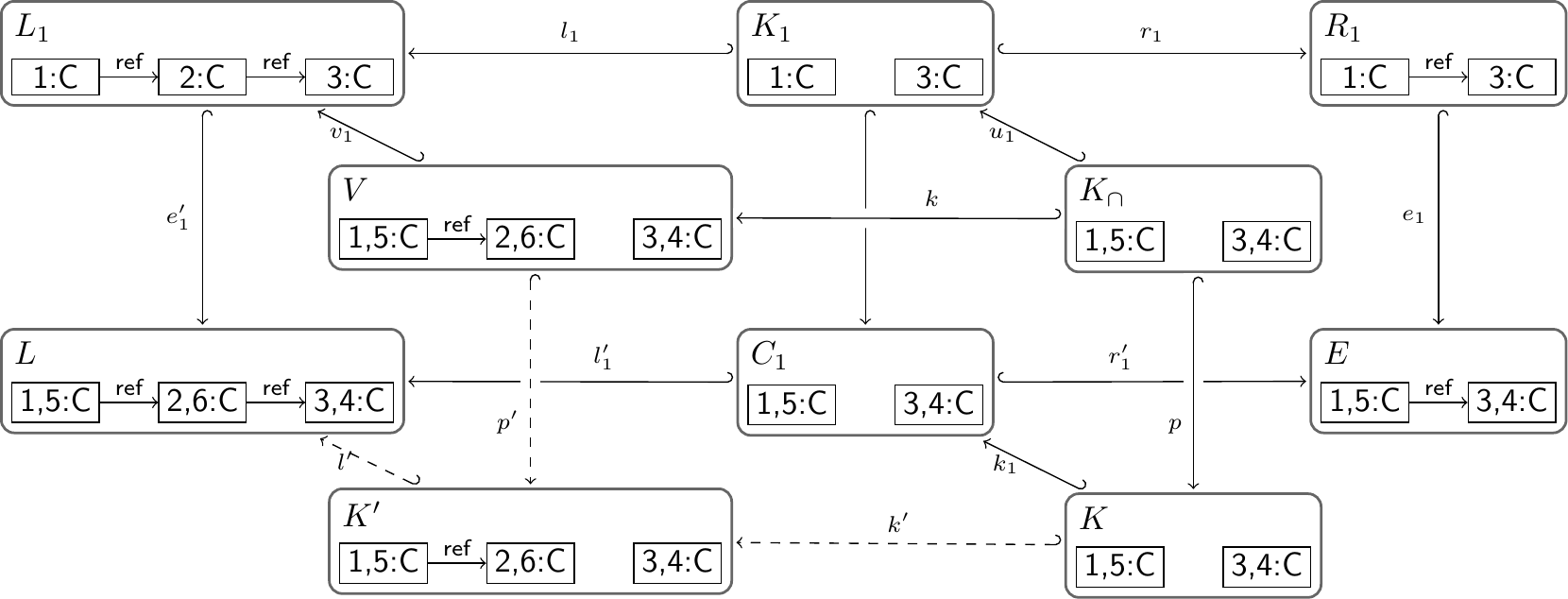}%
		\caption{Computing the interface and the left-hand morphism of \emph{Ref2Gen\_GCR}.}%
		\label{fig:example-construction-gcr}%
	\end{figure}
\end{example}

\begin{note}[Assumptions and notation]
	For the rest of the paper, we fix the following assumptions:
	We work in an $\M$-adhesive category $(\CC,\M)$ with a given class $\mathcal{E}^{\prime}$ of pairs of morphisms with the same codomain such that $\CC$ possesses an $\Eprime$-$\M$ pair factorization.\footnote{%
	This means, every pair of morphisms with the same codomain can be factored as a pair of morphisms belonging to $\Eprime$ followed by an $\M$-morphism.
	We do not directly need this property in any of our proofs but it is assumed for the computation of application conditions of concurrent and, hence, also generalized concurrent rules.
	Moreover, it guarantees the existence of $E$-related transformations~\cite[Fact~5.29]{EEGH15}. 
	
	Since we restrict ourselves to the case of $\M$-matching, decomposition of $\M$-morphisms then ensures that all occurring pairs $(e_1,e_2) \in \mathcal{E}^{\prime}$ are in fact even pairs of $\M$-morphisms. 
	This in turn (by closedness of $\M$ under pullbacks) implies that in any common kernel $k$ compatible to a given $E$-dependency relation, the embedding morphisms $u_1,u_2$ are necessarily $\M$-morphisms.
}
	Further categorical assumptions are mentioned as needed.  
	Furthermore, we always assume two rules $\rho_1,\rho_2$, an $E$-dependency relation $E$, and a common kernel $k$ for them to be given such that $E$ and $k$ are compatible. 
	We consistently use the notations and names of morphisms as introduced above (and in Fig.~\ref{fig:construction-gcr-new}). 
\end{note}

\subsection{Relating Generalized Concurrent Rules to other Kinds of Rules}
\label{sec:relating-gcrs}
In this section, we relate generalized concurrent rules to other variants of rule composition. 

\paragraph{Concurrent Rules} are the established technique of sequential rule composition in double-pushout rewriting. 
By definition, the left- and right-hand sides of a GCR coincide with the ones of the concurrent rule it enhances. 
One also directly obtains that a GCR coincides with its underlying concurrent rule if and only if its common kernel $k$ is chosen to be an isomorphism. 

\begin{proposition}[A concurrent rule is a GCR]\label{prop:con-rule-as-gen-con-rule}
	Given a concurrent rule $\conrule$ and a GCR $\genconrule$ enhancing it, the enhancement morphism $\kprime: K \hookrightarrow \Kprime$ is an isomorphism if and only if $k$ is one. 
	In particular, $\conrule$ coincides with $\genconrule$ (up to isomorphism) for $k = id_{\Kcap}$, where $\Kcap$ is obtained by pulling back $(e_1 \circ r_1, e_2 \circ l_2)$. 
\end{proposition}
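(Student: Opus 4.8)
The plan is to argue directly from the pushout square used in step~(2) of Construction~\ref{con:span-gen-con-rule-1}: the pushout of $k \colon \Kcap \hookrightarrow V$ along $p \colon \Kcap \hookrightarrow K$, with resulting object $\Kprime$ and morphisms $\kprime \colon K \hookrightarrow \Kprime$ and $\pprime \colon V \hookrightarrow \Kprime$ satisfying $\pprime \circ k = \kprime \circ p$. Both $k$ and $p$ belong to $\M$ --- the former by Definition~\ref{def:common-kernel_compatibility}, the latter by Lemma~\ref{lem:existence-extension-morphism}. For the implication ``$k$ is an isomorphism $\Rightarrow$ $\kprime$ is an isomorphism'' I would observe that $\bigl(K \xrightarrow{\,id_K\,} K,\ p \circ k^{-1} \colon V \to K\bigr)$ is a pushout of the span $V \xhookleftarrow{k} \Kcap \xhookrightarrow{p} K$ (an immediate check), so $\kprime$ is an isomorphism by uniqueness of pushouts. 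For the converse I would invoke the standard fact that in an $\M$-adhesive category a pushout along an $\M$-morphism is also a pullback; applied to $k \in \M$ this makes the square above a pullback, and since $k$ is then the pullback of $\kprime$ along $\pprime$, the isomorphism $\kprime$ forces $k$ to be an isomorphism. This establishes the equivalence.

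For the ``in particular'' clause I would let $\Kcap$ be the pullback of the cospan $K_1 \xrightarrow{\,e_1 \circ r_1\,} E \xleftarrow{\,e_2 \circ l_2\,} K_2$ with projections $u_1 \colon \Kcap \hookrightarrow K_1$ and $u_2 \colon \Kcap \hookrightarrow K_2$, and set $V \coloneqq \Kcap$, $k \coloneqq id_{\Kcap}$, $v_1 \coloneqq l_1 \circ u_1$, $v_2 \coloneqq r_2 \circ u_2$. The first thing to verify is that this is indeed a common kernel compatible with $E$ in the sense of Definition~\ref{def:common-kernel_compatibility}: the morphisms $u_i$ lie in $\M$ since $e_1, e_2 \in \M$ (as noted above, because we restrict to $\M$-matching) and $\M$ is closed under pullback, hence the $v_i$ lie in $\M$ by closedness under composition, while $k = id_{\Kcap} \in \M$ trivially; squares $(1a)$ and $(1b)$ of Fig.~\ref{fig:compatibility} are pullbacks because one leg of each is an identity and the opposite leg ($l_1$, resp.\ $r_2$) is a monomorphism; and square $(2)$ is a pullback by the very choice of $\Kcap$. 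Construction~\ref{con:span-gen-con-rule-1} then computes $\Kprime$ as a pushout of $id_{\Kcap}$ along $p$, so we may take $\Kprime = K$ with $\kprime = id_K$; step~(3) consequently gives $\lprime = l$ and $\rprime = r$, so the span $L \xhookleftarrow{\lprime} \Kprime \xhookrightarrow{\rprime} R$ consists of $\M$-morphisms, the GCR $\genconrule$ is well defined, and --- its application condition being that of $\conrule$ by Definition~\ref{def:generalized-concurrent-rule} --- it coincides with $\conrule$ up to isomorphism. (Equivalently, once the equivalence above is proved and this choice is seen to be a legitimate compatible common kernel, it is enough to note that $k = id_{\Kcap}$ is an isomorphism.)

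I expect the only step needing genuine care to be the verification that this particular $\Kcap$ together with $k = id_{\Kcap}$ satisfies all the pullback requirements of Definition~\ref{def:common-kernel_compatibility} --- in particular that $(1a)$ and $(1b)$ are pullbacks --- which is what guarantees that the GCR appearing in the ``in particular'' clause is defined in the first place. Everything else comes down to elementary manipulation of pushouts and pullbacks and the cited property that, in an $\M$-adhesive category, pushouts along $\M$-morphisms are pullbacks.
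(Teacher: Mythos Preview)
Your proposal is correct and follows essentially the same route as the paper's proof: the equivalence is obtained from the pushout square computing $\Kprime$ by noting that pushouts along isomorphisms yield isomorphisms and, via the fact that pushouts along $\M$-morphisms are pullbacks, that pullbacks along isomorphisms do as well; the ``in particular'' clause is handled by exactly the choice $V=\Kcap$, $k=id_{\Kcap}$, $v_i = l_1 \circ u_1$ resp.\ $r_2 \circ u_2$, with the pullback squares $(1a)$ and $(1b)$ verified using that $l_1,r_2$ are monic (the paper cites this as \ref{fact:pb-mono}). Your write-up is simply more explicit than the paper's.
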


\paragraph{Short-cut rules}~\cite{FKST18} 
are a further, very specific kind of sequentially composed rules (for the definition of which we refer to \LongShort{Definition~\ref{def:scrule} in Appendix~\ref{sec:short-cut-rules}).}{\cite{FKST18,KT20}).} 
In an adhesive category, given a rule that only deletes and a rule that only creates, a short-cut rule combines their sequential effects into a single rule that allows to identify elements that are deleted by the first rule as recreated by the second and to preserve them instead. 
The construction of GCRs we present here now fuses our construction of short-cut rules with the concurrent rule construction. 
This means, we lift that construction from its very specific setting (adhesive categories and monotonic, plain rules) to a far more general one ($\M$-adhesive categories and general rules with application conditions). 
This is of practical relevance as, in application-oriented work on incremental model synchronization, we are already employing short-cut rules in more general settings (namely, we compute short-cut rules from monotonic rules with application conditions rewriting typed attributed triple graphs, which constitute an adhesive HLR category that is not adhesive)~\cite{FKST19,FKST20,FKMST20}. 

\begin{proposition}[A short-cut rule is a GCR]\label{prop:sc-rule-is-gcr}
	Let $\CC$ be an adhesive category and the class $\mathcal{E}^{\prime}$ be such that it contains all pairs of jointly epic $\M$-morphisms.
	Let $r_i = (r_i: L_i \hookrightarrow R_i)$, where $i = 1,2$, be two monotonic rules and $k: \Kcap \hookrightarrow V$ a common kernel for them. 
	Then the short-cut rule $\scrule{r_1}{r_2}{k}$ coincides with the generalized concurrent rule $r_1^{-1} *_{E,k} r_2$, where $E = (e_1,e_2)$ is given via pushout of $(u_1,u_2)$. 
\end{proposition}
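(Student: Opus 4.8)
The plan is to unfold both constructions, observe that in this special case they are assembled from the same iterated pushouts over $\Kcap$, and then identify them via universal properties. Write $\rho_1 \coloneqq r_1^{-1}$ as the span $(R_1 \xhookleftarrow{r_1} L_1 \xhookrightarrow{\mathrm{id}} L_1)$ and $\rho_2 \coloneqq r_2$ as $(L_2 \xhookleftarrow{\mathrm{id}} L_2 \xhookrightarrow{r_2} R_2)$; then the datum of $k$ (with $u_i : \Kcap \hookrightarrow L_i$, $v_1 : V \hookrightarrow R_1$, $v_2 : V \hookrightarrow R_2$) is simultaneously a common kernel for $(\rho_1,\rho_2)$ in the sense of Definition~\ref{def:common-kernel_compatibility} and the datum underlying the short-cut rule construction. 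First I would check that $E = (e_1 : L_1 \hookrightarrow E,\, e_2 : L_2 \hookrightarrow E)$, defined as the pushout of $(u_1,u_2)$, is a legitimate $E$-dependency relation compatible with $k$: since $\CC$ is adhesive, this pushout of monomorphisms has $\M$-components $e_1,e_2$ which are jointly epic, so $(e_1,e_2)\in\Eprime$ by the hypothesis on $\Eprime$; it is moreover a pullback (adhesivity), which under the above identifications is exactly square~$(2)$ of Fig.~\ref{fig:compatibility}, so $E$ and $k$ are compatible. The pushout complements $(1a),(1b)$ in the concurrent rule construction are taken over the right leg of $\rho_1$ and the left leg of $\rho_2$, both of which are identities here; hence they are trivial, giving $C_1 \cong E \cong C_2$ with the comparison morphisms $C_i \to E$ isomorphisms and, under this identification, $e_1'' = e_1$, $e_2'' = e_2$.

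Next I would read off the concurrent rule $r_1^{-1} *_E r_2$ under these identifications. Its interface $K$, obtained from square~$(3)$ as a pullback both of whose legs become isomorphisms once $C_1 \cong E \cong C_2$, is isomorphic to $E$, with $k_1,k_2$ isomorphisms. Its left-hand side $L$ (the pushout $(2a)$) becomes the pushout of $R_1 \xhookleftarrow{r_1} L_1 \xhookrightarrow{e_1} E$; pasting this with the pushout defining $E$ shows $L \cong R_1 +_{\Kcap} L_2$, with $l = l_1' \circ k_1$ the evident morphism $E \to L$ and $e_1' : R_1 \to L$ a pushout injection; symmetrically $R \cong R_2 +_{\Kcap} L_1$. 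By Lemma~\ref{lem:existence-extension-morphism}, the morphism $p : \Kcap \hookrightarrow K \cong E$ is then forced to be $e_1 \circ u_1 = e_2 \circ u_2$. Consequently the interface of the GCR, the pushout $\Kprime$ of $V \xhookleftarrow{k} \Kcap \xhookrightarrow{p} E$, is the three-fold gluing of $L_1$, $L_2$, and $V$ over $\Kcap$, i.e.\ the colimit of $L_1 \xhookleftarrow{u_1} \Kcap \xhookrightarrow{u_2} L_2$ together with $\Kcap \xhookrightarrow{k} V$; and $\lprime,\rprime$ are the mediating morphisms to $L,R$ induced by $(v_1,r_1)$ and $(v_2,r_2)$ according to step~(3) of Construction~\ref{con:span-gen-con-rule-1} (well-definedness of these mediators uses the common-kernel pullbacks $(1a),(1b)$).

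It then remains to see that the short-cut rule $\scrule{r_1}{r_2}{k}$ is built from exactly the same data. Unfolding its construction (cf.~\cite{FKST18}), its left-hand side, interface, and right-hand side are $R_1 +_{\Kcap} L_2$, the three-fold gluing of $L_1,L_2,V$ over $\Kcap$, and $R_2 +_{\Kcap} L_1$, and its span morphisms satisfy the same compatibility with $(v_1,r_1)$ and $(v_2,r_2)$ on the cocone legs. Since all these objects are characterized by pushouts and pullbacks, they are determined up to unique compatible isomorphism; chasing these component isomorphisms through the defining cocones and using uniqueness of mediators out of the pushouts yields an isomorphism of spans $\scrule{r_1}{r_2}{k} \cong r_1^{-1} *_{E,k} r_2$. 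Both rules have trivial application conditions (they are plain), so the rules coincide. As a byproduct — and as required for $r_1^{-1} *_{E,k} r_2$ to qualify as a GCR in the sense of Definition~\ref{def:generalized-concurrent-rule} — one obtains that its span consists of monomorphisms; alternatively, this is immediate since in an adhesive category the relevant pushout injections and mediating morphisms (effective unions) are monic.

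I expect the main obstacle to be purely organizational bookkeeping: making the pasting decompositions precise and verifying that the mediating morphisms produced by the two constructions (e.g.\ $\lprime$ from step~(3) of Construction~\ref{con:span-gen-con-rule-1} versus the left-hand morphism of the short-cut rule) agree on the cocone legs, so that uniqueness of the mediator out of a colimit glues the component isomorphisms into an isomorphism of the whole spans. A secondary subtlety is to treat the trivial pushout complements and the pullback defining $K$ cleanly up to isomorphism rather than on the nose, and to notice that the hypothesis on $\Eprime$ (that it contains all jointly epic $\M$-pairs) is precisely what licenses choosing $E = L_1 +_{\Kcap} L_2$ as an $E$-dependency relation.
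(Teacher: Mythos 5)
Your proposal is correct and follows essentially the same route as the paper: rewrite $r_1^{-1}$ and $r_2$ as spans with identity legs, take $E=\Lcup$ as the pushout of $(u_1,u_2)$ (an $E$-dependency relation since the pushout complements over the identity legs are trivial and the pair is jointly epic and in $\M$, hence in $\Eprime$; compatible with $k$ since pushouts along monos are pullbacks), note that $p=e_1\circ u_1=e_2\circ u_2$, and identify the resulting GCR diagram with the short-cut rule construction. You merely spell out in more detail what the paper dismisses as ``evident,'' including the monicity of $\lprime,\rprime$ via effective unions in adhesive categories.
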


\paragraph{Parallel and amalgamated rules} are further kinds of rules arising by composition. 
Whereas concurrent rules combine the sequential application of two rules, an amalgamated rule combines the application of two (or more) rules to the same object into the application of a single rule~\cite{BFH85,GHE14}.
In categories with coproducts, the parallel rule is just the sum of two rules; for plain rules (i.e., without application conditions) it is a special case of the concurrent as well as of the amalgamated rule construction. 
A thorough presentation of all three forms of rule composition in the context of $\M$-adhesive categories, rules with application conditions, and general matching can be found in~\cite{EGHLO14}. 
When introducing short-cut rules~\cite{FKST18}, we showed that their effect cannot be achieved by concurrent or amalgamated rules. 
Thus, by the above proposition, the same holds for GCRs; they indeed constitute a new form of rule composition. 
The relations between the different kinds of rule composition are summarized in Fig.~\ref{fig:relations-rule-constructions}. 
The lines from parallel rule are dashed as the indicated relations only hold in the absence of application conditions and, for short-cut rules, in the specific setting only in which these are defined.
\begin{figure}[t]
	\centering
	\includegraphics{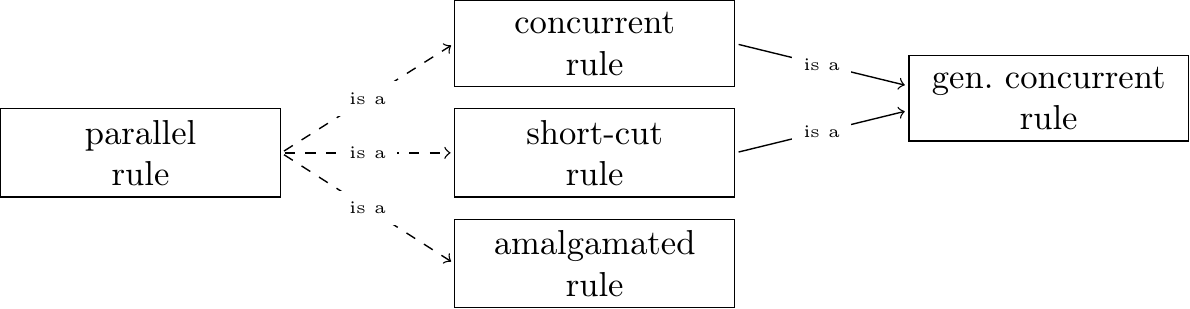}%
	\caption{Relations between the different kinds of rule composition.}%
	\label{fig:relations-rule-constructions}%
\end{figure}

\subsection{Characterizing Derivable Generalized Concurrent Rules}
\label{sec:characterization-derivable-rules}
Next, we characterize the GCRs derivable from a given pair of rules. 
We do so in two different ways, namely (i) characterizing possible choices for the morphisms $v_1,v_2$ in a common kernel and (ii) characterizing the possible choices for enhancement morphisms $\kprime$. 

\begin{proposition}[Embedding characterization of GCRs]\label{prop:embedding-characterization-monic}
	Let $(\CC,\M)$ be an $\M$-adhesive category with $\M$-effective unions. 
	\begin{enumerate}
		\item The application of Construction~\ref{con:span-gen-con-rule-1} results in a GCR $\genconrule$ if and only if $v_1,v_2 \in \M$. 
		\item The assumption that $\M$-effective unions exist is necessary for this result to hold. 
	\end{enumerate}
\end{proposition}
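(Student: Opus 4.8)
The plan is to prove the two halves of Proposition~\ref{prop:embedding-characterization-monic} separately. For part~(1), the key object to analyze is the span $L \xleftarrow{\lprime} \Kprime \xrightarrow{\rprime} R$ produced by Construction~\ref{con:span-gen-con-rule-1}, and we must show it consists of $\M$-morphisms precisely when $v_1,v_2 \in \M$. The ``only if'' direction should be the easier one: since $\lprime \circ \pprime = e_1^{\prime} \circ v_1$ with $\pprime: V \hookrightarrow \Kprime$ already in $\M$ (it is a pushout of the $\M$-morphism $k$ along $p$), if $\lprime \in \M$ then $e_1^{\prime} \circ v_1 \in \M$, and decomposition of $\M$-morphisms forces $v_1 \in \M$; symmetrically $\rprime \in \M$ yields $v_2 \in \M$. (One should double-check that $e_1^{\prime}$ is a known $\M$-morphism, which it is, being the pushout of $l_1 \in \M$ in square $(2a)$ of the concurrent rule construction.)

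For the ``if'' direction I would argue that $\Kprime$, as a pushout, can be realized as an $\M$-subobject of $L$ via a union construction, invoking $\M$-effective unions. Concretely: $K \hookrightarrow L$ via $l = \lprime \circ \kprime$ is in $\M$, and $V \hookrightarrow L$ via $e_1^{\prime} \circ v_1$ is in $\M$ whenever $v_1 \in \M$ (composition). These two $\M$-subobjects of $L$ share the common sub-subobject $\Kcap$ — indeed, compatibility of the common kernel with $E$ (square $(2)$ a pullback) together with Lemma~\ref{lem:existence-extension-morphism} should let us identify $\Kcap$ as (isomorphic to) the pullback of $K \hookrightarrow L \hookleftarrow V$, i.e.\ these subobjects ``meet'' in $\Kcap$. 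Since $\Kprime$ is by definition the pushout of $K \hookleftarrow \Kcap \hookrightarrow V$, and $\M$-effective unions say that the pushout of a pullback of $\M$-morphisms induces a mediating morphism into $L$ that again lies in $\M$, we conclude $\lprime: \Kprime \hookrightarrow L \in \M$. Running the symmetric argument on the $R$-side (using square $(1b)$ and $v_2$) gives $\rprime \in \M$. The main obstacle I anticipate is verifying rigorously that the pullback of $l\colon K\hookrightarrow L$ along $e_1^{\prime}\circ v_1\colon V\hookrightarrow L$ is exactly $\Kcap$ with the maps $p$ and $k$ — this requires chasing through the pushout squares defining $K$ (squares $(1a),(2a)$) and the compatibility pullback $(2)$, possibly combining several pushout/pullback composition-and-decomposition lemmas; it is the technical heart of the argument.

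For part~(2), the task is to exhibit an $\M$-adhesive category \emph{without} $\M$-effective unions in which the statement of part~(1) fails, i.e.\ where some non-$\M$ choice of $v_1$ or $v_2$ still yields a valid GCR, or where $v_i \in \M$ fails to yield one. The natural strategy is to take a standard counterexample witnessing the failure of $\M$-effective unions (such as a category of graphs with a restricted $\M$, or a presheaf/algebra category engineered so that a pushout of a pullback of monos has a non-mono mediating morphism) and to package that pushout-of-a-pullback diagram as the interface computation of a concrete GCR. One would pick $\rho_1,\rho_2$, $E$, and $k$ so that the square computing $\Kprime$ from $K \hookleftarrow \Kcap \hookrightarrow V$ is exactly the offending diagram; then $v_1,v_2$ can be chosen in $\M$ while $\lprime$ (or $\rprime$) fails to be in $\M$, so Construction~\ref{con:span-gen-con-rule-1} does not produce a rule even though the embedding condition holds. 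The delicate point here is bookkeeping: one must ensure all the \emph{other} required squares ($(1a),(1b),(2)$ pullbacks, the pushout complements in the $E$-dependency relation, the pushouts of the concurrent rule) are genuinely satisfied in the chosen category, so that the only thing going wrong is the $\M$-membership of $\lprime$ — otherwise the counterexample would fail for an irrelevant reason rather than isolating the necessity of $\M$-effective unions.
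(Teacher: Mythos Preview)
Your proposal is correct and follows essentially the same approach as the paper for both parts. For part~(1), the paper carries out exactly the pullback verification you flag as the technical heart, decomposing the square $K \xhookleftarrow{p} \Kcap \xhookrightarrow{k} V$ over $L$ into the pushout~$(2a)$, the common-kernel pullback~$(1a)$, and a trivial pullback along a monomorphism, and then invokes $\M$-effective unions. One refinement worth noting for part~(2): rather than selecting a specific concrete category as you suggest, the paper argues abstractly---in \emph{any} $\M$-adhesive category without $\M$-effective unions there is a witnessing pullback of $\M$-morphisms $(l_1,v_1)$ whose pushout mediating morphism is not in $\M$, and setting $\rho_1 \coloneqq (L_1 \xhookleftarrow{l_1} K_1 \xhookrightarrow{id} K_1)$, $\rho_2 \coloneqq (\Kcap \xhookleftarrow{id} \Kcap \xhookrightarrow{k} V)$, $E \coloneqq (id_{K_1},u_1)$ makes the GCR construction reproduce exactly that diagram; the abundance of identities renders the bookkeeping you worried about essentially trivial.
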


Next, we consider enhancement morphisms in more detail and answer the following question:
Given a concurrent rule $\rho_1 *_E \rho_2 = L \hookleftarrow K \hookrightarrow R$, which common $\M$-subobjects $\Kprime$ of $L$ and $R$ that enhance $K$ constitute the interface of a GCR? 
The following example shows that not all of them do. 

\begin{example}
In the category $\mathbf{Graph}$ (or $\mathbf{Set}$), if $p_1$ is the trivial rule $\emptyset \hookleftarrow \emptyset \hookrightarrow \emptyset$ and $p_2 = (\cnode \hookleftarrow \emptyset \hookrightarrow \cnode)$, it is straightforward to verify that $p_2$ can be derived as concurrent rule again (for the $E$-dependency object $E = \cnode$). 
However, $\cnode \hookleftarrow \cnode \hookrightarrow \cnode$ cannot be derived as GCR from these two rules since  $p_1$ does not delete a node.
\end{example}

It turns out that only elements that are deleted by the first rule and created by the second can be identified and incorporated into $\Kprime$.  
The next proposition clarifies this connection using the language of initial pushouts.

\begin{definition}[Appropriately enhancing]\label{def:appropriately-enhancing}
	In a category $\CC$ with initial push\-outs, let $\conrule = L \xhookleftarrow{l} K \xhookrightarrow{r} R$ be an $E$-concurrent rule and $\kprime: K \hookrightarrow \Kprime$ be an $\M$-morphism such that there exist $\M$-morphisms $\lprime: \Kprime \hookrightarrow L$ and $\rprime: \Kprime \hookrightarrow R$ with $\lprime \circ \kprime = l$ and $\rprime \circ \kprime = r$. 
	Then $\kprime$ is called \emph{appropriately enhancing} if the following holds (compare Fig.~\ref{fig:definition-appropriate-enhancement}): 
	The boundary and context objects $B_{\kprime}$ and $C_{\kprime}$ of the initial pushout over $\kprime$ factorize via $\M$-morphisms $s_L,s_R:B_{\kprime} \hookrightarrow B_{l_1},B_{r_2}$ and $t_L,t_R:C_{\kprime} \hookrightarrow C_{l_1},C_{r_2}$ as pullback through the initial pushouts over $l_1: K_1 \hookrightarrow L_1$ and $r_2: K_2 \hookrightarrow R_2$ in such a way that $k_1 \circ b_{\kprime} = e_1^{\prime\prime} \circ b_{l_1} \circ s_L$ and $k_2 \circ b_{\kprime} = e_2^{\prime\prime} \circ b_{r_2} \circ s_R$. 
	
	\begin{figure}%
		\centering
		\begin{tikzpicture}
			\matrix (m) [	matrix of math nodes,
										nodes in empty cells,
										nodes={anchor=center},
										row sep=.5em,
										column sep=.5em,
										minimum width=.5em]
			{
				C_1	&									&					&								&	K						&								& 				&									& C_2 \\
						&									&					&								&							&								&					&									& \\
				K_1	&									&	B_{l_1}	&								& B_{\kprime}	&								& B_{r_2}	&									& K_2 \\
						& (\mathit{IPO})	&					& (\mathit{PB})	&							& (\mathit{PB})	& 				& (\mathit{IPO})	& \\
				L_1	&									&	C_{l_1}	&								&	C_{\kprime}	&								& C_{r_2}	&									& R_2 \\};
			\path[-stealth]
				(m-3-1) edge [right hook->] node [left] {\scriptsize $l_1$} (m-5-1)
				(m-3-1) edge [left hook->] node [left] {\scriptsize $e_1^{\prime\prime}$} (m-1-1)
				(m-3-3) edge [right hook->] node [left,inner xsep=2pt] {\scriptsize $x_{l_1}$} (m-5-3)
								edge [left hook->] node [above,pos=.4] {\scriptsize $b_{l_1}$} (m-3-1)
				(m-3-5) edge [right hook->] node [left,inner xsep=2pt] {\scriptsize $x_{\kprime}$} (m-5-5)
								edge [left hook->] node [above] {\scriptsize $s_L$} (m-3-3)
								edge [right hook->] node [above] {\scriptsize $s_R$} (m-3-7)			
				(m-3-7) edge [right hook->] node [left,inner xsep=2pt] {\scriptsize $x_{r_2}$} (m-5-7)
								edge [right hook->] node [above,pos=.4] {\scriptsize $b_{r_2}$} (m-3-9)
				(m-3-9) edge [right hook->] node [right] {\scriptsize $r_2$} (m-5-9)
				(m-3-9)	edge [left hook->] node [right] {\scriptsize $e_2^{\prime\prime}$} (m-1-9)
				(m-5-3) edge [left hook->] node [above] {\scriptsize $c_{l_1}$} (m-5-1)
				(m-5-5) edge [left hook->] node [above] {\scriptsize $t_L$} (m-5-3)
								edge [right hook->] node [above] {\scriptsize $t_R$} (m-5-7)
				(m-5-7) edge [right hook->] node [above] {\scriptsize $c_{r_2}$} (m-5-9)
				(m-3-5) edge [left hook->] node [left,inner xsep=2pt] {\scriptsize $b_{\kprime}$} (m-1-5)
				(m-1-5) edge [left hook->] node [above] {\scriptsize $k_1$} (m-1-1)
								edge [right hook->] node [above] {\scriptsize $k_2$} (m-1-9);
		\end{tikzpicture}
		\caption{Definition of \emph{appropriate enhancement}.}
		\label{fig:definition-appropriate-enhancement}	
	\end{figure}
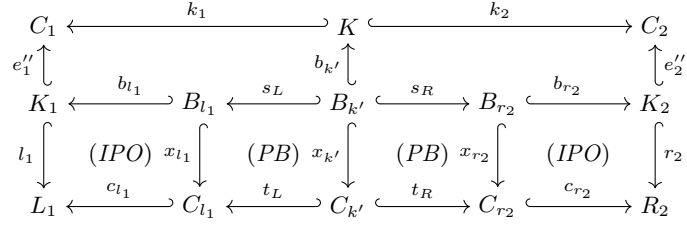
\end{definition}

On the level of graph elements (in fact: arbitrary categories of presheaves over $\mathbf{Set}$), this means the following: 
When considering $\Kprime$ as a subobject of $L$ via $\lprime$, the elements of $\Kprime \setminus K$ have to be mapped to elements of $L_1 \setminus K_1$.  
When considering $\Kprime$ as a subobject of $R$ via $\rprime$, the elements of $\Kprime \setminus K$ have to be mapped to elements of $R_2 \setminus K_2$. 

\begin{example}
	Figure~\ref{fig:example-appropriately-enhancing} illustrates the notion of appropriate enhancement using our running example. 
	The two inner squares constitute pullbacks, which means that the additional elements of $\Kprime$, namely \textsf{Class 2,6} and its incoming reference, are mapped to elements deleted by \emph{removeMiddleMan} via $t_L$ and to elements created by \emph{extractSubclass} via $t_R$. 
	Moreover, for both $C_1$ and $C_2$ the two possible ways to map \textsf{Class 1,5} from $B_{\kprime}$ to it coincide. 
	
	\begin{figure}
		\centering
		\includegraphics[width=\linewidth]{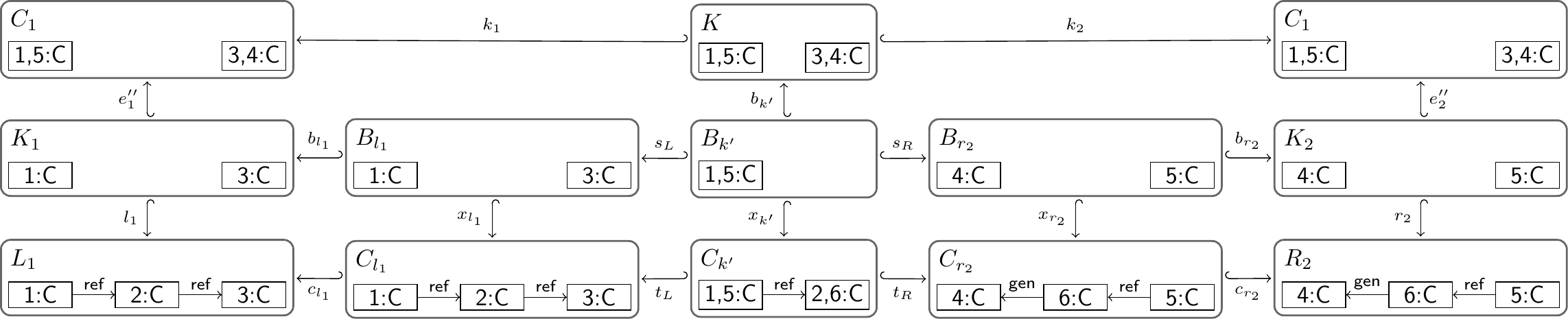}
		\caption{Illustrating the property of appropriate enhancement.}
		\label{fig:example-appropriately-enhancing}
	\end{figure}
\end{example}

\begin{proposition}[Enhancement characterization of GCRs]\label{prop:enhancement-characterization-monic}
	Let $(\CC,\M)$ be an $\M$-adhesive category with initial pushouts. 
	Given an $E$-concurrent rule $\conrule = L \xhookleftarrow{l} K \xhookrightarrow{r} R$ and an $\M$-morphism $\kprime: K \hookrightarrow \Kprime$ such that there exist $\M$-morphisms $\lprime: \Kprime \hookrightarrow L$ and $\rprime: \Kprime \hookrightarrow R$ with $\lprime \circ \kprime = l$ and $\rprime \circ \kprime = r$, the span $L \xhookleftarrow{\lprime} \Kprime \xhookrightarrow{\rprime} R$ is derivable as a GCR $\genconrule$ if and only if $\kprime$ is appropriately enhancing. 
\end{proposition}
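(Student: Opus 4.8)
I would prove the two implications separately, but base both on the same structural observation. If $k\colon\Kcap\hookrightarrow V$ is a common kernel compatible with $E$ and $\genconrule$ the associated GCR, then by Construction~\ref{con:span-gen-con-rule-1} its interface is $\Kprime = K +_{\Kcap} V$, the pushout of $k$ along the $\M$-morphism $p\colon\Kcap\hookrightarrow K$ from Lemma~\ref{lem:existence-extension-morphism}, with $\kprime$ the induced coprojection $K\to\Kprime$. This square is a pushout over $\kprime$ along the $\M$-morphism $p$, so the initial pushout over $\kprime$ factors through it: there are $\M$-morphisms $B_{\kprime}\hookrightarrow\Kcap$ and $C_{\kprime}\hookrightarrow V$ with $b_{\kprime}$ and $c_{\kprime}$ factoring through $p$ and $\pprime$, and — by the remark following Definition~\ref{def:initial-pushout} — the attendant ``difference'' square is itself a pushout, whence $V\cong\Kcap +_{B_{\kprime}} C_{\kprime}$ and $\Kprime\cong K +_{B_{\kprime}} C_{\kprime}$. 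In short: the boundary and context data of $\kprime$ live inside the common kernel, and the common kernel is obtained from them by glueing $\Kcap$ back on along $B_{\kprime}$.

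For the direction ``derivable as a GCR $\Rightarrow$ $\kprime$ appropriately enhancing'' I would feed the morphisms $B_{\kprime}\hookrightarrow\Kcap\xhookrightarrow{u_i}K_i$ and $C_{\kprime}\hookrightarrow V\xhookrightarrow{v_i}L_1,R_2$ into the initial pushouts over $l_1$ and $r_2$, obtaining $s_L,s_R$ and $t_L,t_R$. That the induced squares are pullbacks follows by pasting the pullbacks $(1a),(1b)$ of the common kernel with the pullback squares underlying the initial pushouts over $l_1,r_2$ (in an $\M$-adhesive category, pushouts along $\M$-morphisms are pullbacks), plus one van Kampen step splitting the ``difference'' pushout $V\cong\Kcap +_{B_{\kprime}}C_{\kprime}$ along these pullbacks. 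The equations $k_1\circ b_{\kprime} = e_1^{\prime\prime}\circ b_{l_1}\circ s_L$ and $k_2\circ b_{\kprime} = e_2^{\prime\prime}\circ b_{r_2}\circ s_R$ of Definition~\ref{def:appropriately-enhancing} are then forced by the characterising equations $k_i\circ p = e_i^{\prime\prime}\circ u_i$ of $p$ together with compatibility of $k$ with $E$, i.e.\ square $(2)$ of Fig.~\ref{fig:compatibility} being a pullback. Hence $\kprime$ is appropriately enhancing.

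Conversely, given appropriately enhancing $\kprime$ with witnesses $s_L,s_R,t_L,t_R$, I would reconstruct a common kernel: take $\Kcap$ to be the pullback of $(e_1\circ r_1,e_2\circ l_2)$ — so square $(2)$ of Fig.~\ref{fig:compatibility} holds by construction, and by Proposition~\ref{prop:con-rule-as-gen-con-rule} this is the only admissible boundary object anyway; set $u_1 := b_{l_1}\circ s_L$, $u_2 := b_{r_2}\circ s_R$, note that the appropriate-enhancement equations together with $K = C_1\times_E C_2$ give $e_1\circ r_1\circ u_1 = e_2\circ l_2\circ u_2$, so $(u_1,u_2)$ induces $b^*\colon B_{\kprime}\hookrightarrow\Kcap$; and put $V := \Kcap +_{B_{\kprime}}C_{\kprime}$, with $v_1,v_2$ induced on this pushout from $l_1,r_2$ and $c_{l_1}\circ t_L$, $c_{r_2}\circ t_R$. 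Squares $(1a),(1b)$ are checked by the same pullback pasting as before. Running Construction~\ref{con:span-gen-con-rule-1} with this common kernel, the uniqueness clause of Lemma~\ref{lem:existence-extension-morphism} forces $p\circ b^* = b_{\kprime}$ (both satisfy $k_i\circ(-) = e_i^{\prime\prime}\circ u_i$ and $k_1,k_2$ are jointly monic), so the interface produced, $K +_{\Kcap} V = K +_{B_{\kprime}}C_{\kprime}$, is the given $\Kprime$ with $\kprime$ as enhancement morphism, and the produced left and right morphisms agree with $\lprime,\rprime$, these being the unique morphisms factoring $\lprime\circ\pprime$ and $\rprime\circ\pprime$ through the monomorphisms $e_1^{\prime},e_2^{\prime}$.

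The step I expect to be the main obstacle is, within this converse, showing that the reconstructed $V$ genuinely carries $\M$-morphisms $v_1\colon V\hookrightarrow L_1$ and $v_2\colon V\hookrightarrow R_2$, which is what makes $k$ a common kernel in the sense of Definition~\ref{def:common-kernel_compatibility} and lets Construction~\ref{con:span-gen-con-rule-1} apply. Since we assume only initial pushouts and not $\M$-effective unions, monicity of these induced morphisms cannot be read off from $(1a),(1b)$ being pullbacks; instead I would use the hypothesis that $\M$-morphisms $\lprime\colon\Kprime\hookrightarrow L$, $\rprime\colon\Kprime\hookrightarrow R$ with $\lprime\circ\kprime = l$, $\rprime\circ\kprime = r$ are already given, and prove that $\lprime$ (resp.\ $\rprime$) composed with the $\M$-coprojection $\pprime\colon V\to\Kprime$ factors through $e_1^{\prime}$ (resp.\ $e_2^{\prime}$) — morally: the elements of $\Kprime\setminus K$ are deleted by $\rho_1$ and created by $\rho_2$, so $C_{\kprime}$ lands in the images of $L_1$ and $R_2$ — after which $v_i\in\M$ by decomposition of $\M$. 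Making this factorization precise means reconciling the freely chosen witnesses $t_L,t_R$ with the given $\lprime,\rprime$, which is exactly where the interplay of the three initial pushouts (over $l_1$, $r_2$, and $\kprime$) with the pushout defining the GCR is most intricate; the surrounding van Kampen and pullback-pasting bookkeeping in both directions is routine but needs care.
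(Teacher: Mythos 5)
Your overall route is the paper's: exploit that $\Kprime$ is the pushout of $k$ along $p$, so that the initial pushout over $\kprime$ factors through it and yields $V\cong \Kcap +_{B_{\kprime}} C_{\kprime}$; and, conversely, rebuild a common kernel by taking $\Kcap$ as the pullback of $(e_1\circ r_1,e_2\circ l_2)$, inducing $b^{*}\colon B_{\kprime}\hookrightarrow\Kcap$ from the boundary equations, and setting $V:=\Kcap +_{B_{\kprime}} C_{\kprime}$. Your backward half matches the paper's proof step for step (one small repair: squares $(1a)$, $(1b)$ are not obtained by \emph{pullback pasting} alone --- the outer and left squares being pullbacks does not make the right one a pullback; the paper invokes $\M$ pullback--pushout decomposition, Lemma~\ref{lem:M-pb-po-decomp}, using that the left square is a pushout and $l_1,r_2\in\M$).

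The genuine gap is in the forward direction, at the very step you compress into ``feed the morphisms into the initial pushouts over $l_1$ and $r_2$.'' Initiality of the pushout over $l_1$ factors \emph{it} through other pushouts over $l_1$ with monic boundary, i.e.\ it produces morphisms \emph{out of} $B_{l_1}$ and $C_{l_1}$, never into them; and the composite square you have over $l_1$ (pasting the difference square with $(1a)$) is only a pullback, not a pushout, so no universal property delivers $s_L,t_L$. Producing these comparison morphisms is exactly the content of the paper's Lemma~\ref{lem:interaction-ipos-pbs}: build a cube over the initial pushout of $l_1$ (a pushout along $b_{l_1}\in\M$) with pullback side faces, use the weak van Kampen property to see that the top face is a pushout, and then use \emph{initiality} of the square $B_{\kprime}\to\Kcap$, $C_{\kprime}\to V$ --- which by Fact~\ref{fact:closure-ipo} is the initial pushout over $k$, not merely a pushout as the remark after Definition~\ref{def:initial-pushout} gives --- to identify the two pullback objects with $B_{\kprime}$ and $C_{\kprime}$ by cancelling monomorphisms. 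Your single ``van Kampen step'' gestures at the cube but omits this identification, without which $s_L,s_R,t_L,t_R$ simply do not exist in your argument. On the other hand, the difficulty you flag at the end is real and to your credit: the paper's own backward proof leaves implicit that the induced $v_1,v_2$ are $\M$-morphisms and that the produced left- and right-hand morphisms agree with the given $\lprime,\rprime$ (the witnesses $t_L,t_R$ are not a priori tied to them); your plan of factoring $\lprime\circ\pprime$ and $\rprime\circ\pprime$ through $e_1^{\prime},e_2^{\prime}$ and concluding by decomposition of $\M$ is a sensible way to close this, provided $v_1,v_2$ (equivalently, the witnesses) are chosen compatibly with $\lprime,\rprime$ rather than arbitrarily.
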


In $\mathbf{Graph}$, the result above characterizes a GCR as a rule whose interface $\Kprime$ enhances the interface $K$ of the enhanced concurrent rule by identifying elements of $L_1 \setminus K_1$ and $R_2 \setminus K_2$ with each other and including them in $\Kprime$. 

\begin{corollary}[Enhancement characterization in the category $\mathbf{Graph}$]\label{cor:derivable-gcrs-graph}
	In the category of (typed/attributed) graphs, given an $E$-dependency relation $E$ for two rules, every GCR $\genconrule$ enhancing $\conrule$ is obtained in the following way:
	$\Kprime$ arises by adding new graph elements to $K$ and $\lprime$ and $\rprime$ extend the morphisms $l$ and $r$ in such a way that they (i) remain injective graph morphisms and (ii) under $\lprime$ the image of every newly added element is in $L_1 \setminus K_1$ and in $R_2 \setminus K_2$ under $\rprime$.
	
	Assuming finite graphs only, the number of GCRs enhancing a concurrent rule $\conrule$ may grow factorially in $\min(|L_1 \setminus K_1|,|R_2 \setminus K_2|)$. 
\end{corollary}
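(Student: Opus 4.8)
The plan is to derive the corollary directly from Proposition~\ref{prop:enhancement-characterization-monic} (the enhancement characterization of GCRs) specialized to $\mathbf{Graph}$, together with the concrete description of initial pushouts in $\mathbf{Graph}$ recalled after Definition~\ref{def:initial-pushout}. First I would fix an $E$-concurrent rule $\conrule = L \hookleftarrow K \hookrightarrow R$ and an injective $\kprime: K \hookrightarrow \Kprime$ together with injective $\lprime, \rprime$ satisfying $\lprime \circ \kprime = l$, $\rprime \circ \kprime = r$. By Proposition~\ref{prop:enhancement-characterization-monic}, the span $L \xhookleftarrow{\lprime} \Kprime \xhookrightarrow{\rprime} R$ is a GCR if and only if $\kprime$ is appropriately enhancing, so it suffices to translate the (rather technical) condition of Definition~\ref{def:appropriately-enhancing} into the stated elementary condition (i)--(ii) in $\mathbf{Graph}$.

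For that translation I would argue as follows. Since every monomorphism in $\mathbf{Graph}$ is (up to iso) a subgraph inclusion, ``$\Kprime$ arises by adding new graph elements to $K$'' is just a restatement of $\kprime$ being a mono, and (i) is exactly the requirement that $\lprime, \rprime$ be injective graph morphisms, so these parts are free. For (ii) I would invoke the explicit form of the initial pushout over a graph morphism: for $\kprime: K \hookrightarrow \Kprime$ the context object $C_{\kprime}$ is the minimal subgraph of $\Kprime$ completing $\Kprime \setminus \kprime(K)$, and $B_{\kprime}$ consists of the boundary nodes. The appropriate-enhancement condition says $C_{\kprime}$ factors through $C_{l_1}$ (the context of $l_1: K_1 \hookrightarrow L_1$) via a pullback, and similarly through $C_{r_2}$; in $\mathbf{Graph}$ ``factors via a pullback through the initial pushout'' amounts precisely to saying that the image of $\Kprime \setminus \kprime(K)$ under $\lprime$ lands inside $L_1 \setminus K_1$ (identifying $L$ with the appropriate pushout object, so that $L_1 \setminus K_1 \subseteq L$ via $e_1^{\prime}$) and under $\rprime$ inside $R_2 \setminus K_2$. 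The boundary-node compatibility equations $k_1 \circ b_{\kprime} = e_1^{\prime\prime} \circ b_{l_1} \circ s_L$ and $k_2 \circ b_{\kprime} = e_2^{\prime\prime} \circ b_{r_2} \circ s_R$ become automatic once the inclusions of new elements are fixed, because in $\mathbf{Graph}$ a boundary node of a newly added edge is already a node of $K$, and its two images into $K_1$ and $K_2$ (through $b_{l_1}$ and $b_{r_2}$) are forced to agree with its image under $\kprime$ followed by $k_1$ resp.\ $k_2$; I would spell this out by chasing the outer commuting diagram of Fig.~\ref{fig:definition-appropriate-enhancement}. This gives the first part of the corollary.

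For the second (counting) part I would argue combinatorially. Given the concurrent rule, by part one every GCR enhancing it is determined by a choice of a finite set $S$ of new elements together with injections into $L_1 \setminus K_1$ and into $R_2 \setminus K_2$ that are compatible with the graph structure (source/target of edges). Ignoring the structural compatibility for a lower-bound count, in the worst case $L_1 \setminus K_1$ and $R_2 \setminus K_2$ consist of the same number $d = |L_1 \setminus K_1| = |R_2 \setminus K_2|$ of isolated nodes; then choosing $S$ of size $d$ together with a bijection $S \to L_1 \setminus K_1$ and a bijection $S \to R_2 \setminus K_2$, i.e.\ effectively a bijection $L_1 \setminus K_1 \to R_2 \setminus K_2$, gives $d!$ pairwise non-isomorphic GCRs; taking all subsets of matched elements only increases the count. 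Hence the number of GCRs enhancing $\conrule$ may grow factorially in $\min(|L_1 \setminus K_1|, |R_2 \setminus K_2|)$.

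The main obstacle I anticipate is the careful bookkeeping in the first part: making precise that ``factorization of the initial pushouts as a pullback'' in the abstract statement of Definition~\ref{def:appropriately-enhancing} really does reduce, in $\mathbf{Graph}$, to the set-level statement ``new elements of $\Kprime$ map into $L_1 \setminus K_1$ resp.\ $R_2 \setminus K_2$'', and in particular verifying that the boundary-node equations carry no further content. This requires unwinding how $L$ and $R$ sit as pushouts in the concurrent rule construction (Definition~\ref{def:concurrent-rule}) so that $L_1 \setminus K_1$ and $R_2 \setminus K_2$ can be regarded as subsets of $L$ and $R$, and then checking that the universal morphisms $s_L, s_R, t_L, t_R$ exist exactly under condition (ii) and are automatically pullback inclusions by the characterization of initial pushouts in $\mathbf{Graph}$; the rest is routine.
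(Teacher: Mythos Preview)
Your approach is essentially the same as the paper's: both derive part one from Proposition~\ref{prop:enhancement-characterization-monic} together with the concrete set-theoretic description of initial pushouts in $\mathbf{Graph}$, and establish the growth bound by considering discrete graphs (isolated nodes only). The paper's proof is much terser---it simply asserts the first part follows from the set-theoretic characterization and, for the second, writes down the explicit count $\sum_{i=0}^{\min(|L_1\setminus K_1|,|R_2\setminus K_2|)} i!\binom{|L_1\setminus K_1|}{i}\binom{|R_2\setminus K_2|}{i}$ for edge-free graphs---but the substance matches yours.
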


\section{A Generalized Concurrency Theorem}
\label{sec:generalized-concurrency-theorem}
In this section, we present our Generalized Concurrency Theorem that clarifies how sequential applications of two rules relate to an application of a GCR derived from them. 
As a prerequisite, we present a proposition that relates applications of concurrent rules with those of enhancing GCRs. 
It states that an application of a GCR leads to the same result as one of the enhanced concurrent rule; however, by its application, more elements are preserved (instead of being deleted and recreated). 

\begin{proposition}[Preservation property of GCRs]\label{prop:preservation-property}
	Let $G_0 \Rightarrow_{\conrule,m} G_2$ be a transformation via concurrent rule $\conrule$, given by the span $G_0 \xhookleftarrow{g_0} D \xhookrightarrow{g_2} G_2$. For any GCR $\genconrule$ enhancing $\conrule$, there is a transformation $G_0 \Rightarrow_{\genconrule,m} G_2$, given by a span $G_0 \xhookleftarrow{g_0^{\prime}} D^{\prime} \xhookrightarrow{g_2^{\prime}} G_2$, and a unique $\M$-morphism $\kprimeprime: D \hookrightarrow D^{\prime}$ such that $g_i^{\prime} \circ \kprimeprime = g_i$ for $i= 0,2$. 
	Moreover, $\kprimeprime$ is an isomorphism if and only if the enhancement morphism $\kprime$ is one. 
\end{proposition}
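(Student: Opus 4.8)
The plan is to transport the given double-pushout derivation of $G_0 \Rightarrow_{\conrule,m} G_2$ along the enhancement morphism $\kprime: K \hookrightarrow \Kprime$. We are given the two pushouts defining the transformation, with spans $L \xhookleftarrow{l} K \xhookrightarrow{r} R$ on top and $G_0 \xhookleftarrow{g_0} D \xhookrightarrow{g_2} G_2$ on the bottom, together with the comatch $K \hookrightarrow D$ (call it $d$) making both squares commute. Since $\genconrule$ has the same left- and right-hand sides $L$ and $R$ as $\conrule$ and the same application condition, and the match $m$ is unchanged, it suffices to produce a valid interface object $D^{\prime}$ together with pushouts $(L \xhookleftarrow{\lprime} \Kprime \to D^{\prime} \to G_0)$ and $(R \xhookleftarrow{\rprime} \Kprime \to D^{\prime} \to G_2)$, because $m \vDash \mathit{ac}$ already holds by hypothesis.

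First I would define $D^{\prime}$ as the pushout of $\Kprime \xhookleftarrow{\kprime} K \xhookrightarrow{d} D$; this is exactly the analogue on the derivation side of the pushout defining $\Kprime$ from $\Kcap$ and $K$ on the rule side, so $D^{\prime}$ is the interface $D$ "enhanced" by the same extra elements that $\Kprime$ adds to $K$. Let $\kprimeprime: D \hookrightarrow D^{\prime}$ be the pushout injection; it lies in $\M$ because $\kprime \in \M$ and $\M$-morphisms are stable under pushout in an $\M$-adhesive category. The morphisms $g_0^{\prime}: D^{\prime} \to G_0$ and $g_2^{\prime}: D^{\prime} \to G_2$ are then induced by the universal property of this pushout: on the $D$-component they are $g_0$ and $g_2$, and on the $\Kprime$-component they must be $m \circ \lprime$ and $n \circ \rprime$ (where $n: R \hookrightarrow G_2$ is the comatch). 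The two cocone conditions one has to check are $g_0 \circ d = m \circ \lprime \circ \kprime$ and $g_2 \circ d = n \circ \rprime \circ \kprime$, which reduce via $\lprime \circ \kprime = l$ and $\rprime \circ \kprime = r$ to the commutativity of the original two pushout squares. This gives $g_i^{\prime} \circ \kprimeprime = g_i$ for $i = 0,2$, and uniqueness of $\kprimeprime$ with this property is the uniqueness clause of the pushout.

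The core step is then to show that the two new squares $(L \xhookleftarrow{\lprime} \Kprime, \Kprime \to D^{\prime}, D^{\prime} \to G_0, L \xhookrightarrow{m} G_0)$ and the analogous one over $G_2$ are pushouts. Here I would use a pushout-composition/decomposition argument: the square $K \xhookrightarrow{\kprime} \Kprime$, $\Kprime \to D^{\prime}$, $K \to D$, $D \xhookrightarrow{\kprimeprime} D^{\prime}$ is a pushout by construction, and the original square $K \to D$, $D \to G_0$, $K \xhookrightarrow{l} L$, $L \xhookrightarrow{m} G_0$ is a pushout by hypothesis; pasting these two pushouts along the shared edge $K \to D$ yields that the outer rectangle $\Kprime \xhookrightarrow{\lprime} L$, $L \xhookrightarrow{m} G_0$, $\Kprime \to D^{\prime}$, $D^{\prime} \to G_0$ is a pushout — provided I first check that the pasted rectangle's top edge really is $\lprime$, which is again the relation $\lprime \circ \kprime = l$ read correctly, and that the relevant square decomposes as claimed. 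Then the fact that $\lprime \in \M$ (given) makes this an $\M$-pushout, so $D^{\prime} \to G_0 \in \M$ as required for a derivation; symmetrically over $G_2$. I expect the main obstacle to be bookkeeping: making sure the pushout being pasted is the right one (orienting the square $K \to D \to D^{\prime}$ versus $K \to \Kprime \to D^{\prime}$ correctly so that the composite top morphism is $\lprime$ and not some other map), and confirming that no van-Kampen-type hypothesis is secretly needed — it is not, since plain pushout pasting suffices here.

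Finally, for the iff: if $\kprime$ is an isomorphism then so is its pushout $\kprimeprime$ (pushouts preserve isomorphisms). Conversely, if $\kprimeprime$ is an isomorphism, I would argue that in the pushout defining $D^{\prime}$ the morphism $d: K \hookrightarrow D$ can be chosen as (or shown to be) a monomorphism with the pushout of $\kprime$ along it being iso, and then invoke that in an $\M$-adhesive category a pushout of $\kprime \in \M$ along a monomorphism is an isomorphism only if $\kprime$ already is — more concretely, pulling back the pushout square defining $D^{\prime}$ along $\kprimeprime$ (using that pushouts along $\M$-morphisms are also pullbacks, or using effective unions / the van Kampen property) recovers $\kprime$ as a retract-like factor of $\kprimeprime$, forcing $\kprime$ to be iso. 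This last direction is the one where I would be most careful to cite exactly which $\M$-adhesive property is in play; the cleanest route is probably to note that the comatch $d: K \hookrightarrow D$ is in $\M$ (it arises as a pushout of $l \in \M$... actually as the morphism making the left square a pushout, hence in $\M$ by $\M$-stability), so the square $K \to \Kprime$, $K \to D$, $D \to D^{\prime}$, $\Kprime \to D^{\prime}$ is a pushout along the $\M$-morphism $d$ and therefore also a pullback; then $\kprimeprime$ iso forces $\kprime$ iso directly by the pullback property.
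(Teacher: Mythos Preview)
Your approach is essentially the paper's: form $D'$ as the pushout of $\kprime$ and $d$, induce $g_0',g_2'$ from the universal property, and obtain the two new pushout squares by pushout \emph{decomposition}; for the iff, use that the defining pushout of $D'$ is along an $\M$-morphism and hence also a pullback, so $\kprimeprime$ iso forces $\kprime$ iso.

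Two small cleanups. First, your ``pasting'' description is not literally correct: you cannot compose two pushouts that merely share the edge $K \to D$ to obtain a square on $K',L,D',G_0$. What actually happens is that the original left pushout (top edge $l = \lprime \circ \kprime$, bottom edge $g_0 = g_0' \circ \kprimeprime$) is the \emph{horizontal composite} of the defining pushout of $D'$ with the square you want; since the first square is a pushout and the composite is a pushout, pushout decomposition gives that the second square is a pushout. You say ``decomposes'' yourself, so you have the right idea, but drop the pasting language. Second, for the converse of the iff you do not need to argue that $d \in \M$: the defining square is a pushout along $\kprime \in \M$, which in an $\M$-adhesive category is already a pullback, and pulling back the isomorphism $\kprimeprime$ yields that $\kprime$ is an isomorphism.
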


\begin{example}
	Every match for the concurrent rule \emph{Ref2Gen\_CR} is also one for the GCR \emph{Ref2Gen\_GCR}. 
	Moreover, the two results of the according applications will be isomorphic. 
	The above proposition, however, formally captures that the application of \emph{Ref2Gen\_CR} will delete more elements than the one of \emph{Ref2Gen\_GCR}. 
	The graph intermediately arising during the application of the former (not containing the class to which \textsf{Class 2} had been matched) properly embeds into the one arising during the application of the latter. 
\end{example}

The classical Concurrency Theorem states that a sequence of two rule applications can be 
replaced by an application of a concurrent rule (synthesis) and, vice versa (analysis). 
The synthesis is still possible in the case of GCRs. 
The analysis, however,  holds under certain conditions only.  
Next, we illustrate how the analysis might fail.  
Subsequently, we state the Generalized Concurrency Theorem.

\begin{example}\label{ex:counterexample-analysis-3}
	When \emph{Ref2Gen\_GCR} is applied at a match that maps \textsf{Class 2,6} to a node with incoming or outgoing references or generalizations beyond the two references required by the match, the dangling edge condition prevents the applicability of the underlying first rule \emph{removeMiddleMan} at the induced match. 
	The deletion of \textsf{Class 2} would not be possible because of these additional adjacent edges. 
	Hence, the analysis of the application of \emph{Ref2Gen\_GCR} into sequential applications of \emph{removeMiddleMan} and \emph{extractSubclass} fails in that situation. 
\end{example}

\begin{theorem}[Generalized Concurrency Theorem]\label{thm:generalized-concurrency-theorem} 
	Let $(\CC,\M)$ be an $\M$-adhesive category with $\M$-effective unions. 
	
\noindent \textbf{Synthesis.} For each $E$-related transformation sequence $G_0 \Rightarrow_{\rho_1,m_1} G_1 \Rightarrow_{\rho_2,m_2} G_2$ there exists a direct transformation $G_0 \Rightarrow_{\genconrule,m} G_2$. 

\noindent \textbf{Analysis.} Given a direct transformation $G_0 \Rightarrow_{\genconrule,m} G_2$, there exists an $E$-related transformation sequence $G_0 \Rightarrow_{\rho_1,m_1} G_1 \Rightarrow_{\rho_2,m_2} G_2$ with $m_1 = m \circ e_1^{\prime}$ if and only if $G_0 \Rightarrow_{\rho_1,m_1} G_1$ exists, i.e., if and only if $\rho_1$ is applicable at $m_1$.
\end{theorem}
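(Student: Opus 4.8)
The overall strategy is to relate everything to the classical Concurrency Theorem \cite[Theorem~5.23]{EEPT06} for the enhanced concurrent rule $\conrule$, bridging between $\conrule$ and $\genconrule$ by Proposition~\ref{prop:preservation-property}; note that $\conrule$ and $\genconrule$ share their left-hand side $L$, their right-hand side $R$, and their application condition $\mathit{ac}$, so $m \vDash \mathit{ac}$ has the same meaning for both. For the \textbf{synthesis}, I apply the classical Concurrency Theorem to the given $E$-related sequence $G_0 \Rightarrow_{\rho_1,m_1} G_1 \Rightarrow_{\rho_2,m_2} G_2$ to obtain a direct transformation $G_0 \Rightarrow_{\conrule,m} G_2$ at a suitable match $m: L \hookrightarrow G_0$ (in particular with $m \vDash \mathit{ac}$), and then feed this transformation into Proposition~\ref{prop:preservation-property} to obtain the desired direct transformation $G_0 \Rightarrow_{\genconrule,m} G_2$.

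For the \textbf{analysis}, the ``only if'' direction is immediate: every $E$-related sequence contains the direct transformation $G_0 \Rightarrow_{\rho_1,m_1} G_1$, so $\rho_1$ is applicable at $m_1$. For ``if'', assume $\rho_1$ is applicable at $m_1 = m \circ e_1^{\prime}$. My plan is to show that under this assumption the direct transformation $G_0 \Rightarrow_{\conrule,m} G_2$ via the enhanced concurrent rule already exists; the classical Concurrency Theorem then analyses it into an $E$-related sequence $G_0 \Rightarrow_{\rho_1} G_1 \Rightarrow_{\rho_2,m_2} G_2$, and by the standard construction the match it produces for $\rho_1$ is exactly $m \circ e_1^{\prime} = m_1$, which finishes the proof. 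Since $m \vDash \mathit{ac}$ is already known, the transformation $G_0 \Rightarrow_{\conrule,m} G_2$ exists as soon as the pushout complement of $l: K \hookrightarrow L$ along $m$ exists (the right-hand pushout then always does, and pasting it with the right-hand pushout of the given GCR transformation identifies its result with $G_2$, using $r = \rprime \circ \kprime$). Writing $l = \lprime \circ \kprime$ and using that the span $G_0 \xhookleftarrow{g_0^{\prime}} D^{\prime} \xhookrightarrow{g_2^{\prime}} G_2$ of the given GCR transformation makes $D^{\prime}$ a pushout complement of $\lprime$ along $m$, pushout pasting reduces the task to constructing a pushout complement of the enhancement morphism $\kprime: K \hookrightarrow \Kprime$ along the interface morphism $\Kprime \hookrightarrow D^{\prime}$; and since $\kprime$ is the pushout of $p$ along the common kernel $k$ (step~(2) of Construction~\ref{con:span-gen-con-rule-1}), this in turn amounts to ``deleting from $D^{\prime}$ the elements of $V$ outside $\Kcap$'', i.e., to a pushout complement of $k$ along $V \xhookrightarrow{\pprime} \Kprime \hookrightarrow D^{\prime}$.

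To carry this out, I use the applicability of $\rho_1$ at $m_1$, which yields a pushout complement $D_1$ of $l_1: K_1 \hookrightarrow L_1$ along $m_1$, together with the compatibility of the common kernel: square $(1a)$ of Fig.~\ref{fig:compatibility} is a pullback, which places the ``new'' elements of $\Kprime$ (equivalently, the elements of $V$ outside $\Kcap$, transported along $v_1$) inside $L_1 \setminus K_1$ and hence makes the deletion taking $D^{\prime}$ to the sought pushout complement a ``sub-deletion'' of the one taking $G_0$ to $D_1$. Concretely, I form the intended intermediate object as the pullback $D_1 \times_{G_0} D^{\prime}$ -- the meet of $D_1$ and $D^{\prime}$ in the lattice of $\M$-subobjects of $G_0$, which is a genuine $\M$-subobject precisely because $(\CC,\M)$ has $\M$-effective unions -- and then verify, by a van~Kampen-style argument, that the two squares built over $l$ and $r$ from this object are pushouts producing $G_0$ and $G_2$, so that it is a valid intermediate object for $G_0 \Rightarrow_{\conrule,m} G_2$.

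I expect this last verification -- that $D_1 \times_{G_0} D^{\prime}$ is simultaneously a pushout complement of $l$ along $m$ and fits into a pushout over $r$ with result $G_2$, in the abstract $\M$-adhesive setting -- to be the main obstacle, since it is the point where the interaction of the $\M$-subobject lattice of $G_0$ with the two given transformations must be exploited; all remaining steps are bookkeeping with pushout pasting, the universal properties recorded in Construction~\ref{con:span-gen-con-rule-1} and Lemma~\ref{lem:existence-extension-morphism}, and the two cited theorems. It is also worth noting that the asymmetry of the ``if'' condition (it constrains only the first rule) is explained by the construction: the GCR preserves exactly what the analysis deletes with $\rho_1$ and recreates with $\rho_2$, so once $\rho_1$'s gluing condition is met the second step and the $E$-relatedness are forced by the given GCR transformation.
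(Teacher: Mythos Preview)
Your high-level strategy coincides with the paper's: synthesis is obtained verbatim by composing the classical Concurrency Theorem with Proposition~\ref{prop:preservation-property}, and for the analysis you correctly reduce everything to showing that the concurrent rule $\conrule$ is applicable at $m$ (i.e., that a pushout complement of $l$ along $m$ exists), after which the classical analysis takes over. The ``only if'' direction is trivial, as you note.

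The divergence is in how that pushout complement $D$ is actually built. You factor $l = \lprime \circ \kprime$, use $D'$ as the complement for $\lprime$, and then propose $D \coloneqq D_1 \times_{G_0} D'$ as the remaining complement, to be verified by an unspecified van~Kampen argument. The paper instead factors $l = l_1' \circ k_1$: it first recognizes $C_1$ (from the concurrent-rule construction) as the pushout complement of $l_1'$ along $m$ by decomposing the $\rho_1$-pushout through $L$, then builds auxiliary objects $C_1' \coloneqq C_1 +_{K} \Kprime$ and $C_1'' \coloneqq C_1 +_{K_1} D_1$, uses $\M$-effective unions to get the induced morphism $\llprime: C_1' \hookrightarrow L$ in $\M$, decomposes the GCR pushout through $C_1''$, and finally obtains $D$ as a pullback in the resulting cube, where the van~Kampen property yields the missing pushout face. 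Your candidate $D_1 \times_{G_0} D'$ is the right object set-theoretically (and, by uniqueness of pushout complements, will agree with the paper's $D$), but turning this into a proof in the abstract $\M$-adhesive setting is precisely the work you flag as ``the main obstacle''; the paper's more elaborate route through $C_1, C_1', C_1''$ is exactly how that obstacle is overcome, and it does not seem to admit a shortcut via a single pullback-plus-van-Kampen step.

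One small correction: that $D_1 \times_{G_0} D'$ is an $\M$-subobject of $G_0$ follows from $\M$ being closed under pullbacks, not from $\M$-effective unions. In the paper's argument, $\M$-effective unions is invoked at a different point, namely to ensure that the mediating morphism $\llprime: C_1' \to L$ out of the pushout $C_1'$ lies in $\M$.
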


\begin{remark}
	At least for the case of $\M$-matching and in the presence of $\M$-effective unions, the Concurrency Theorem indeed becomes a corollary to our Generalized Concurrency Theorem. 
	This is due to the observation that a GCR is a concurrent rule if and only if $k$ is an isomorphism (Proposition~\ref{prop:con-rule-as-gen-con-rule}). 
	Similarly, the Generalized Concurrency Theorem subsumes the Short-cut Theorem~\cite[Theorem~7]{FKST18}. 
	
	In the case of graphs, the Generalized Concurrency Theorem ensures that the situation illustrated in Example~\ref{ex:counterexample-analysis-3} is the only situation in which the analysis of the application of a GCR fails: 
	When restricting to injective matching, a violation of the dangling edge condition is known to be the only possible obstacle to an application of a graph transformation rule~\cite[Fact~3.11]{EEPT06}. 
\end{remark}

\section{Related Work}
\label{sec:related-work}
In this paper, we present a construction for generalized concurrent rules (GCRs) based on the double-pushout approach to rewriting.  We compare it with existing constructions of concurrent rules that use some categorical setting. 

Concerning double-pushout rewriting, after presenting Concurrency Theorems in specific categories (such as \cite{ER80} for the case of graphs), such a theorem in a rather general categorical setting was obtained by Ehrig et al.~\cite{EHKP91}. 
In that work, spans $R_1 \leftarrow D \rightarrow L_2$ are used to encode the information about rule dependency. 
After (variants of) adhesive categories had been established as an axiomatic basis for double-pushout rewriting~\cite{LS05}, the construction of concurrent rules and an according Concurrency Theorem was lifted to that setting: 
directly in~\cite{LS05} with the dependency information still encoded as span and by Ehrig et al.~\cite{EEPT06} with the dependency information now encoded as a co-span $R_1 \rightarrow E \leftarrow L_2$.
Finally, the last construction, addressing plain rules, has been extended to the case of rules with general application conditions~\cite{EHL10}. 
It is this construction that we present in Definition~\ref{def:concurrent-rule}.  
A generalization of this construction to enable the reuse of graph elements (or object parts in general) as we present it in this paper is new. 

Sequential composition of rules has also been presented for other categorical approaches to rewriting, for example, for \emph{single-pushout rewriting}~\cite{Loewe93}, \emph{sesqui-pushout rewriting}~\cite{Loewe15,Behr19}, or for \emph{double-pushout rewriting in context} (at least for special cases)~\cite{Loewe19}. 
The theory is most mature in the sesqui-pushout case, where Behr~\cite{Behr19} has established a construction of concurrent rules in the setting of $\M$-adhesive categories for rules with general application conditions.  
It seems that our construction of GCRs would be similarly applicable to sesqui-pushout rewriting; however, applied as is, it would have a restricted expressivity in that context: 
In the category $\mathbf{Graph}$, for instance, a rule application according to the sesqui-pushout semantics implies to (implicitly) delete all edges that are incident to a deleted node. 
Our construction would have to be extended for being able to identify such implicitly deleted items such that they can be preserved. 

In the \emph{cospan DPO approach}, rules are cospans (instead of spans) and the order of computation is switched compared to classical DPO rewriting, i.e., creation precedes deletion of elements~\cite{EHP09}. 
This approach has been used, for example, to simultaneously rewrite a model and its meta-model~\cite{MTLW15} or to formalize the rewriting of graphs that are typed over a chain of type graphs~\cite{WMR20} (here actually as cospan sesqui-pushout rewriting).    
As creation precedes deletion, the cospan DPO approach intrinsically offers support for certain kinds of information preservation. 
For example, attribute values of nodes that are to be deleted can be passed to newly created nodes first. 
However, in the category of graphs, cospan DPO rewriting is subject to virtually the same dangling edge condition as classical DPO rewriting (see~\cite{EHP09}). 
This means, employing the cospan DPO approach instead of classical DPO rewriting does not address the problem of the applicability of rules. 
Moreover, modeling the kind of information preservation we are interested in (regarding elements deleted by one rule as recreated by a second) would require a specific form of rule composition also in the cospan approach. 
We do not expect this to be essentially simpler than the construction we provide for the classical DPO approach in this paper. 

Behr and Sobociński~\cite{BS20} proved that, in the case of $\M$-matching, the concurrent rule construction is associative (in a certain technical sense; the same holds true for the sesqui-pushout case~\cite{Behr19}). 
Based on that result, they presented the construction of a rule algebra that captures interesting properties of a given grammar. 
This has served, for example, as a starting point for a static analysis of stochastic rewrite systems~\cite{BDG20}. 
Considering our GCR construction, it is future work to determine whether it is associative as well.  

Kehrer et al.~\cite{KTRK16} addressed the automated generation of edit rules for models based on  a given meta-model. 
Besides basic rules that create or delete model elements, they also generated \emph{move} and \emph{change rules}. 
It turns out that these can be built as GCRs from their basic rules but not as mere concurrent rules. 
Additionally, we introduced short-cut rules for more effective model synchronization~\cite{FKST19,FKST20,FKMST20} and showed here that short-cut rules are special GCRs. Hence, these works suggest that GCRs can capture typical properties of model edit operators. 
A systematic study of which edit operators can be captured as GCRs (and which GCRs capture typical model edit operators) remains future work.

\section{Conclusion}
\label{sec:conclusion}
In this paper, we present \emph{generalized concurrent rules} (GCRs) as a construction that generalizes the constructions of concurrent and short-cut rules. 
We develop our theory of GCRs in the setting of double-pushout rewriting in $\M$-adhesive categories using rules with application conditions applied at $\M$-matches only. 
In contrast to concurrent rules, GCRs allow reusing elements that are deleted in the first rule and created in the second. 
As a central result (Theorem~\ref{thm:generalized-concurrency-theorem}), we generalize the classical Concurrency Theorem. 

From a theoretical point of view, it would be interesting to develop similar kinds of rule composition in the context of other categorical approaches to rewriting like the single- or the sesqui-pushout approach. 
Considering practical application scenarios, we are most interested in classifying the derivable GCRs of a given pair of rules according to their use and computing those that are relevant for certain applications in an efficient way. 

\paragraph{Acknowledgments} This work was partially funded by the German Research Foundation (DFG), project TA294/17-1.

\bibliographystyle{splncs04}
\bibliography{literature}

\LongShort{%
	\appendix
	
	\section{Additional Preliminaries}
	\label{sec:additional-preliminaries}
	In this section, we collect additional technical preliminaries and concepts needed for our proofs. 
First, in Appendix~\ref{sec:nested-conditions}, we recall \emph{nested conditions} and the \emph{Shift} and \emph{Left} constructions that allow to \enquote{move} conditions~\cite{HP09}. 
Based on this, we extend the definition of concurrent rules with the computation of their application condition. 
In Appendix~\ref{sec:short-cut-rules}, we recall \emph{short-cut rules}~\cite{FKST18}. 
Finally, in Appendix~\ref{properties-adhesive-categories}, we collect some formal results which we built upon in our proofs. 
This mainly encompasses properties of $\M$-adhesive categories. 

\subsection{Nested Conditions}
\label{sec:nested-conditions}
In this section, we recall the definition of \emph{constraints} and \emph{conditions} and of constructions that allow to \enquote{move} conditions along morphisms or rules in such a way that their semantics is preserved. 
Finally, we recall the computation of the application condition of a concurrent rule; this computation assumes those constructions. 

Given an $\M$-adhesive category, \emph{nested constraints} express properties of its objects whereas \emph{nested conditions} express properties of morphisms~\cite{HP09}. 
Conditions are mainly used to restrict the applicability of rules. 
Constraints and conditions are defined recursively as trees of morphisms. 
For the definition of constraints, we assume the existence of an initial object $\emptyset$ in the category $\mathcal{C}$. 

\begin{definition}[(Nested) conditions and constraints]
	Let $\mathcal{C}$ be an $\mathcal{M}$-ad\-he\-sive category with 
	initial object $\emptyset$. 
	Given an object $P$, a \emph{(nested) condition} over $P$ is defined recursively as follows:
	\texttt{true} is a condition over $P$. If $a: P \rightarrow C$ is a 
	morphism and $d$ is a condition over $C$, $\exists \, (a: P \rightarrow C, d)$ is a condition over $P$ again. 
	Moreover, Boolean combinations of conditions over $P$ are conditions over $P$. 
	A \emph{(nested) constraint} is a condition over the initial object $\emptyset$. 
	
	\emph{Satisfaction} of a nested condition $c$ over $P$ for a 
	morphism $g: P \rightarrow G$, denoted as $g \models c$, is defined as follows: 
	Every 
	morphism satisfies \texttt{true}. 
	The morphism $g$ satisfies a condition of the form $c = \exists \, (a: P \rightarrow C, d)$ if there exists an $\mathcal{M}$-morphism $q: C \hookrightarrow G$ such that $g = q \circ a$ and $q \models d$. 
	For Boolean operators, satisfaction is defined as usual. 
	An object $G$ satisfies a constraint $c$, denoted as $G \models c$, if the empty morphism to $G$ does so. 
	A condition $c_1$ over $P$ \emph{implies} a condition $c_2$ over $P$, denoted as $c_1 \Rightarrow c_2$, if for every 
	morphism $g: P \rightarrow G$ with $g \models c_1$ also $g \models c_2$. 
	Two conditions are equivalent, denoted as $c_1 \equiv c_2$, when $c_1 \Rightarrow c_2$ and $c_2 \Rightarrow c_1$.
	Implication and equivalence for constraints is inherited from the respective definition for conditions.
\end{definition}
In the case of graphs, the graph constraints resulting from the above definition are expressively equivalent to a first-order logic on graphs~\cite{HP09}. 

In the following, we recall the \emph{Shift} and \emph{Left} constructions that allow to \enquote{move} constraints along morphisms and rules, respectively. 

\begin{construction}[Shift along morphism]\label{con:shift-morphism}
	Given a condition $c$ over an object $P$ and a morphism $b: P \to P'$, the \emph{shift of $c$ along $b$}, denoted as $\Sh(b,c)$, is inductively defined as follows:
	\begin{itemize}
		\item If $c = \texttt{true}$, 
		\begin{equation*}
			\Sh(b,c) \coloneqq \texttt{true} \enspace .
		\end{equation*}
		\item If $c = \exists \, (a: P \rightarrow C, d)$, 
		\begin{equation*}
			\Sh(b,c) \coloneqq \bigvee_{(a',b') \in \mathcal{F}} \exists \, (a', \Sh(b',d)) \enspace ,
		\end{equation*}
		where $\mathcal{F}$ contains morphism pairs from $\mathcal{E}\rq{}$ with one morphism from $\mathcal{M}$ that complement $a$ and $b$ to a commutative diagram, i.e., 
		\begin{equation*}
			\mathcal{F} \coloneqq \{(a\rq{},b\rq{}) \in \mathcal{E}\rq{} \, | \, b\rq{} \in \mathcal{M} \text{ and } b\rq{} \circ a = a\rq{} \circ b \} \enspace .
		\end{equation*}
		Note that the empty disjunction is equivalent to \texttt{false}.
		
		\begin{figure}
			\centering
			\begin{tikzpicture}
				\matrix (m) [	matrix of math nodes,
											row sep=2em,
											column sep=2em,
											minimum width=2em,
											nodes in empty cells]
				{
						&	P				& \\
					C	&					& P\rq{} \\
						& C\rq{}	& \\};
				\node(d)[left of=m-2-1,node distance=4ex,isosceles triangle,draw=black,fill=lightgray,inner sep=2.5pt,label={[font=\scriptsize,label distance=1pt]180:$d$}]{};
				\node(e)[left of=m-3-2,node distance=4ex,isosceles triangle,draw=black,fill=lightgray,inner sep=2.5pt,label={[font=\scriptsize,label distance=1pt]180:$\Sh(b\rq{},d)$}]{};
				\path[-stealth]
					(m-1-2) edge [->] node [above] {\scriptsize $a$} (m-2-1)
									edge [->] node [above] {\scriptsize $b$} (m-2-3)
					(m-2-1) edge [right hook->] node [below] {\scriptsize $b\rq{}$} (m-3-2)
					(m-2-3) edge [->] node [below] {\scriptsize $a\rq{}$} (m-3-2);
				\end{tikzpicture}
				\caption{Graphical representation of the definition of $\Sh(b, \exists \, (a: P \rightarrow C, d))$.}
				\label{fig:illustration-shift}
			\end{figure}
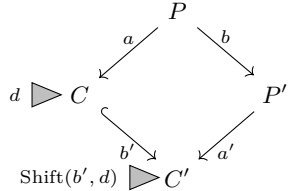
		\item If $c = \neg d$, 
		\begin{equation*}
			\Sh(b,c) \coloneqq \neg \Sh(b,d) \enspace .
		\end{equation*}
		\item If $c = \wedge_{i \in I} c_i$, 
		\begin{equation*}
			\Sh(b,c) \coloneqq \bigwedge_{i \in I} \Sh(b,c_i) \enspace .
		\end{equation*}
	\end{itemize}
\end{construction}

The proof of the correctness of the Shift-construction presupposes an $\mathcal{E}^{\prime}$-$\M$ pair factorization of pairs of morphisms with the same codomain. 
\begin{fact}[{Correctness of $\Sh$~\cite[Lemma~3.11]{EGHLO14}}]\label{fact:correctness-shift-morphism}
	In an $\mathcal{M}$-adhesive category $\mathcal{C}$ with $\mathcal{E}\rq{}$-$\mathcal{M}$ pair factorization, given a nested condition $c$ over an object $P$ and a morphism $b: P \to P\rq{}$, for each morphism $g\rq{}: P\rq{} \to G$
	\begin{equation*}
		g\rq{} \models \Sh(b,c) \Leftrightarrow g \coloneqq g\rq{} \circ b \models c \enspace .
	\end{equation*}
\end{fact}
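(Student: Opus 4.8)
The plan is to argue by structural induction on the nested condition $c$, mirroring the recursive clauses that define $\Sh(b,c)$. The anchor is the case $c = \texttt{true}$: here $\Sh(b,c) = \texttt{true}$ and every morphism satisfies $\texttt{true}$, so both sides of the claimed equivalence hold vacuously. The Boolean cases $c = \neg d$ and $c = \bigwedge_{i \in I} c_i$ (and hence disjunctions, obtained via De Morgan) are equally immediate: $\Sh$ commutes with each connective by definition, satisfaction of a Boolean combination is that same Boolean combination of the satisfactions of its components, and the induction hypothesis applies directly to those components. Consequently all the content sits in the existential case $c = \exists\,(a: P \to C, d)$, which I treat by proving the two implications separately.

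For the direction ``$g \coloneqq g^{\prime} \circ b \models c \Rightarrow g^{\prime} \models \Sh(b,c)$'', assume an $\M$-morphism $q: C \hookrightarrow G$ with $g = q \circ a$ and $q \models d$. I apply the $\mathcal{E}^{\prime}$-$\M$ pair factorization to the pair $(g^{\prime}, q)$, which shares the codomain $G$: this yields an object $C^{\prime}$, a pair $(a^{\prime}: P^{\prime} \to C^{\prime}, b^{\prime}: C \to C^{\prime}) \in \mathcal{E}^{\prime}$, and an $\M$-morphism $m: C^{\prime} \hookrightarrow G$ with $g^{\prime} = m \circ a^{\prime}$ and $q = m \circ b^{\prime}$. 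Decomposition closure of $\M$ gives $b^{\prime} \in \M$ (from $q, m \in \M$), and cancelling the monomorphism $m$ in $m \circ b^{\prime} \circ a = q \circ a = g = g^{\prime} \circ b = m \circ a^{\prime} \circ b$ yields $b^{\prime} \circ a = a^{\prime} \circ b$; hence $(a^{\prime}, b^{\prime}) \in \mathcal{F}$. The induction hypothesis, applied to the condition $d$, the morphism $b^{\prime}: C \to C^{\prime}$, and the morphism $m$, turns $m \circ b^{\prime} = q \models d$ into $m \models \Sh(b^{\prime}, d)$. Together with $g^{\prime} = m \circ a^{\prime}$ and $m \in \M$, this exhibits $g^{\prime} \models \exists\,(a^{\prime}, \Sh(b^{\prime}, d))$, which is one disjunct of $\Sh(b,c)$, so $g^{\prime} \models \Sh(b,c)$.

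For the converse, suppose $g^{\prime} \models \Sh(b,c)$, i.e.\ $g^{\prime} \models \exists\,(a^{\prime}, \Sh(b^{\prime}, d))$ for some $(a^{\prime}, b^{\prime}) \in \mathcal{F}$, witnessed by an $\M$-morphism $q^{\prime}: C^{\prime} \hookrightarrow G$ with $g^{\prime} = q^{\prime} \circ a^{\prime}$ and $q^{\prime} \models \Sh(b^{\prime}, d)$. Put $q \coloneqq q^{\prime} \circ b^{\prime}$; since $b^{\prime} \in \M$ by definition of $\mathcal{F}$ and $\M$ is closed under composition, $q \in \M$. Using $b^{\prime} \circ a = a^{\prime} \circ b$ one computes $q \circ a = q^{\prime} \circ b^{\prime} \circ a = q^{\prime} \circ a^{\prime} \circ b = g^{\prime} \circ b = g$, and the induction hypothesis (for $d$, $b^{\prime}$, and $q^{\prime}$) turns $q^{\prime} \models \Sh(b^{\prime}, d)$ into $q = q^{\prime} \circ b^{\prime} \models d$. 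Hence $q$ witnesses $g \models \exists\,(a, d) = c$.

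I expect the one genuinely delicate point to be, in the first implication, the need to produce an index $(a^{\prime}, b^{\prime}) \in \mathcal{F}$ out of the given data — this is exactly where the $\mathcal{E}^{\prime}$-$\M$ pair factorization is unavoidable (and why the statement is restricted to categories possessing it) — together with the invocation of decomposition closure of $\M$ to certify that the leg $b^{\prime}$ is an $\M$-morphism, as membership in $\mathcal{F}$ demands. The only remaining subtlety is pure bookkeeping: one must factor the pair in the order $(g^{\prime}, q)$ rather than $(q, g^{\prime})$ so that the legs of the resulting $\mathcal{E}^{\prime}$-pair are oriented as $(a^{\prime}: P^{\prime} \to C^{\prime},\, b^{\prime}: C \to C^{\prime})$, matching the definition of $\mathcal{F}$. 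Everything else is cancellation of monomorphisms and a routine application of the induction hypothesis.
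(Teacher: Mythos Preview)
Your proof is correct and is the standard structural-induction argument for this result. Note, however, that the paper itself does not prove this statement: it is stated as a \emph{Fact} and merely cited from~\cite[Lemma~3.11]{EGHLO14}, so there is no in-paper proof to compare against. Your argument is essentially the one found in that reference (and in the earlier~\cite{HP09}), including the key use of the $\mathcal{E}^{\prime}$-$\mathcal{M}$ pair factorization to manufacture a disjunct of $\mathcal{F}$ and the decomposition property of $\mathcal{M}$ to certify $b^{\prime} \in \mathcal{M}$.
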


Similarly, conditions can be \enquote{moved} along rules.
\begin{construction}[Shift over rule]\label{con:shift-rule}
	Given a condition $c$ over an object $R$ and a plain rule $p = (L \xhookleftarrow{l} K \xhookrightarrow{r} R)$, the \emph{shift of $c$ over $p$}, denoted as $\Le(p,c)$, is inductively defined as follows:
	\begin{itemize}
		\item If $c = \texttt{true}$, 
		\begin{equation*}
			\Le(p,c) \coloneqq \texttt{true} \enspace .
		\end{equation*}
		\item If $c = \exists \, (a: R \rightarrow R^*, d)$, consider the following diagram: 
		\begin{center}
			\begin{tikzpicture}
				\matrix (m) [	matrix of math nodes,
											row sep=1em,
											column sep=1em,
											minimum width=1em,
											nodes in empty cells]
				{
					L		&			&	K		&			& R \\
							& (2)	&			&	(1)	&	\\
					L^* &			& K^*	&			& R^* \\};
				\node(sc)[below of=m-3-1,node distance=4ex,isosceles triangle,shape border rotate=90,draw=black,fill=lightgray,inner sep=2.5pt,label={[font=\scriptsize,label distance=1pt]270:$\Le(p^*,d)$}]{$$};
				\node(c)[below of=m-3-5,node distance=4ex,isosceles triangle,shape border rotate=90,draw=black,fill=lightgray,inner sep=2.5pt,label={[font=\scriptsize,label distance=1pt]270:$d$}]{$$};
				\path[-stealth]
					(m-1-3) edge [left hook->] node [above] {\scriptsize $l$} (m-1-1)
									edge [right hook->] node [above] {\scriptsize $r$} (m-1-5)
									edge [->] (m-3-3)
					(m-1-1) edge [->] node [left] {\scriptsize $a^*$} (m-3-1)
					(m-1-5) edge [->] node [right] {\scriptsize $a$} (m-3-5)
					(m-3-3) edge [left hook->] node [below] {\scriptsize $l^*$} (m-3-1)
									edge [right hook->] node [below] {\scriptsize $r^*$} (m-3-5);
			\end{tikzpicture}
		\end{center}
		
		If $a \circ r$ has a pushout complement $(1)$ and 
		\begin{equation*}
			p^* = (L^* \xhookleftarrow{l^*} K^* \xhookrightarrow{r^*} R^*)
		\end{equation*}
		is the rule derived by constructing the pushout $(2)$, i.e., by applying the inverse rule $p^{-1}$ at match $a$ to $R^*$, then 
		\begin{equation*}
			\Le(p,c) \coloneqq \exists \, (a^*: L \to L^*, \Le(p^*,d)) \enspace.
		\end{equation*}
		Otherwise, i.e., if the pushout complement $(1)$ does not exist, 
		\begin{equation*}
			\Le(p,c) \coloneqq \texttt{false} \enspace.
		\end{equation*}
		
		\item If $c = \neg d$, 
		\begin{equation*}
			\Le(p,c) \coloneqq \neg \Le(p,d) \enspace .
		\end{equation*}
		\item If $c = \wedge_{i \in I} c_i$, 
		\begin{equation*}
			\Le(p,c) \coloneqq \bigwedge_{i \in I} \Le(p,c_i) \enspace .
		\end{equation*}
	\end{itemize}
\end{construction}

\begin{fact}[{Correctness of $\Le$~\cite[Lemma~3.14]{EGHLO14}}]\label{fact:correctness-shift-rule}
	In an $\mathcal{M}$-adhesive category $\mathcal{C}$, 
	given a nested condition $c$ over an object $R$ and a plain rule $p = (L \xhookleftarrow{l} K \xhookrightarrow{r} R)$, for each direct transformation $G \Rightarrow_{p,m,m^*} H$, where $m^*$ denotes the co-match of that transformation, 
	\begin{equation*}
		m \models \Le(p,c) \Leftrightarrow m^* \models c \enspace .
	\end{equation*}
\end{fact}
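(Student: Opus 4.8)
The plan is to prove the correctness of $\Le$ by structural induction on the nested condition $c$, following the recursive definition in Construction~\ref{con:shift-rule}. The base case $c = \texttt{true}$ is immediate, since every morphism satisfies \texttt{true} and $\Le(p,\texttt{true}) = \texttt{true}$. The Boolean cases $c = \neg d$ and $c = \bigwedge_{i \in I} c_i$ follow directly from the inductive hypothesis, because $\Le(p,-)$ and the satisfaction relation both commute with negation and conjunction; hence the biconditional is inherited componentwise.

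The heart of the argument is the existential case $c = \exists\,(a\colon R \to R^*, d)$, for a fixed direct transformation $G \Rightarrow_{p,m,m^*} H$ given by pushouts $(A)$ over $L \xhookleftarrow{l} K$ (producing $m$, the context object $D$, and $g\colon D \hookrightarrow G$) and $(B)$ over $K \xhookrightarrow{r} R$ (producing $D$, the co-match $m^*$, and $h\colon D \hookrightarrow H$). I would distinguish whether the composable pair $K \xhookrightarrow{r} R \xrightarrow{a} R^*$ admits a pushout complement $(1)$. If it does not, then $\Le(p,c) = \texttt{false}$, so the left-hand side of the equivalence is never satisfied, and I must show the right-hand side fails too, i.e., that no $\M$-morphism $q\colon R^* \hookrightarrow H$ with $q \circ a = m^*$ exists; assuming such a $q$, I would recover the missing pushout complement of $(r,a)$ from $q$ together with pushout $(B)$, using that pushouts along $\M$-morphisms in $\M$-adhesive categories are also pullbacks and that pushout complements along $\M$-morphisms are unique when they exist --- a contradiction. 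If $(1)$ does exist, then $\Le(p,c) = \exists\,(a^*\colon L \to L^*, \Le(p^*,d))$ with $p^* = (L^* \xhookleftarrow{l^*} K^* \xhookrightarrow{r^*} R^*)$ the rule derived by applying $p^{-1}$ at $a$. The crux is then to establish a bijective correspondence between $\M$-morphisms $q^*\colon L^* \hookrightarrow G$ satisfying $q^* \circ a^* = m$ and direct transformations $G \Rightarrow_{p^*,q^*} H$ whose co-match $q\colon R^* \hookrightarrow H$ satisfies $q \circ a = m^*$. Given such a $q^*$, I construct the object $D^*$ as the pushout of $K^* \xhookleftarrow{} K \xhookrightarrow{d} D$ and check, by repeated use of pushout composition and decomposition in $\M$-adhesive categories, that $(A)$ refines into a pushout with apex $G$ over $L^* \xhookleftarrow{l^*} K^*$ and that $(B)$ refines into a pushout with apex $H$ over $K^* \xhookrightarrow{r^*} R^*$, so that $G \Rightarrow_{p^*,q^*} H$ is a transformation with co-match $q$; conversely, the existence of $G \Rightarrow_{p,m} H$ guarantees that $p^*$ is applicable at the induced match $q^*$. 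With this correspondence in hand I apply the inductive hypothesis to $p^*$ and $d$, obtaining $q^* \models \Le(p^*,d) \Leftrightarrow q \models d$, and chaining the equivalences yields $m \models \exists\,(a^*,\Le(p^*,d)) \Leftrightarrow m^* \models \exists\,(a,d)$.

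The main obstacle is precisely this pushout bookkeeping in the existential case: showing that the single double pushout realizing $G \Rightarrow_{p,m} H$ decomposes coherently through the derived rule $p^*$, so that matches of $p^*$ into $G$ extending $m$ correspond exactly to the contexts $R^* \hookrightarrow H$ extending the co-match $m^*$. This relies on the pushout composition and decomposition lemmas available in $\M$-adhesive categories (together with the facts that pushouts along $\M$-morphisms are pullbacks and that pushout complements along $\M$-morphisms are unique), and it is also the only place where the side condition concerning the existence of $(1)$ enters, which is why that degenerate sub-case has to be treated separately. All remaining cases are routine once the existential case is settled.
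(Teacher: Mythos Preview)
The paper does not prove this statement at all: it is stated as a \emph{Fact} with a citation to \cite[Lemma~3.14]{EGHLO14} and is simply imported as a known result, so there is no ``paper's own proof'' to compare against. Your outline follows the standard structural-induction argument that is used in the cited source (and originally in Habel--Pennemann), and the overall strategy is correct.

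One point deserves sharpening. In the sub-case where the pushout complement $(1)$ of $a \circ r$ does \emph{not} exist, you write that from an $\M$-morphism $q\colon R^* \hookrightarrow H$ with $q \circ a = m^*$ you ``recover the missing pushout complement of $(r,a)$ from $q$ together with pushout $(B)$''. This is true, but the mechanism is not merely that pushouts along $\M$-morphisms are pullbacks and that pushout complements are unique: you need to pull back $q$ along $h\colon D \hookrightarrow H$ to obtain $K^*$, observe that the induced square $(K^*,R^*,D,H)$ is a pullback with $h \in \M$, and then apply the $\M$ pushout--pullback decomposition property (Fact~\ref{fact:properties-adhesive-categories}(3)) to the outer pushout $(K,R,D,H)$, oriented so that this pullback is the ``right'' square. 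That decomposition yields the left square $(K,R,K^*,R^*)$ as a pushout, hence the pushout complement you claimed could not exist. Making this invocation of $\M$ pushout--pullback decomposition explicit closes the only real gap in the sketch; the remaining bookkeeping in the existential case and the routine Boolean cases are fine.
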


In the presence of application conditions, the definition of a concurrent rule is extended in the following way (for the according diagram, we refer to the main text):
\begin{definition}[$E$-concurrent rule (with application condition)]\label{def:concurrent-rule-with-ac}
	Given two rules $\rho_i = (L_i \xhookleftarrow[]{l_i} K_i \xhookrightarrow[]{r_i} R_i, \mathit{ac}_i)$, where $i = 1,2$, an object $E$ with morphisms $e_1: R_1 \to E$ and $e_2:L_2 \to E$ is an \emph{$E$-dependency relation} for $\rho_1$ and $\rho_2$ if $(e_1,e_2) \in \mathcal{E}^{\prime}$ and the pushout complements $(1a)$ and $(1b)$ for $e_1 \circ r_1$ and $e_2 \circ l_2$ exist.
	
	Given an $E$-dependency relation $E = (e_1,e_2) \in \mathcal{E}^{\prime}$ for rules $\rho_1,\rho_2$, their \emph{$E$-concurrent rule} 
	\begin{equation*}
		\conrule \coloneqq (p, \mathit{ac})
	\end{equation*}
	is defined as
	\begin{equation*}
		p \coloneqq (L \xhookleftarrow[]{l} K \xhookrightarrow[]{r} R)
	\end{equation*}
	where $l \coloneqq l_1^{\prime} \circ k_1$, $r \coloneqq r_2^{\prime} \circ k_2$, $(1a)$, $(1b)$, $(2a)$, and $(2b)$ are pushouts, $(3)$ is a pullback, and 
	\begin{equation*}
		\mathit{ac} \coloneqq \Sh(e_1^{\prime}, \mathit{ac_1}) \wedge \Le(p^{\prime}, \Sh(e_2, \mathit{ac}_2))
	\end{equation*}
	with $p^{\prime} \coloneqq (L \xhookleftarrow{l_1^{\prime}} C_1 \xhookrightarrow{r_1^{\prime}} E)$.
\end{definition}

\subsection{Short-Cut Rules}
\label{sec:short-cut-rules}
We introduced \emph{short-cut rules} as a special kind of sequential rule composition to construct (complex) edit rules from monotonic rules defining a grammar~\cite{FKST18}, and applied them to improve triple graph grammar-based model synchronization processes~\cite{FKST19,FKST20,FKMST20}. 
A \emph{short-cut rule} composes a rule which only deletes (i.e., the inverse rule of a monotonic rule) with a monotonic rule into a single rule whose application has the same effect. 
The knack of the construction is that it allows to identify elements deleted by the first rule as recreated by the second. 
This results in these elements being preserved when the short-cut rule is applied.
We defined the construction of short-cut rules in the context of adhesive categories but restricted to plain monotonic rules and monotonic matching. 
Their construction is based on a \emph{common kernel} for the two given rules. 
We shortly recall the definition of common kernels for monotonic rules in the case of matches restricted to be monomorphisms (which is a special case of Definition~\ref{def:common-kernel_compatibility}). 
There, we understand a common kernel to embed the morphism $k: \Kcap \hookrightarrow V$ into the morphisms $l_1$ and $r_2$. 
However, in the construction of the short-cut rule, the inverse rule $r_1^{-1}$ of the first input rule $r_1: L_1 \hookrightarrow R_1$ is considered (which means that the RHS of the original monotonic rule is its LHS); this is reflected in the following definition. 
In all of the following, we adapt the original notation from~\cite{FKST18} to fit with this paper. 

\begin{definition}[Common kernel for monotonic rules]
	Given two plain, monotonic rules $r_i: L_i \hookrightarrow R_i$, where $i = 1,2$, a \emph{common kernel} for them is a monomorphism $k: \Kcap \hookrightarrow V$ with monomorphisms $u_i: \Kcap \hookrightarrow L_i$ and $v_i: V \hookrightarrow R_i$ such that both induced squares, depicted below, constitute pullback squares. 
	
	\begin{center}
		\begin{tikzpicture}
			\matrix (m) [	matrix of math nodes,
										nodes in empty cells,
										row sep=1em,
										column sep=1em,
										minimum width=1em]
			{
				L_1	&	&	\Kcap		&	& L_2	\\
						&	&					&	&	\\
				R_1	&	&	V	&	& R_2	\\};
			\path[-stealth]
				(m-1-1) edge [right hook->] node [left] {\scriptsize $r_1$} (m-3-1)
				(m-1-3) edge [right hook->] node [left] {\scriptsize $k$} (m-3-3)
								edge [right hook->] node [above] {\scriptsize $u_2$} (m-1-5)
								edge [left hook->] node [above] {\scriptsize $u_1$} (m-1-1)
				(m-3-3) edge [right hook->] node [below] {\scriptsize $v_2$} (m-3-5)
								edge [left hook->] node [below] {\scriptsize $v_1$} (m-3-1)
				(m-1-5) edge [right hook->] node [right] {\scriptsize $r_2$} (m-3-5);
		\end{tikzpicture}
	\end{center}
\end{definition}

Given a common kernel $k: \Kcap \hookrightarrow V$ for monotonic rules $r_1$ and $r_2$, their short-cut rule $\scrule{r_1}{r_2}{k}$ arises by gluing $r_1^{-1}$ and $r_2$ along $k$.
The span $L_1 \xhookleftarrow{u_1} \Kcap \xhookrightarrow{u_2} L_2$ contains the information on how to glue $r_1^{-1}$ and $r_2$ to receive the LHS $L$ and the RHS $R$ of the short-cut rule $\scrule{r_1}{r_2}{k}$.
The morphism $k: \Kcap \hookrightarrow V$ contains the information on how to construct the interface $K$ of the short-cut rule $\scrule{r_1}{r_2}{k}$. 
In case of $\mathbf{Graph}$, for example, $K$ is enhanced by including the elements of $V \setminus \Kcap$, i.e., their difference specifies those elements that would have been deleted by $r_1^{-1}$ and recreated by $r_2$. 

\begin{definition}[Short-cut rule]\label{def:scrule}
	In an adhesive category \textbf{C}, given two monotonic rules $r_i: L_i \hookrightarrow R_i$, where $i = 1,2$, and a common kernel rule $k: \Lcap \hookrightarrow \Rcap$ for them, the \emph{short-cut rule} $\scrule{r_1}{r_2}{k} := (L \xhookleftarrow[]{l} K \xhookrightarrow[]{r} R)$ is computed by executing the following steps:
	\begin{enumerate}
		\item The union $\Lcup$ of $L_1$ and $L_2$ along $\Kcap$ is computed as pushout $(2)$ in Fig.~\ref{fig:constr-LHS-RHS}.
		
		\item The LHS $L$ of the short-cut rule $\scrule{r_1}{r_2}{k}$ is constructed as pushout $(3a)$ in Fig.~\ref{fig:constr-LHS-RHS}.
		
		\item The RHS $R$ of the short-cut rule $\scrule{r_1}{r_2}{k}$ is constructed as pushout $(3b)$ in Fig.~\ref{fig:constr-LHS-RHS}.
					
		\item The interface $K$ of the short-cut rule $\scrule{r_1}{r_2}{k}$ is constructed as pushout $(4)$ in Fig.~\ref{fig:interface-scrule}.
		
		\item Morphisms $l: K \to L$ and $r: K \to R$ are obtained by the universal property of $K$.
	\end{enumerate}
	
	\begin{figure}
		\begin{minipage}[b]{.6\textwidth}
			\centering
			\begin{tikzpicture}
				\matrix (m) [	matrix of math nodes,
											row sep=.3em,
											column sep=.3em,
											minimum width=.3em]
				{
							&				&				&	V				&				&				&	\\
							&				&				&	\phantom{(0)}	&				&				&	\\
							&				&	(1a)	&	\Kcap					&	(1b)	& 			&	\\
							&				&				&	\phantom{(0)}	&				&				&	\\
					R_1 & 			& L_1		& (2)						& L_2		& 			&	R_2 \\
							&	(3a)	& 			& 							&				&	(3b)	& \\
						L	& 			&				& \Lcup					&				&				& R\\};
				\path[-stealth]
					(m-1-4) edge [right hook->,bend left=10] node [right] {\scriptsize $v_2$} (m-5-7)
									edge [left hook->,bend right=10] node [left] {\scriptsize $v_1$} (m-5-1)
					(m-3-4) edge [right hook->] node [right] {\scriptsize $k$} (m-1-4)
									edge [right hook->] node [right] {\scriptsize $u_2$} (m-5-5)
									edge [left hook->] node [left] {\scriptsize $u_1$} (m-5-3)
					(m-5-1) edge [right hook->] node [left] {\scriptsize $e_1^{\prime}$} (m-7-1)
					(m-5-7) edge [right hook->] node [right] {\scriptsize $e_2^{\prime}$} (m-7-7)
					(m-5-3) edge [left hook->] node [above] {\scriptsize $r_1$} (m-5-1)
									edge [right hook->] node [left] {\scriptsize $e_1$} (m-7-4)
					(m-5-5) edge [right hook->] node [above] {\scriptsize $r_2$} (m-5-7)
									edge [right hook->] node [right] {\scriptsize $e_2$} (m-7-4)
					(m-7-4) edge [left hook->] node [below] {\scriptsize $r_1'$} (m-7-1)
									edge [right hook->] node [below] {\scriptsize $r_2'$} (m-7-7);
			\end{tikzpicture}
			\caption{Construction of LHS and RHS of short-cut rule $\scrule{r_1}{r_2}{k}$.}
			\label{fig:constr-LHS-RHS}
		\end{minipage}
		\hfill
		\begin{minipage}[b]{.35\textwidth}
			\begin{center}
				\begin{tikzpicture}
					\matrix (m) [	matrix of math nodes,
												row sep=.3em,
												column sep=.3em,
												minimum width=.3em]
					{
											&	\phantom{(0)}	& L_1	&	\phantom{(0)}	& \\
							\Kcap		&								&	(2)	&								& \Lcup \\
											& 							& L_2	& 							& \\
											& 							& (4)	&								& \\
							V	& 							&			& 							& K \\};
					\path[-stealth]
						(m-2-1) edge [right hook->] node [above] {\scriptsize $u_1$} (m-1-3)
										edge [right hook->] node [below] {\scriptsize $u_2$} (m-3-3)
										edge [dashed,-] (m-2-3)
										edge [right hook->] node [left] {\scriptsize $k$} (m-5-1)
						(m-2-3) edge [->, dashed] (m-2-5)
						(m-5-1) edge [right hook->] node [below] {\scriptsize $z$} (m-5-5)
						(m-2-5) edge [right hook->] node [right] {\scriptsize $\kprime$} (m-5-5)
						(m-1-3) edge [right hook->] node [above] {\scriptsize $e_1$} (m-2-5)
						(m-3-3) edge [right hook->] node [below] {\scriptsize $e_2$} (m-2-5);
				\end{tikzpicture}
				\caption{Construction of interface $K$ of $\scrule{r_1}{r_2}{k}$.}
				\label{fig:interface-scrule}
			\end{center}
		\end{minipage}
	\end{figure}
\end{definition}

\subsection{Properties of $\M$-Adhesive Categories}
\label{properties-adhesive-categories}
Throughout our proofs, we use the following well-known properties of pushouts and pullbacks that hold in any category.
\begin{fact}[Properties of pushouts and pullbacks]\label{fact:properties-pos-pbs}\quad
	\begin{enumerate}[ref={Fact~\ref{fact:properties-pos-pbs}\,(\arabic*)}]
		\item \label{fact:po-comp-decomp} \emph{Pushout composition and decomposition:} Given a commuting diagram like Fig.~\ref{fig:po-pb-decomposition} where $(1)$ is a pushout, $(1)+(2)$ is a pushout if and only if $(2)$ is. 
		\item \label{fact:pb-comp-decomp} \emph{Pullback composition and decomposition:} Given a commuting diagram like Fig.~\ref{fig:po-pb-decomposition} where $(2)$ is a pullback, $(1)+(2)$ is a pullback if and only if $(1)$ is.
		\item \label{fact:po-pb-identity} A pushout along an identity morphism results in an isomorphism; in particular, one can choose this morphism to be an identity morphism as well. Likewise, a pullback along an identity morphism results in an isomorphism and one can choose this morphism to be an identity morphism as well.
		\item \label{fact:pb-mono} Given morphisms $g: A \to B$ and $f: B \to C$ with $f$ being a monomorphism, the span $B \xleftarrow{g} A \xrightarrow{id_A} A$ is a pullback of $(f, f \circ g)$.
	\end{enumerate}
\end{fact}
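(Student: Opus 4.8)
The plan is to prove all four items purely from the universal properties of pushouts and pullbacks; none of them uses $\M$-adhesiveness, so it suffices to argue in an arbitrary category. Items (1) and (2) are dual to one another, so I would prove (1) in detail and deduce (2) by passing to the opposite category, where pullbacks become pushouts. Items (3) and (4) are one-line verifications of the defining universal property.

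For (1), write the composable squares $(1)$ (left) and $(2)$ (right) as in Fig.~\ref{fig:po-pb-decomposition}, sharing the middle vertical morphism $B \to E$, so that the outer rectangle $(1)+(2)$ carries the span $D \leftarrow A \to C$ whose right leg factors as $A \to B \to C$; throughout, $(1)$ is assumed to be a pushout. For the \textbf{composition} direction ($(2)$ also a pushout $\Rightarrow (1)+(2)$ a pushout), take any cocone $(d \colon D \to X,\ c \colon C \to X)$ over the outer span. Precomposing $c$ with $B \to C$ gives, together with $d$, a cocone over the span of $(1)$; its mediator $e \colon E \to X$ together with $c$ then forms a cocone over the span of $(2)$, and the mediator $f \colon F \to X$ of the latter is readily checked to mediate the outer rectangle, with uniqueness propagating back through the two universal properties. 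For the \textbf{decomposition} direction ($(1)+(2)$ a pushout $\Rightarrow (2)$ a pushout), given a cocone $(c \colon C \to X,\ e \colon E \to X)$ over the span of $(2)$, the composite $e \circ (D \to E)$ together with $c$ is a cocone over the outer span; its mediator $f \colon F \to X$ satisfies $f \circ (C \to F) = c$ directly, while $f \circ (E \to F) = e$ is forced by the uniqueness clause of the pushout $(1)$, since the two sides agree after precomposition with $D \to E$ and with $B \to E$. Uniqueness of $f$ as a mediator for $(2)$ again reduces to its uniqueness as a mediator for the outer rectangle.

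Item (2) follows from (1) by reversing every arrow. For (3), a pushout of $A \xleftarrow{\ \mathrm{id}_A\ } A \xrightarrow{\ g\ } B$ may be taken to be $B$ with legs $g$ and $\mathrm{id}_B$: any cocone $(u \colon A \to X,\ v \colon B \to X)$ with $u = v \circ g$ admits $v$ as its only mediator, since a mediator $t \colon B \to X$ must satisfy $t \circ \mathrm{id}_B = v$; as pushouts are unique up to isomorphism, the induced morphism out of any chosen pushout object is then an isomorphism, and the claim about pullbacks is the dual statement. For (4), the span $B \xleftarrow{\ g\ } A \xrightarrow{\ \mathrm{id}_A\ } A$ is a pullback of $(f, f \circ g)$: it commutes since $f \circ g = (f \circ g) \circ \mathrm{id}_A$, and for any $(x \colon X \to B,\ y \colon X \to A)$ with $f \circ x = f \circ g \circ y$ the fact that $f$ is a monomorphism forces $x = g \circ y$, so $y$ is the unique morphism making both triangles commute.

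I do not expect a genuine obstacle here: every part is a routine diagram chase. The only point requiring care is the bookkeeping in (1) — tracking which of the two universal properties is invoked in which order, and in which orientation the squares share an edge — and fixing the orientation once via Fig.~\ref{fig:po-pb-decomposition} removes that ambiguity.
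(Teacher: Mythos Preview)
Your argument is mathematically correct in all four parts; these are indeed purely formal consequences of the universal properties, and the paper treats this fact as folklore without supplying a proof, so there is nothing to compare against on the level of strategy.

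One concrete issue, however: your labeling does not match Fig.~\ref{fig:po-pb-decomposition}. In that figure the two squares are arranged horizontally with top row $A \xrightarrow{k} B \xrightarrow{r} E$ and bottom row $C \xrightarrow{u} D \xrightarrow{w} F$, so the shared edge is the vertical morphism $B \to D$ (not ``$B \to E$''), and the span underlying the outer rectangle is $C \xleftarrow{l} A \xrightarrow{r \circ k} E$ (not ``$D \leftarrow A \to C$''). Your proof is evidently written against a different mental picture (two vertically stacked squares with the labels permuted), and it becomes internally consistent once one adopts that picture, but as written it directly contradicts the figure you cite. Since you explicitly anchor the argument to Fig.~\ref{fig:po-pb-decomposition}, you should relabel your cocones accordingly: in the composition step the given cocone is $(c \colon C \to X,\ e \colon E \to X)$, the mediator of $(1)$ lands in $D$, and the final mediator is out of $F$; in the decomposition step the given cocone over $(2)$ is $(d \colon D \to X,\ e \colon E \to X)$, and one uses the uniqueness clause of the pushout $(1)$ to conclude that the outer mediator restricts correctly along $D \to F$. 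The chase is otherwise exactly as you describe.
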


Also, we frequently exploit the following central properties of $\M$-adhesive categories, also called \emph{HLR properties}. 
These are the properties that make this kind of categories a suitable framework for (double-pushout) rewriting. 
They have first been proven for adhesive categories~\cite{LS05} and also hold for weaker variants like adhesive HLR or $\M$-adhesive categories; see, e.g.,~\cite[Theorem~4.26]{EEPT06} or~\cite[Theorem~4.22]{EEGH15}. 
\begin{fact}[HLR properties of $\M$-adhesive categories]\label{fact:properties-adhesive-categories}
	If $(\CC,\M)$ is an $\M$-adhesive category, the following properties hold:
	\begin{enumerate}
		\item Pushouts along $\M$-morphisms are pullbacks.

		\item If $m$ in Fig.~\ref{fig:po-square} is an $\M$-morphism, pushout complements for $g \circ m$ are unique (up to isomorphism). 
		
		\item $(\CC,\M)$ has $\M$ pushout-pullback decomposition.
		This means that given a diagram like the one depicted in Fig.~\ref{fig:po-pb-decomposition} where the outer square $(1)+(2)$ is a pushout, the right square $(2)$ is a pullback, $w \in \M$, and $l \in \M$ or $k \in \M$, then both $(1)$ and $(2)$ are pushouts and pullbacks. 
	\end{enumerate}
	
	\begin{figure}
		\centering
		\begin{tikzpicture}
			\matrix (m) [	matrix of math nodes,
										nodes in empty cells,
										row sep=1.25em,
										column sep=1.25em,
										minimum width=1.25em,
										nodes={anchor=center}]
			{
				A	&			&	B &			& E \\
					& (1)	&		&	(2)	&	\\
				C &			& D	&			&	F \\};
			\path[-stealth]
				(m-1-3) edge [->] (m-3-3)
								edge [->] node [above] {\scriptsize $r$} (m-1-5)
				(m-1-1) edge [->] node [above] {\scriptsize $k$} (m-1-3)
								edge [->] node [left] {\scriptsize $l$} (m-3-1)
				(m-1-5) edge [->] node [right] {\scriptsize $v$} (m-3-5)
				(m-3-1) edge [->] node [below] {\scriptsize $u$} (m-3-3)
				(m-3-3) edge [->] node [below] {\scriptsize $w$} (m-3-5);
		\end{tikzpicture}
		\caption{Illustration of the $\M$ pushout-pullback and $\M$ pullback-pushout decompositions.}
		\label{fig:po-pb-decomposition}
	\end{figure}
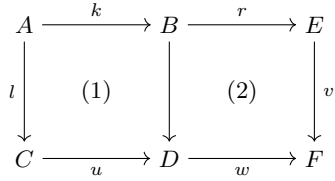
\end{fact}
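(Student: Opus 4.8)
The three statements are the classical Higher-Level-Replacement properties of $\M$-adhesive categories, so the intended justification is the citation to \cite{EEPT06,EEGH15}; for completeness I would derive each from the van Kampen axiom (Fig.~\ref{fig:van-kampen-square}), carrying out the first statement in full as it is representative of the technique. \emph{Pushouts along $\M$-morphisms are pullbacks.} Let the given pushout be that of the span $C \xleftarrow{f} A \xrightarrow{m} B$ with corner $D$, legs $n\colon C\to D$ and $g\colon B\to D$, and $m\in\M$ (hence $n\in\M$). I place this square as the bottom face of the cube and take the top face to be the trivial pushout of $C \xleftarrow{f} A \xrightarrow{\mathrm{id}} A$ with corner $C$; concretely $A'=A$, $B'=A$, $C'=C$, $D'=C$ with $f'=f$, $m'=\mathrm{id}_A$, $n'=\mathrm{id}_C$, $g'=f$, and vertical morphisms $a=\mathrm{id}_A$, $b=m$, $c=\mathrm{id}_C$, $d=n$. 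Then $b,c,d\in\M$, the top face is a pushout, and both back faces are pullbacks: the left-back face has identity verticals, while the right-back face is the kernel-pair square of $m$, which is a pullback precisely because $m$ is a monomorphism. The van Kampen property then forces both front faces to be pullbacks, and the right-front face is exactly the original square (read with $A$ as apex); thus the pushout is a pullback.

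\emph{$\M$ pushout–pullback decomposition.} Using the notation of Fig.~\ref{fig:po-pb-decomposition} and writing $t\colon B\to D$ for the middle vertical, I would first build the genuine pushout $\bar D := B +_A C$ of $C \xleftarrow{l} A \xrightarrow{k} B$, with legs $\bar t\colon B\to\bar D$ and $\bar u\colon C\to\bar D$ and the unique comparison $\phi\colon\bar D\to D$ determined by $\phi\,\bar t=t$ and $\phi\,\bar u=u$; the hypothesis $l\in\M$ (respectively $k\in\M$) makes $\bar t$ (respectively $\bar u$) an $\M$-morphism. Pasting this pushout to the left of $E$ and using that the outer square $(1)+(2)$ is a pushout, pushout decomposition (Fact~\ref{fact:po-comp-decomp}) shows that the induced right-hand square over $\bar D$ is again a pushout. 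It then remains to prove that $\phi$ is an isomorphism; this is where the given pullback $(2)$ and the hypothesis $w\in\M$ enter, through a van Kampen argument of the same shape as above. Once $\phi$ is invertible, square $(1)$ is the pushout $\bar D$, square $(2)$ is a pushout by Fact~\ref{fact:po-comp-decomp}, and both are pullbacks by the first statement.

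\emph{Uniqueness of pushout complements.} Let $(C,a,c)$ and $(C',a',c')$ be two pushout complements of $m\colon A\hookrightarrow B$ and $g\colon B\to D$ with $m\in\M$. Each complement square is a pushout along $m$, hence a pullback by the first statement, and $c,c'\in\M$ as pushouts of $m$. Comparing the two complements via the pullback of $c$ and $c'$ and applying the $\M$ pushout–pullback decomposition of the third statement yields the desired isomorphism $C\cong C'$ compatible with $a,a'$ and $c,c'$. The genuine difficulty, I expect, lies not in the first statement — the explicit cube settles it cleanly — but in the isomorphism steps of the other two: manufacturing the comparison morphism $\phi$ (respectively the comparison between $C$ and $C'$) and proving it invertible is exactly where the full strength of the van Kampen property together with the monic hypotheses $w\in\M$ and $m\in\M$ has to be exploited.
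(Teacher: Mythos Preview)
You correctly identify that the paper does not prove this Fact but merely cites~\cite{LS05}, \cite[Theorem~4.26]{EEPT06}, and \cite[Theorem~4.22]{EEGH15}; your opening sentence already matches the paper's entire treatment of the statement. The supplementary cube argument you give for~(1) is the standard one and is carried out cleanly, and your outlines for~(2) and~(3) are in the right direction---you accurately locate the crux in the invertibility of the comparison morphisms, which is precisely where the cited references deploy the van Kampen property.
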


Moreover, we recall a somewhat more special technical lemma for $\M$-adhesive categories that is needed in one of our proofs. 
It allows to recognize certain squares to be pullbacks. 
\begin{lemma}[{$\M$ pullback-pushout decomposition~\cite[Lemma~B.2]{GHE14}}]\label{lem:M-pb-po-decomp}
	When given a diagram like the one depicted in Fig.~\ref{fig:po-pb-decomposition}, if $(1)+(2)$ is a pullback, $(1)$ is a pushout, $(2)$ commutes, and $v \in \M$, then $(2)$ is a pullback. 
\end{lemma}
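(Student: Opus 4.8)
The plan is to reduce the statement to the claim that one canonically induced morphism is an isomorphism, and then to obtain that by pulling the pushout $(1)$ back along a carefully chosen $\M$-morphism and invoking the (weak, vertical) van Kampen property. First I would carry out the routine reductions. Since $(1)+(2)$ is a pullback and $v \in \M$, the morphism $l$ — being a pullback of $v$ — lies in $\M$; hence $(1)$ is a pushout along the $\M$-morphism $l$, so by Fact~\ref{fact:properties-adhesive-categories}(1) it is also a pullback and the morphism $B \to D$ lies in $\M$ by closure of $\M$ under pushouts. Form the pullback $P$ of $v$ and $w$ with projections $p_E \colon P \to E$ and $p_D \colon P \to D$; then $p_D \in \M$. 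Commutativity of $(2)$ yields a unique $\mu \colon B \to P$ through which $r$ and the morphism $B \to D$ both factor, and $(2)$ is a pullback if and only if $\mu$ is an isomorphism; moreover the morphism $B \to D$ equals $p_D \circ \mu$, so $\mu \in \M$ by decomposition of $\M$-morphisms. Finally, decomposing the pullback $(1)+(2)$ through $P$ (Fact~\ref{fact:properties-pos-pbs}(2)) shows that the square on $A$, $P$, $C$, $D$ with legs $\mu \circ k$, $l$, $p_D$, $u$ is again a pullback; call it $(\ast)$.

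The core step is to pull the pushout $(1)$ back along $p_D \colon P \to D$. Forming the pullbacks $P \times_D A$, $P \times_D B$, $P \times_D C$ produces a commutative cube whose bottom face is $(1)$, whose top face is the square on $P \times_D A$, $P \times_D B$, $P \times_D C$, $P$, and whose verticals are $p_D$ together with the three pullback projections. Its bottom is a pushout along the $\M$-morphism $l$; the verticals $P \times_D B \to B$, $P \times_D C \to C$ and $p_D$ are all pullbacks of $p_D \in \M$, hence in $\M$; its two back faces are pullbacks by pullback pasting (using that $A \to D$ factors through both $B$ and $C$); and its two front faces are pullbacks by construction. The weak vertical van Kampen property then forces the top face to be a pushout, i.e.\ $P$ is the pushout of $P \times_D B \leftarrow P \times_D A \to P \times_D C$.

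It remains to identify this pushout. By $(\ast)$, $P \times_D C \cong A$, with projection $l$ to $C$ and coprojection $\mu \circ k$ to $P$; pasting along $l$ then gives $P \times_D A \cong A \times_C A \cong A$ since $l$ is monic (Fact~\ref{fact:properties-pos-pbs}(4)), and the comparison $P \times_D A \to P \times_D C$ is an isomorphism. On the other side, the morphism $B \to D$ factors through $p_D \in \M$ via $\mu$, so the projection $P \times_D B \to B$ is split by the section induced by $(\mu, \mathrm{id}_B)$, hence is a split epimorphism in $\M$ and therefore an isomorphism, under which the coprojection $P \times_D B \to P$ becomes $\mu$. Substituting, $P$ is the pushout of $B$ and $A$ along an isomorphism out of $A$, so $P \cong B$ with coprojection $\mu$; thus $\mu$ is an isomorphism and $(2)$ is a pullback.

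The delicate part — and essentially the only place real care is needed — is erecting this cube and checking the van Kampen hypotheses: that the back faces are pullbacks, and the bookkeeping of which verticals must land in $\M$. Everything fits exactly because $p_D \in \M$, which is precisely where (beyond yielding $l \in \M$) the hypothesis $v \in \M$ is used; one must also respect that the $\M$-adhesive framework only supplies the weak vertical van Kampen property, so the cube is arranged to use nothing stronger. A naive attempt that tries to conclude $\mu$ is an isomorphism directly from the accumulated pullback facts runs in circles (it keeps collapsing to ``$\mu$ is epic''); it is exactly the use of the pushout property of $(1)$ through the van Kampen step that breaks this.
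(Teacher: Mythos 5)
The paper does not actually prove this lemma: it is imported verbatim from~\cite{GHE14} (Lemma~B.2 there) as a recalled fact, so there is no in-paper proof to measure you against. On its own merits your argument is correct and stays within the axioms the paper assumes. The reductions are sound: $l\in\M$ as a pullback of $v$, the middle vertical $B\to D$ in $\M$ by closure under pushouts, $\mu\in\M$ by decomposition, and the pullback $(\ast)$ on $A,P,C,D$ by pullback decomposition of $(1)+(2)$ through the pullback $P$ of $(v,w)$. The cube you erect over $(1)$ by pulling back along $p_D\in\M$ satisfies exactly the hypotheses of the \emph{vertical weak} van Kampen property as stated in the paper (bottom pushout along $l\in\M$, the three relevant verticals in $\M$ as pullbacks of $p_D$, back faces pullbacks by pasting, front faces pullbacks by construction), so the top face is a pushout; and your identifications $P\times_D C\cong A$, $P\times_D A\cong A\times_C A\cong A$ (using that $l$ is monic), and $P\times_D B\cong B$ via the split monic projection are all legitimate, after which the pushout along an isomorphism forces the coprojection, hence $\mu$, to be an isomorphism. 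This is the standard cube-style argument by which such decomposition lemmas are established in the ($\M$-)adhesive literature, so while I cannot certify it coincides step-for-step with the proof in~\cite{GHE14}, it is of the same nature and would serve as a self-contained replacement for the citation. Two cosmetic points only: the phrase \enquote{the verticals \ldots{} and $p_D$ are all pullbacks of $p_D$} should read that the verticals over $B$ and $C$ are pullbacks of $p_D$ while the vertical over $D$ is $p_D$ itself; and it is worth stating explicitly that the comparison $P\times_D A\to P\times_D C$ is an isomorphism because, under your identifications, it is a morphism $\phi$ with $l\circ\phi=l$ and $l$ monic.
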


Finally, we recall some properties of \emph{initial pushouts} in $\M$-adhesive categories. 
First, initial pushouts can be used to characterize matches for which a rule is applicable.
\begin{fact}[Existence and uniqueness of contexts {\cite[Theorem~6.4]{EEPT06}}]\label{fact:existence-context}
	In an $\M$-adhesive category $(\CC,\M)$ with initial pushouts, given a plain rule $p = (L \xhookleftarrow[]{l} K \xhookrightarrow[]{r} R)$ and a match $m: L \rightarrow G$, the rule $p$ is applicable at match $m$ if and only if there exists a morphism $b_m^*: B_m \to K$ with $l \circ b_m^* = b_m$, where $B_m$ is the boundary object with respect to $m$ and $b_m$ is the boundary over $m$, i.e., where $(1)$ is the initial pushout over $m$ (compare Fig.~\ref{fig:ipo-context-object}).
	
	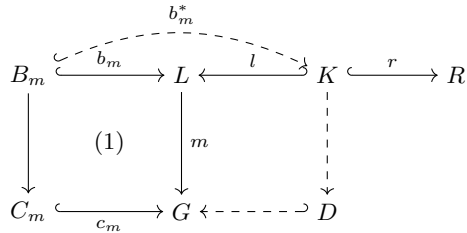
\begin{figure}
		\centering
		\begin{tikzpicture}
			\matrix (m) [	matrix of math nodes,
										nodes in empty cells,
										nodes={anchor=center},
										row sep=1.25em,
										column sep=1.25em,
										minimum width=1.25em]
			{
				B_m	&			&	L	&	& K	&	& R \\
						& (1)	&		& \phantom{(1)}	&		&	& \\
				C_m &			& G	&	&	D	&	& \\};
			\path[-stealth]
				(m-1-3) edge [->] node [right] {\scriptsize $m$} (m-3-3)
				(m-1-1) edge [right hook->] node [above] {\scriptsize $b_m$} (m-1-3)
								edge [right hook->, bend left=25, dashed] node [above] {\scriptsize $b_m^*$} (m-1-5)
								edge [->] (m-3-1)
				(m-1-5) edge [->,dashed] (m-3-5)
								edge [right hook->] node [above] {\scriptsize $r$} (m-1-7)
								edge [left hook->] node [above] {\scriptsize $l$} (m-1-3)
				(m-3-1) edge [right hook->] node [below] {\scriptsize $c_m$} (m-3-3)
				(m-3-5) edge [left hook->,dashed] (m-3-3);
		\end{tikzpicture}
		\caption{Initial pushout and context object.}
		\label{fig:ipo-context-object}
	\end{figure}
\end{fact}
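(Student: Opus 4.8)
The plan is to prove the two implications separately. The forward direction ($p$ applicable at $m$ $\Rightarrow$ $b_m^*$ exists) will be almost immediate from the universal property built into the definition of initial pushouts, whereas the backward direction will require constructing the pushout complement explicitly. Throughout I use that, for a \emph{plain} rule, applicability at $m$ means precisely that the pushout complement of $K \xhookrightarrow{l} L \xhookrightarrow{m} G$ exists: the right-hand pushout of Fig.~\ref{fig:definition-transformation-dpo} can always be formed since $r \in \M$, and there is no application condition to verify.

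\textbf{Forward direction.} Assume $p$ is applicable at $m$, so there is a span $K \xrightarrow{d} D \xrightarrow{g} G$ making the left square of Fig.~\ref{fig:definition-transformation-dpo} a pushout. Read as in Fig.~\ref{fig:definition-ipo}, this square is a pushout over $m$ in which $l: K \hookrightarrow L$ plays the role of $b_m'$ and hence lies in $\M$. The factorization property from Definition~\ref{def:initial-pushout} then yields a (unique) morphism $b_m^*: B_m \to K$ with $l \circ b_m^* = b_m$, which is exactly the claim.

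\textbf{Backward direction.} Assume $b_m^*: B_m \to K$ with $l \circ b_m^* = b_m$ is given. First, $b_m^* \in \M$: from $l \circ b_m^* = b_m \in \M$ and $l \in \M$, closedness of $\M$ under decomposition gives $b_m^* \in \M$. Next, I would construct $D$ as the pushout of $K \xhookleftarrow{b_m^*} B_m \xrightarrow{x_m} C_m$, obtaining $d: K \hookrightarrow D$ and $c: C_m \hookrightarrow D$ (the latter in $\M$ by closedness of $\M$ under pushout). Since $m \circ l \circ b_m^* = m \circ b_m = c_m \circ x_m$ by commutativity of the initial pushout, the pair $(m \circ l : K \to G,\ c_m : C_m \to G)$ induces a unique $g: D \to G$ with $g \circ d = m \circ l$ and $g \circ c = c_m$. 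It remains to show that the square $K \xhookrightarrow{l} L$, $K \xhookrightarrow{d} D$, $L \xhookrightarrow{m} G$, $D \xrightarrow{g} G$ is a pushout; then $D \xrightarrow{g} G$ is a pushout complement of $(l,m)$, so $p$ is applicable at $m$. I would prove this by horizontally composing the construction square ($B_m \to K$, $B_m \to C_m$, $K \to D$, $C_m \to D$), which is a pushout, with the square in question: the composite has top edge $l \circ b_m^* = b_m$, bottom edge $g \circ c = c_m$, left edge $x_m$, and right edge $m$, so it coincides with the initial pushout over $m$ and is therefore a pushout. Pushout decomposition (Fact~\ref{fact:po-comp-decomp}) then shows that the square in question is itself a pushout. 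Finally $g \in \M$ (closedness under pushout, $l \in \M$) and $d \in \M$ (pushouts along $\M$-morphisms are pullbacks and $\M$ is closed under pullback, using $m \in \M$), so the resulting diagram is a genuine DPO step.

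The main obstacle I anticipate is purely the bookkeeping in the backward direction: checking that the horizontal composite of the two squares really reproduces the initial pushout over $m$ — which hinges on $l \circ b_m^* = b_m$ by hypothesis and $g \circ c = c_m$ by the universal property of $D$ — and then invoking the correct half of pushout composition/decomposition. The forward direction is essentially immediate once one observes that a DPO step over $m$ is precisely a pushout over $m$ of the shape required by the factorization clause of Definition~\ref{def:initial-pushout}; everything else is a routine application of the standard properties of $\M$-adhesive categories recalled above.
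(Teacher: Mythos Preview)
The paper does not supply its own proof of this statement: it is recorded as a \emph{Fact} with an explicit citation to \cite[Theorem~6.4]{EEPT06} and is merely recalled in the preliminaries, so there is nothing in the paper to compare your argument against directly.

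That said, your argument is correct and is essentially the standard proof one finds in the cited reference. The forward direction is exactly the factorization clause of Definition~\ref{def:initial-pushout}, and in the backward direction your construction of $D$ as the pushout of $b_m^*$ and $x_m$, followed by pushout composition/decomposition to recognize the resulting square as a pushout complement of $(l,m)$, is precisely the intended argument. One small remark: your closing sentence invokes $m \in \M$ to obtain $d \in \M$, but the Fact as stated (and as in \cite{EEPT06}) does not assume $m \in \M$; however, this is harmless, since applicability only requires the pushout complement to exist, which you have already established, and the extra claim about $d$ is not needed for the statement.
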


Moreover, initial pushouts enjoy the following closure property with respect to pushouts along $\M$-morphisms.
\begin{fact}[Closure property of initial pushouts {\cite[Lemma~6.5]{EEPT06}}]\label{fact:closure-ipo}
	In an $\M$-adhesive category $(\CC,\M)$ with initial pushouts, given the initial pushout $(1)$ over a morphism $a$ and a pushout $(2)$ along $a$ with $m \in \M$ as depicted in Fig.~\ref{fig:closure-ipo}, the square $(1)+(2)$ constitutes the initial pushout over $d$. 
	
	\begin{figure}
		\centering
		\begin{tikzpicture}
			\matrix (m) [	matrix of math nodes,
										nodes in empty cells,
										row sep=1.25em,
										column sep=1.25em,
										minimum width=1.25em,
										nodes={anchor=center}]
			{
				B_a	&			&	A						&			& D \\
						& (1)	&							&	(2)	&	\\
				C_a &			& A^{\prime}	& 		&	D^{\prime} \\};
			\path[-stealth]
				(m-1-3) edge [->] node [right] {\scriptsize $a$} (m-3-3)
								edge [right hook->] node [above] {\scriptsize $m$} (m-1-5)
				(m-1-1) edge [right hook->] node [above] {\scriptsize $b_a$} (m-1-3)
								edge [->] (m-3-1)
				(m-1-5) edge [->] node [right] {\scriptsize $d$} (m-3-5)
				(m-3-1) edge [right hook->] node [below] {\scriptsize $c_a$} (m-3-3)
				(m-3-3) edge [right hook->] node [below] {\scriptsize $n$} (m-3-5);
		\end{tikzpicture}
		\caption{First closure property of initial pushouts.}
		\label{fig:closure-ipo}
	\end{figure}
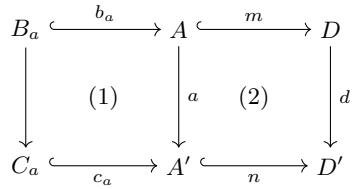
	
	Moreover, given the initial pushout $(1)$ over a morphism $a$ and a pushout $(2)$ along $d$ with $m \in \M$ as depicted in Fig.~\ref{fig:closure-ipo-2}, the square $(3)$ constitutes the initial pushout over $d$, where $b_a^{\prime},c_a^{\prime}$ are the morphisms induced by initiality of $(1)$. 
	
	\begin{figure}
		\centering
		\begin{tikzpicture}
			\matrix (m) [	matrix of math nodes,
										nodes in empty cells,
										row sep=1.25em,
										column sep=1.25em,
										minimum width=1.25em,
										nodes={anchor=center}]
			{
				B_a	&			&	A						&			& D 					&	& & B_a	&			&	D \\
						& (1)	&							&	(2)	&							&	&	&			& (3)	& \\	
				C_a &			& A^{\prime}	& 		&	D^{\prime}	&	& &	C_a	&			& D^{\prime} \\};
			\path[-stealth]
				(m-1-3) edge [->] node [right] {\scriptsize $a$} (m-3-3)
				(m-1-5) edge [left hook->] node [above] {\scriptsize $m$} (m-1-3)
				(m-1-1) edge [right hook->] node [above] {\scriptsize $b_a$} (m-1-3)
								edge [->] node [left] {\scriptsize $x_a$} (m-3-1)
								edge [right hook->,bend left,dashed] node [above] {\scriptsize $b_a^{\prime}$} (m-1-5)
				(m-1-5) edge [->] node [right] {\scriptsize $d$} (m-3-5)
				(m-3-1) edge [right hook->] node [below] {\scriptsize $c_a$} (m-3-3)
								edge [right hook->,bend right,dashed] node [below] {\scriptsize $c_a^{\prime}$} (m-3-5)
				(m-3-5) edge [left hook->] node [below] {\scriptsize $n$} (m-3-3)
				(m-1-8) edge [right hook->] node [above] {\scriptsize $b_a^{\prime}$} (m-1-10)
								edge [->] node [left] {\scriptsize $x_a$} (m-3-8)
				(m-3-8) edge [right hook->] node [below] {\scriptsize $c_a^{\prime}$} (m-3-10)
				(m-1-10) edge [->] node [right] {\scriptsize $d$} (m-3-10);
		\end{tikzpicture}
		\caption{Second closure property of initial pushouts.}
		\label{fig:closure-ipo-2}
	\end{figure}
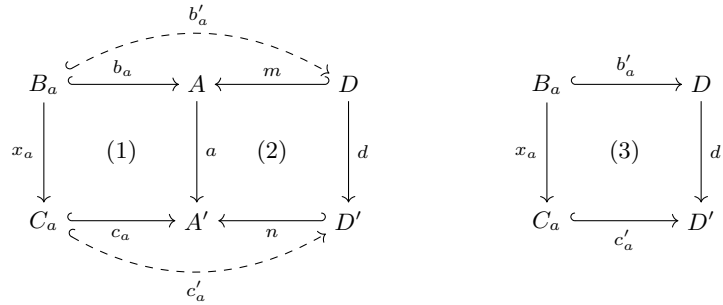
\end{fact}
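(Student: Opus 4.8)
The statement comprises two closure properties; I would dispatch the second (Fig.~\ref{fig:closure-ipo-2}) first, as it is the cleaner one. The pivotal observation there is that the pushout $(2)$, read with $a$ as its vertical edge, is \emph{itself} a pushout over $a$ with boundary $m \in \M$ and context object $D'$. Feeding this pushout into the factorization property of the initial pushout $(1)$ over $a$ (Definition~\ref{def:initial-pushout}) yields exactly the unique morphisms $b_a',c_a'$ with $b_a = m \circ b_a'$ and $c_a = n \circ c_a'$, and decomposition of $\M$ turns $b_a, m \in \M$ into $b_a' \in \M$. That $(3)$ is a pushout then follows from the $\M$ pushout-pullback decomposition (Fact~\ref{fact:properties-adhesive-categories}): horizontally composing $(3)$ with $(2)$ reproduces the pushout $(1)$, the right square $(2)$ is a pullback as a pushout along $\M$, and $b_a' \in \M$ supplies the required $\M$-hypothesis. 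For initiality of $(3)$, I would take any competing pushout over $d$ with $\M$-boundary, compose it on the right with $(2)$ to obtain a pushout over $a$ with $\M$-boundary, invoke initiality of $(1)$ for the factorization, and cancel the monomorphisms $m$ and $n$ to see that this already factors $(3)$ through the competitor; uniqueness transfers back from uniqueness of the factorization through $(1)$.

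For the first part (Fig.~\ref{fig:closure-ipo}) the pushout property of $(1)+(2)$ is immediate from pushout composition (Fact~\ref{fact:po-comp-decomp}), and its boundary $m \circ b_a$ lies in $\M$ by closure under composition. The real content is initiality. Given a competing pushout over $d$ with boundary $b'$ and context $c'$ (both in $\M$, the latter as an $\M$-leg of a pushout along $\M$), my plan is to \emph{restrict} it along the van Kampen square $(2)$: I would pull $b'$ back along $m$ and $c'$ back along $n$, obtaining $\M$-morphisms into $A$ and $A'$ and, via the universal properties, an induced square $(S)$ over $a$. Pullback pasting — using that $(2)$ and the competing square are pullbacks (Fact~\ref{fact:properties-adhesive-categories}) — shows that all four side faces of the resulting cube over $(2)$ are pullbacks, so the van Kampen property (an $\M$-adhesive axiom) forces the \enquote{top} face to be a pushout. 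Composing that top face with the competing pushout and applying $\M$ pushout-pullback decomposition identifies $(S)$ as a pushout over $a$ with $\M$-boundary. Initiality of $(1)$ then gives a factorization of $(1)$ through $(S)$, and post-composing with the pullback projections into the competitor yields the factorization of $(1)+(2)$; uniqueness again follows since $b'$ is a monomorphism.

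The main obstacle is precisely the first part's claim that the restricted square $(S)$ is again a pushout over $a$, i.e.\ that the pushout property \emph{descends} along the $\M$-morphism $m$. This is exactly what the van Kampen axiom is meant to deliver, but deploying it requires assembling the competing pushout over $d$, the pushout $(2)$, and the two constructed pullbacks into a single commutative cube with the correct faces certified as pullbacks, and then chaining the van Kampen conclusion with pushout composition (Fact~\ref{fact:po-comp-decomp}) and $\M$ pushout-pullback decomposition to return to a pushout over $a$. Fixing the cube's orientation and tracking which faces are pullbacks is the delicate bookkeeping; once $(S)$ is known to be a pushout, the appeal to initiality of $(1)$ and the monomorphism cancellations are routine, and the second part needs none of this machinery.
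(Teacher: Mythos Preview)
The paper does not prove this statement at all: it is a \emph{Fact} cited from \cite[Lemma~6.5]{EEPT06} and merely recalled in the preliminaries, so there is no proof in the paper to compare your proposal against.

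That said, your outline is the standard argument and is sound in spirit. For the second part your proof is clean and complete. For the first part, the strategy of restricting the competing pushout along $m,n$ via pullbacks and then invoking van Kampen to recover a pushout over $a$ is the right one, but your sentence ``pullback pasting \dots\ shows that all four side faces of the resulting cube are pullbacks'' hides a genuine step. Two of the side faces (the constructed pullbacks) are pullbacks by definition, and $(2)$ is a pullback as an $\M$-pushout; but the remaining face --- the square over $B' \to C'$ with vertices $\bar{B},\bar{C},B',C'$ --- is \emph{not} obtained by mere pullback pasting. One clean way to close this gap is to first orient the cube with $(2)$ as the bottom pushout: then the back faces (your pullback over $m$ and the square $(S)$, which \emph{is} a pullback by pasting) are available, the front faces (the competitor and the pullback over $n$) are pullbacks, and van Kampen yields the missing face as a pushout along $\M$, hence a pullback. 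Only then can you re-orient with the competitor as bottom and conclude that $(S)$ is a pushout. Your phrase ``composing that top face with the competing pushout and applying $\M$ pushout-pullback decomposition'' does not quite name this manoeuvre; it is really a second invocation of van Kampen (or an equivalent cube argument), not a planar decomposition.
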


	\section{Technical Lemma and Proofs}
	\label{sec:technical-appendix}
	In this section, we present the proofs of all our results and an additional (technical) lemma needed for on of them. 

We start with this general lemma.
It is concerned with the composition of an initial pushout with a pullback in $\M$-adhesive categories and seems to be new. 

\begin{lemma}[Interaction of initial pushouts with pullbacks]\label{lem:interaction-ipos-pbs}
	In an $\M$-adhesive category $(\CC,\M)$, given a diagram like the one in Fig.~\ref{fig:interaction-ipos-pbs} where $(1)$ is an initial pushout, $(2)$ a pullback with $a_2\in \M$, and $(3)$ a pushout such that $a_3 \in \M$, there are unique $\M$-morphisms $d: B_{f_1} \hookrightarrow A_5$ and $e: C_{f_1} \hookrightarrow A_6$ such that $a_3 \circ d = a_1 \circ b_{f_1}$, $a_4 \circ e = a_2 \circ c_{f_1}$. 
	Moreover, the thereby induced square $B_{f_1} \xrightarrow{d} A_5 \xrightarrow{f_3} A_6 \xleftarrow{e} C_{f_1} \xleftarrow{x_{f_1}} B_{f_1}$ is a pullback.
	
	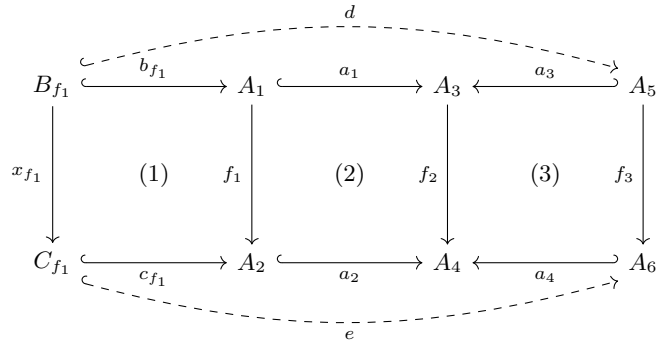
\begin{figure}
		\centering
		\begin{tikzpicture}
			\matrix (m) [	matrix of math nodes,
										nodes in empty cells,
										nodes={anchor=center},
										row sep=2em,
										column sep=2em,
										minimum width=2em]
			{	B_{f_1}	&			&	A_1	&			& A_3	&			&	A_5 \\
								& (1)	&			& (2)	&			& (3)	& \\
				C_{f_1}	&			& A_2	&			& A_4	&			& A_6 \\};
			\path[-stealth]
				(m-1-1) 						edge [right hook->] node [above] {\scriptsize $b_{f_1}$} (m-1-3)
														edge [->] node [left] {\scriptsize $x_{f_1}$} (m-3-1)
				(m-1-1.north east)	edge [right hook->,bend left=15,dashed] node [above] {\scriptsize $d$} (m-1-7.north west)
				(m-1-3) 						edge [right hook->] node [above] {\scriptsize $a_1$} (m-1-5)
														edge [->] node [left] {\scriptsize $f_1$} (m-3-3)
				(m-1-5) 						edge [->] node [left] {\scriptsize $f_2$} (m-3-5)
				(m-1-7) 						edge [left hook->] node [above] {\scriptsize $a_3$} (m-1-5)
														edge [->] node [left] {\scriptsize $f_3$} (m-3-7)
				(m-3-1) 						edge [right hook->] node [below] {\scriptsize $c_{f_1}$} (m-3-3)
				(m-3-1.south east)	edge [right hook->,bend right=15,dashed] node [below] {\scriptsize $e$} (m-3-7.south west)
				(m-3-3) 						edge [right hook->] node [below] {\scriptsize $a_2$} (m-3-5)
				(m-3-7) 						edge [left hook->] node [below] {\scriptsize $a_4$} (m-3-5);
		\end{tikzpicture}
		\caption{Interaction of initial pushouts with pullbacks.}
		\label{fig:interaction-ipos-pbs}
	\end{figure}
\end{lemma}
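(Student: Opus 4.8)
The plan is to present the initial pushout $(1)$ over $f_1$ as a factorisation of a suitable auxiliary pushout over $f_1$ assembled from $(2)$ and $(3)$, and then to transport that factorisation along two projection morphisms into $A_5$ and $A_6$; uniqueness will come for free since $a_3$ and $a_4$ are monomorphisms, so all the work is in existence and in the final pullback property.

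First I would form two pullbacks along the $\M$-morphism $a_2$: let $P$ be the pullback of $f_2 \circ a_3$ (which equals $a_4 \circ f_3$ by commutativity of $(3)$) along $a_2$, with projections $p_5 : P \to A_5$ and $p_2 : P \to A_2$; and let $Q$ be the pullback of $a_4$ along $a_2$, with projections $q_6 : Q \to A_6$ and $q_2 : Q \to A_2$. Since $a_2 \in \M$, $a_4 \in \M$ (as a pushout of $a_3$), and $a_1 \in \M$ (as a pullback of $a_2$ in $(2)$), all four projections belong to $\M$. The universal properties of $A_1$ (pullback $(2)$) and of $Q$ yield mediators $\pi : P \to A_1$ and $\sigma : P \to Q$ with $a_1 \circ \pi = a_3 \circ p_5$, $f_1 \circ \pi = p_2$, $q_6 \circ \sigma = f_3 \circ p_5$ and $q_2 \circ \sigma = p_2$. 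Pasting $(2)$ beneath the square with edges $p_5,\pi,a_3,a_1$, and pasting the defining pullback of $Q$ beneath the square with edges $p_5,\sigma,f_3,q_6$, reconstructs in each case the defining pullback of $P$; hence, by pullback decomposition (\ref{fact:pb-comp-decomp}), both of these squares are themselves pullbacks.

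Next I would assemble the commutative cube whose bottom face is the pushout $(3)$ along $a_3 \in \M$, whose vertical maps are $p_5$, $a_1$, $q_6$, $a_2$, and whose top face is the square with span $A_1 \xleftarrow{\pi} P \xrightarrow{\sigma} Q$ and cospan $A_1 \xrightarrow{f_1} A_2 \xleftarrow{q_2} Q$. Its two back faces are the two pullbacks just produced, its two front faces are exactly $(2)$ and the defining pullback of $Q$, and $a_1,q_6,a_2 \in \M$; so the vertical weak van Kampen property of pushouts along $\M$-morphisms forces the top face to be a pushout. Since $\pi$ is a pullback of $a_3 \in \M$, it lies in $\M$, so the top face is a pushout over $f_1$ admitting an $\M$-boundary, and by the factorisation property of the initial pushout $(1)$ there are unique morphisms $\beta : B_{f_1} \to P$ and $\gamma : C_{f_1} \to Q$ with $\pi \circ \beta = b_{f_1}$ and $q_2 \circ \gamma = c_{f_1}$; by the remark following Definition~\ref{def:initial-pushout}, $\beta$ and $\gamma$ are $\M$-morphisms and the square with edges $\beta$, $x_{f_1}$, $\sigma$, $\gamma$ is again a pushout.

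Finally I would set $d := p_5 \circ \beta$ and $e := q_6 \circ \gamma$; both lie in $\M$ as composites of $\M$-morphisms. Commutativity of the back face with edges $p_5,\pi,a_3,a_1$ gives $a_3 \circ d = a_1 \circ \pi \circ \beta = a_1 \circ b_{f_1}$, and commutativity of the front face with edges $q_6,q_2,a_4,a_2$ gives $a_4 \circ e = a_2 \circ q_2 \circ \gamma = a_2 \circ c_{f_1}$; uniqueness of $d$ and $e$ is immediate since $a_3$ and $a_4$ are monomorphisms. For the pullback claim, the pushout square with edges $\beta$, $x_{f_1}$, $\sigma$, $\gamma$ is a pushout along the $\M$-morphism $\beta$, hence also a pullback (Fact~\ref{fact:properties-adhesive-categories}~(1)); pasting it with the back-face pullback $P \cong A_5 \times_{A_6} Q$ (the square with edges $p_5$, $\sigma$, $f_3$, $q_6$) shows that the square $B_{f_1} \xrightarrow{d} A_5 \xrightarrow{f_3} A_6 \xleftarrow{e} C_{f_1} \xleftarrow{x_{f_1}} B_{f_1}$ is a pullback. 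The step I expect to be the main obstacle is orienting the van Kampen cube correctly — recognising $(2)$ and the pullback defining $Q$ as precisely the two front faces, with the two pasted squares as the back faces — together with the $\M$-membership bookkeeping needed to legitimately invoke the vertical weak van Kampen property; everything after that is routine pasting of pullbacks and the factorisation property of initial pushouts.
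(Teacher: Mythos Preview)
Your argument is correct and follows essentially the same strategy as the paper: construct a van Kampen cube over the pushout $(3)$, conclude its top face is a pushout over $f_1$, and invoke the factorisation property of the initial pushout $(1)$ to obtain the desired morphisms and pullback. The only difference is cosmetic --- the paper chooses the vertical morphisms of the cube to be $a_1 \circ b_{f_1}$ and $a_2 \circ c_{f_1}$ (so that the auxiliary pullback objects $Y,X$ turn out to be isomorphic to $B_{f_1},C_{f_1}$ directly), whereas you take $a_1,a_2$ as verticals and compose with the factoring morphisms $\beta,\gamma$ afterwards.
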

	
\begin{proof}
	First, compute the cube that is depicted in Fig.~\ref{fig:proof-interaction-ipos-pbs} by computing the missing side faces as pullbacks: 
	The pullbacks in the front and back exist since $a_4,a_3 \in \M$; the resulting morphisms $d,e,x,y$ are $\M$-morphisms as they result from pullbacks along $\M$-morphisms. 
	The morphism $\bar{x_{f_1}}: Y \to X$ such that the whole cube commutes is then induced by the universal property of $X$ as pullback object; moreover, by pullback decomposition, the induced right side face is a pullback, as well. 
	Since $(\CC,\M)$ is $\M$-adhesive and $a_3 \in \M$, the pushout at the bottom of the cube has the weak vertical van Kampen property. 
	Since $a_2 \circ c_{f_1}, a_1 \circ b_{f_1},e,d \in \M$, this implies that the top face is a pushout as well.
	(If $(\CC,\M)$ is even adhesive HLR and we drop the assumption $a_2$ (and hence $a_2 \circ c_{f_1}$) $\in \M$, we cannot conclude $e,d \in \M$. However, the top square still is a pushout because the bottom pushout then has the van Kampen property.)
	We prove the statement by proving that, without loss of generality, $x$ and $y$ are the identity morphisms of $C_{f_1}$ and $B_{f_1}$, respectively. 
	
	\begin{figure}
		\centering
		\begin{tikzpicture}
			\matrix (m) [	matrix of math nodes,
										nodes in empty cells,
										nodes={anchor=center},
										row sep=2em,
										column sep=2em,
										minimum width=2em]
			{	
									& B_{f_1}	& 		& Y \\
					C_{f_1}	&					&	X		& \\
									&	A_1			&			& \\
				 A_2			&					&		 	& \\
									& A_3			& 		& A_5 \\
				 A_4			&					& A_6	& \\};
			\node (1) at ($(m-1-2)!0.5!(m-4-1)$) {$(1)$};
			\node (2) at ($(m-3-2)!0.5!(m-6-1)$) {$(2)$};
			\node (3) at ($(m-6-1)!0.5!(m-5-4)$) {$(3)$};
			\path[-stealth]
				(m-1-2) edge [->,shorten >= -1pt] node [left,inner xsep=5pt,pos=.25] {\scriptsize $x_{f_1}$} (m-2-1)
								edge [right hook->] node [right,near end] {\scriptsize $b_{f_1}$} (m-3-2)
				(m-1-2.north) edge [right hook->,dashed,bend left=25] node [above,pos=.7] {\scriptsize $b_{f_1}^*$} (m-1-4.north)
				(m-1-4) edge [left hook->,dashed] node [above,pos=.35] {\scriptsize $y$} (m-1-2)
								edge [right hook->,dashed] node [right] {\scriptsize $d$} (m-5-4)
								edge [->,dashed] node [right,inner xsep=5pt] {\scriptsize $\bar{x_{f_1}}$} (m-2-3)
				(m-2-1) edge [right hook->] node [left] {\scriptsize $c_{f_1}$} (m-4-1)
				(m-3-2) edge [->] node [above] {\scriptsize $f_1$} (m-4-1)
								edge [right hook->] node [right] {\scriptsize $a_1$} (m-5-2)
				(m-4-1) edge [right hook->] node [left] {\scriptsize $a_2$} (m-6-1)
				(m-5-2) edge [->] node [above] {\scriptsize $f_2$} (m-6-1)
				(m-5-4) edge [left hook->] node [above,near start] {\scriptsize $a_3$} (m-5-2)
								edge [->] node [right] {\scriptsize $f_3$} (m-6-3)
				(m-6-3) edge [left hook->] node [above] {\scriptsize $a_4$} (m-6-1)
				(m-2-3) edge [-,draw=white, line width=4pt] (m-2-1)
								edge [left hook->,dashed] node [above,pos=.35] {\scriptsize $x$} (m-2-1)
								edge [-,draw=white,line width=4pt] (m-6-3)
								edge [right hook->,dashed] node [right] {\scriptsize $e$} (m-6-3)
				(m-2-1.north) edge [-,draw=white,line width=3pt,bend left=25] (m-2-3.north)
											edge [right hook->,dashed,bend left=25] node [above,pos=.7] {\scriptsize $c_{f_1}^*$} (m-2-3.north);
		\end{tikzpicture}
		\caption{Proving the interaction of initial pushouts with pullbacks.}
		\label{fig:proof-interaction-ipos-pbs}
	\end{figure}
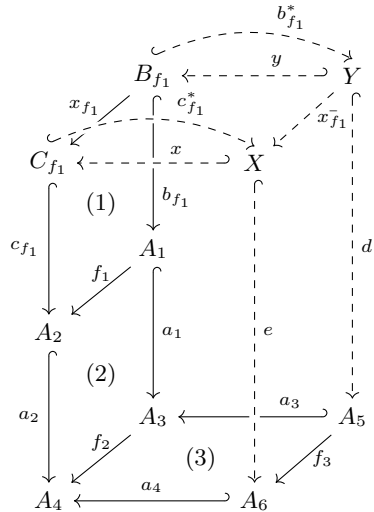
	
	Composing the top pushout with the initial pushout $(1)$ results in a second pushout over $f_1: A_1 \to A_2$. 
	Since $x,y \in \M$ also $b_{f_1} \circ y, c_{f_1} \circ x \in \M$. 
	Then, by initiality of $(1)$, we obtain $\M$-morphisms $b_{f_1}^{*}: B_{f_1} \hookrightarrow Y$ and $c_{f_1}^{*}: C_{f_1} \hookrightarrow X$ such that 
	\begin{equation*}
		b_{f_1} \circ y \circ b_{f_1}^{*} = b_{f_1} \text{ and } c_{f_1} \circ x \circ c_{f_1}^{*} = c_{f_1} \enspace ; 
	\end{equation*}
	compare the top of Fig.~\ref{fig:proof-interaction-ipos-pbs}. 
	Canceling the monomorphisms $b_{f_1}$ and $c_{f_1}$, respectively, shows that $x$ and $y$ are split epi, and hence isomorphisms. 
	Thus, without loss of generality, $X = C_{f_1}$, $Y = B_{f_1}$, $x = id_{C_{f_1}}$, $y = id_{B_{f_1}}$, and $\bar{x_{f_1}} =  x_{f_1}$. 
	In particular, $a_3 \circ d = a_1 \circ b_{f_1}$, $a_4 \circ e = a_2 \circ c_{f_1}$, and the desired square is a pullback. \qed
\end{proof}

The rest of this section contains the proofs of all results presented in the main text of the paper. 

\begin{proof}[of Lemma~\ref{lem:existence-extension-morphism}]
	By compatibility of the common kernel $k$ with $E$ and the definition of a concurrent rule we have that 
	\begin{align*}
		r_1^{\prime} \circ e_1^{\prime\prime} \circ u_1	& = e_1 \circ r_1 \circ u_1 \\
																										& = e_2 \circ l_2 \circ u_2 \\
																										& = l_2^{\prime} \circ e_2^{\prime\prime} \circ u_2 \enspace .
	\end{align*}
	Thus, by the universal property of the pullback computing $K$, we obtain a unique morphism $p: \Kcap \to K$ such that $k_i \circ p = e_i^{\prime\prime} \circ u_i$ for $i=1,2$. 
	Moreover, $p \in \M$ by decomposition of $\M$-morphisms. \qed
\end{proof}

\begin{proof}[of Proposition~\ref{prop:con-rule-as-gen-con-rule}]
	First, since pushouts and pullbacks along isomorphisms result in isomorphisms again, and since in $\M$-adhesive categories pushouts along $\M$-morphisms are pullbacks, $\kprime$ is an isomorphism if and only if $k$ is one. 
	
	Furthermore, since in $\M$-adhesive categories pullbacks along $\M$-morphisms exist, for every $E$-dependency relation $E = (e_1,e_2)$ we can construct $id_{\Kcap}$ as compatible common kernel as follows: 
	We obtain $u_i: \Kcap \hookrightarrow K_i$, where $i=1,2$, by pulling back the pair of $\M$-morphisms $(e_1 \circ r_1, e_2 \circ l_2)$. 
	Thus, it suffices to ensure that there are suitable morphisms $v_1: V = \Kcap \to L_1, v_2: V = \Kcap \to R_2$ such that $k$ indeed constitutes a common kernel for $\rho_1$ and $\rho_2$ compatible with $E$. 
	This is guaranteed by \ref{fact:pb-mono} when setting $v_1 \coloneqq l_1 \circ u_1: V = \Kcap \to L_1, v_2 \coloneqq r_2 \circ u_2: V = \Kcap \to R_2$ since $l_1$ and $r_2$ are monic (as $\M$-morphisms). 
	
	Finally, without loss of generality, we obtain $\kprime = k: K \hookrightarrow K$ and $\pprime = p$. 
	In particular, $\lprime = l_1^{\prime} \circ k_1 = l$ and $\rprime = r_2^{\prime} \circ k_2 = r$. \qed
\end{proof}

\begin{proof}[of Proposition~\ref{prop:sc-rule-is-gcr}]
	To make the constructions comparable, we consider the (equivalent) rules $\rho_1^{-1} = (R_1 \xhookleftarrow{r_1} L_1 \xhookrightarrow{id_{L_1}} L_1)$ and $\rho_2 = (L_2 \xhookleftarrow{id_{L_2}} L_2 \xhookrightarrow{r_2} R_2)$, instead. 
	
	\begin{figure}
		\centering
		\includegraphics[width=\textwidth]{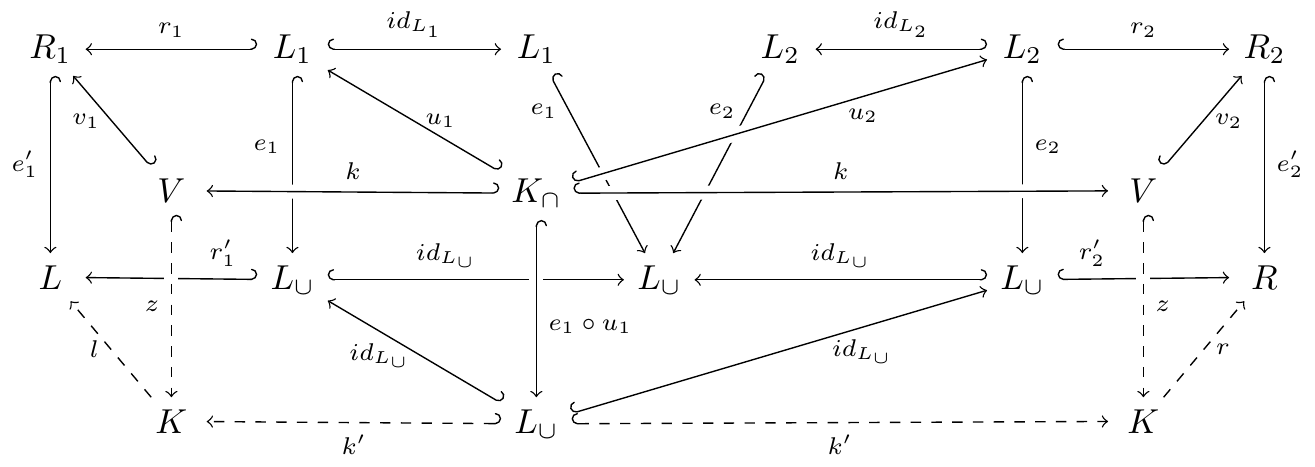}
		\caption{Short-cut rule as generalized concurrent rule.}
		\label{fig:sc-rule-as-gen-con-rule}
	\end{figure}
	
	Applying the construction of GCRs to $\rho_1^{-1}$ and $\rho_2$ with $E = (e_1: L_1 \hookrightarrow \Lcup, e_2: L_2 \hookrightarrow \Lcup)$ as $E$-dependency relation and $k: \Kcap \hookrightarrow V$ as common kernel, results in the diagram depicted in Fig.~\ref{fig:sc-rule-as-gen-con-rule} (where we employ the notation of Figs.~\ref{fig:constr-LHS-RHS} and \ref{fig:interface-scrule}): 
	In this special case, the morphism $p$ provided by Lemma~\ref{lem:existence-extension-morphism} is already the morphism $e_1 \circ u_1 = e_2 \circ u_2$. 
	Therefore, it is evident that this computes the same rule as the short-cut rule construction, i.e., $\rho_1^{-1} *_{E,k} \rho_2 = \scrule{\rho_1}{\rho_2}{k}$. 
	In particular, as pushout complements for a sequence of two morphisms with first morphism an identity always exist, $E$ indeed is an $E$-dependency relation, as long as $(e_1,e_2) \in \mathcal{E}^{\prime}$. 
	Since this pair is computed as pushout along the pair of $\M$-morphisms $(u_1,u_2)$, $E$ is a pair of jointly epic $\M$-morphisms; a class of morphisms that is regularly included in (or even constitutes) $\mathcal{E}^{\prime}$ in practical applications. \qed
\end{proof}

\begin{proof}[of Proposition~\ref{prop:embedding-characterization-monic}]
	First, if the application of Construction~\ref{con:span-gen-con-rule-1} results in a GCR, i.e., if $\lprime,\rprime \in \M$, we immediately obtain $\lprime \circ \pprime = e_1^{\prime} \circ v_1 \in \M$ by composition and then $v_1 \in \M$ by decomposition of $\M$-morphisms. 
	Analogously, $\rprime \in \M$ implies $v_2 \in \M$. 
	
	\begin{figure}
		\centering
		\begin{tikzpicture}
			\matrix (m) [	matrix of math nodes,
										row sep=2em,
										column sep=2em,
										minimum width=2em,
										nodes in empty cells]
			{
							&	K				&					&	& \\
				\Kcap	&					&	\Kprime	&	&	L \\
							& V	&					&	&	\\};
			\path[-stealth]
				(m-2-1) edge [right hook->] node [above] {\scriptsize $p$} (m-1-2)
								edge [right hook->] node [below] {\scriptsize $k$} (m-3-2)
				(m-1-2) edge [right hook->] node [above] {\scriptsize $\kprime$} (m-2-3)
								edge [right hook->,bend left=15] node [above] {\scriptsize $l_1^{\prime} \circ k_1$} (m-2-5)
				(m-3-2) edge [right hook->] node [below] {\scriptsize $\pprime$} (m-2-3)
								edge [right hook->,bend right=15] node [below] {\scriptsize $e_1^{\prime} \circ v_1$} (m-2-5)
				(m-2-3) edge [->,dashed] node [above] {\scriptsize $\lprime$} (m-2-5);
		\end{tikzpicture}
		\caption{Obtaining $\lprime\in\M$ via $\M$-effective unions.}
		\label{fig:obtaining-lprime-in-M}
	\end{figure}
	
	For the converse direction, first Fig.~\ref{fig:obtaining-lprime-in-M} illustrates how $\lprime$ is obtained by the universal property of the pushout computing $\Kprime$. 
	In particular, $v_1 \in \M$ implies $e_1^{\prime} \circ v_1 \in \M$ by composition of $\M$-morphisms. 
	Thus, if the outer square is a pullback, unions being $\M$-effective implies that $\lprime \in \M$. 
	To show the outer square to be a pullback, compare Fig.~\ref{fig:proving-outer-square-pb}: 
	The two top squares are pullbacks by assumption (the top square being a pushout along an $\M$-morphism) and the bottom square is a pullback according to \ref{fact:pb-mono}. 
	Then, pullback composition (\ref{fact:pb-comp-decomp}) implies that the whole square is a pullback, indeed. 
	
	\begin{figure}
		\centering
		\begin{tikzpicture}
			\matrix (m) [	matrix of math nodes,
										row sep=1.25em,
										column sep=1.25em,
										minimum width=1.25em,
										nodes in empty cells]
			{
					&			& L			&				& \\
					& C_1	&				& L_1		&	\\
				K	&			&	K_1		&				& V \\
					&			&				&	\Kcap	&	\\
					&			& \Kcap	&				&	\\};
			\path[-stealth]
				(m-5-3) edge [left hook->] node [left,inner xsep=5pt] {\scriptsize $p$} (m-3-1)
								edge [right hook->] node [right,inner xsep=5pt,pos=.3] {\scriptsize $id_{\Kcap}$} (m-4-4)
				(m-4-4) edge [left hook->] node [left,pos=.3] {\scriptsize $u_1$} (m-3-3)
								edge [right hook->] node [right,inner xsep=5pt,pos=.3] {\scriptsize $k$} (m-3-5)
				(m-3-1) edge [right hook->] node [above,inner ysep=3pt,pos=.4] {\scriptsize $k_1$} (m-2-2)
				(m-3-3) edge [left hook->] node [left,pos=.3,inner sep=5pt] {\scriptsize $e_1^{\prime\prime}$} (m-2-2)
								edge [right hook->] node [right,inner xsep=5pt,pos=.3] {\scriptsize $l_1$} (m-2-4)
				(m-3-5) edge [left hook->] node [right,pos=.7,inner xsep=5pt] {\scriptsize $v_1$} (m-2-4)
				(m-2-2) edge [right hook->] node [above,inner ysep=3pt,pos=.4] {\scriptsize $l_1^{\prime}$} (m-1-3)
				(m-2-4) edge [left hook->] node [right,pos=.7,inner xsep=7pt] {\scriptsize $e_1^{\prime}$} (m-1-3);
		\end{tikzpicture}
		\caption{Proving the outer square of Fig.~\ref{fig:obtaining-lprime-in-M} to be a pullback.}
		\label{fig:proving-outer-square-pb}
	\end{figure}
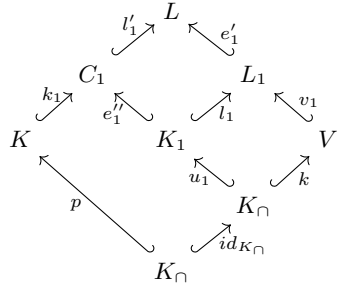
	
	Finally, we show $\M$-effective unions to be necessary for this result to hold by constructing an (abstract) counterexample otherwise.  
	Whenever a category $(\CC,\M)$ does not have $\M$-effective unions, there is a pullback of $\M$-morphisms $l_1,v_1$ witnessing this; in particular $\lprime \notin \M$ (as depicted in Fig.~\ref{fig:counterexample-M-effective-unions}). 
	\begin{figure}
		\centering
		\begin{tikzpicture}
			\matrix (m) [	matrix of math nodes,
										row sep=2em,
										column sep=2em,
										minimum width=2em,
										nodes in empty cells,
										nodes={anchor=center}]
			{
							&	K_1			&					&	& \\
				\Kcap	&					&	\Kprime	&	&	L_1 \\
							& V	&					&	&	\\};
			\path[-stealth]
				(m-2-1) edge [right hook->] node [above] {\scriptsize $u_1$} (m-1-2)
								edge [right hook->] node [below] {\scriptsize $k$} (m-3-2)
				(m-1-2) edge [right hook->] node [above] {\scriptsize $\kprime$} (m-2-3)
								edge [right hook->,bend left=15] node [above] {\scriptsize $l_1$} (m-2-5)
				(m-3-2) edge [right hook->] node [below] {\scriptsize $\pprime$} (m-2-3)
								edge [right hook->,bend right=15] node [below] {\scriptsize $v_1$} (m-2-5)
				(m-2-3) edge [->,dashed] node [above] {\scriptsize $\lprime$} (m-2-5);
		\end{tikzpicture}
		\caption{Counterexample to $\M$-effective unions.}
		\label{fig:counterexample-M-effective-unions}
	\end{figure}
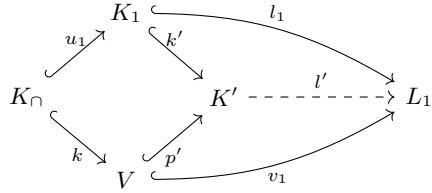
	Choose $\Eprime$ such that it includes the class of pairs of jointly epimorphic $\M$-morphisms. 
	Let $\rho_1 = (L_1 \xhookleftarrow{l_1} K_1 \xhookrightarrow{id_{K_1}} K_1)$, $\rho_2 = (\Kcap \xhookleftarrow{id_{\Kcap}} \Kcap \xhookrightarrow{k} V)$, and $E = (id_{K_1},u_1) \in \Eprime$. 
	Embed $k$ via $u_1,v_1$ into the left part of $\rho_1$ and via $u_2 = id_{\Kcap}$ and $v_2 = id_{V}$ into the right part of $\rho_2$.
	This makes $k$ into a common kernel compatible with $E$. 
	However, applying Construction~\ref{con:span-gen-con-rule-1} results in $L_1 \xleftarrow{\lprime} \Kprime \xhookrightarrow{id_{\Kprime}} \Kprime$ which is not a rule since $\lprime \notin \M$ by assumption. 
	The diagram in Fig.~\ref{fig:abstract-counterexample-M-effective-unions} depicts the detailed computation of this abstract counterexample. 
	\begin{figure}
		\centering
		\includegraphics[width=\textwidth]{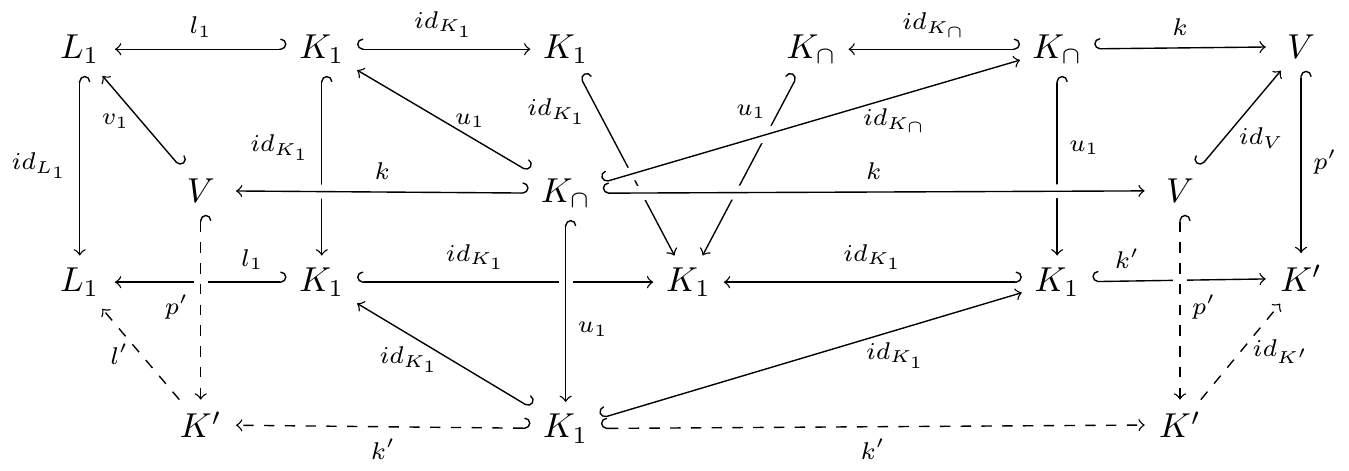}
		\caption{Abstract counterexample to Proposition~\ref{prop:embedding-characterization-monic} in absence of $\M$-effective unions.}
		\label{fig:abstract-counterexample-M-effective-unions}
	\end{figure}
	By the prevalence of identities, it is not difficult to check the occurring squares to be pushouts resp. pullbacks as needed. 
	Moreover, $(id_{K_1},u_1) \in \Eprime$: both morphisms are clearly $\M$-morphisms and the pair is jointly epimorphic since $id_{K_1}$ is even an epi. \qed
\end{proof}

\begin{proof}[of Proposition~\ref{prop:enhancement-characterization-monic}]
	First, let $L \xhookleftarrow{\lprime} \Kprime \xhookrightarrow{\rprime} R$ be a GCR, i.e., assume that there exists a common kernel $k$ for $\rho_1$ and $\rho_2$ that is compatible with $E$ such that $\genconrule = L \xhookleftarrow{\lprime} \Kprime \xhookrightarrow{\rprime} R$. 
	Compare Fig.~\ref{fig:proof-appropriate-enhancement} for the following.
	
	\begin{figure}
		\centering
		\includegraphics{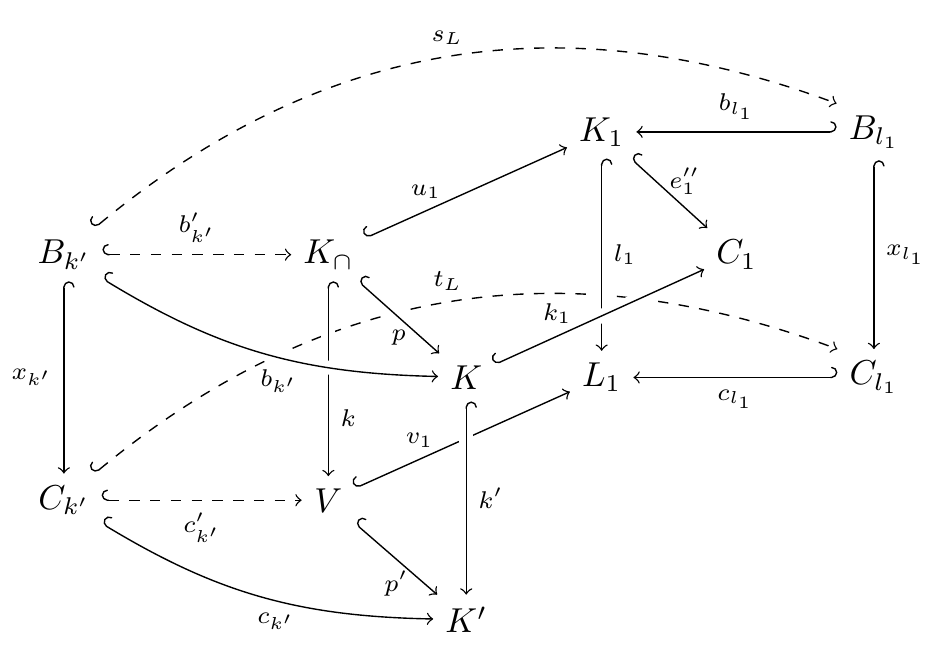}
		\caption{Proving appropriate enhancement.}
		\label{fig:proof-appropriate-enhancement}
	\end{figure}
	
	The solid squares show the relevant part of the computation of $\genconrule$ (namely, the embedding of the common kernel $k$ into $l_1$, the computation of $\Kprime$ as a pushout, and a part of the computation of $K$ via pullback -- the morphism $k_1$) and the initial pushouts over $\kprime$ (the bent square at the front) and $l_1$ (the square to the very right). 
	First, Fact~\ref{fact:closure-ipo} ensures that the boundary and context objects $B_{\kprime}$ and $C_{\kprime}$ of the initial pushout over $\kprime$ also constitute the boundary and context objects of the initial pushout over $k$ (as $\kprime$ is computed as pushout along $k$), where the necessary morphisms $b_{\kprime}^{\prime}, c_{\kprime}^{\prime}$ are induced by initiality. 
	Then, the sequence of three squares at the center of the figure is as in the situation of Lemma~\ref{lem:interaction-ipos-pbs}: the first square is an initial pushout followed by a pullback and the opposing square is a pushout. 
	Hence, Lemma~\ref{lem:interaction-ipos-pbs} implies the existence of $s_L$ and $t_L$ yielding the required pullback square (bent square in the background). 
	
	Finally, using the commutativity of the whole diagram we compute
	\begin{align*}
		k_1 \circ b_{\kprime}	& = k_1 \circ p \circ b_{\kprime}^{\prime} \\
													& = e_1^{\prime\prime} \circ u_1 \circ b_{\kprime}^{\prime} \\
													& = e_1^{\prime\prime} \circ b_{l_1} \circ s_L
	\end{align*}
	as was to be shown. 
	The existence of $s_R: B_{\kprime} \hookrightarrow B_{r_2}$ and $t_R:C_{\kprime} \hookrightarrow C_{r_2}$ such that the induced square is a pullback and $k_2 \circ b_{\kprime} = e_2^{\prime\prime} \circ b_{r_2} \circ s_R$ is shown completely analogously. 
	
	For the other direction, assume $\kprime$ to be appropriately enhancing. 
	Using the equations $k_1 \circ b_{\kprime} = e_1^{\prime\prime} \circ b_{l_1} \circ s_L$ and $k_2 \circ b_{\kprime} = e_2^{\prime\prime} \circ b_{r_2} \circ s_R$, we first compute 
	\begin{align*}
		r_1^{\prime} \circ e_1^{\prime\prime} \circ b_{l_1} \circ s_L & = r_1^{\prime} \circ k_1 \circ b_{\kprime} \\
																																	& = l_2^{\prime} \circ k_2 \circ b_{\kprime}\tag{1}\label{al:appr-enhancing} \\
																																	& = l_2^{\prime} \circ e_2^{\prime\prime} \circ b_{r_2} \circ s_R \enspace. 
	\end{align*}
	We then compute $(u_1:\Kcap \hookrightarrow K_1,u_2:\Kcap \hookrightarrow K_2)$ as pullback of $(e_1 \circ r_1 = r_1^{\prime} \circ e_1^{\prime\prime},e_2 \circ l_2 = l_2^{\prime} \circ e_2^{\prime\prime})$. 
	Then, the universal property of this pullback and Eq.~\ref{al:appr-enhancing} imply the existence of a unique morphism $b_{\kprime}^{\prime}: B_{\kprime} \to \Kcap$ such that $u_1 \circ b_{\kprime}^{\prime} = b_{l_1} \circ s_L$ and $u_2 \circ b_{\kprime}^{\prime} = b_{r_2} \circ s_R$; moreover, $b_{\kprime}^{\prime} \in \M$ by decomposition of $\M$-morphisms. 
	We then compute the object $V$ as pushout of $x_{\kprime}$ along this morphism $b_{\kprime}^{\prime}$; this results in Fig.~\ref{fig:obtaining-common-kernel}: 
	The left square is the computed pushout, resulting in the $\M$-morphism $k: \Kcap \hookrightarrow V$. 
	The outer square is a pullback, which exists by assumption (composing the assumed pullback with the initial pushout). 
	The morphism $v_1$ is obtained by the universal property of the pushout and makes the whole diagram commute. 
	In particular, as the diagram commutes, $l_1 \in \M$, the outer square is a pullback, and the left square a pushout, $\M$-pullback-pushout decomposition is applicable and ensures the right square to constitute a pullback. 
	Completely analogously, one constructs the pullback embedding $k$ into $r_2$. 
	
	\begin{figure}
		\centering
		\begin{tikzpicture}
			\matrix (m) [	matrix of math nodes,
										nodes in empty cells,
										row sep=1.5em,
										column sep=1.5em,
										minimum width=1.5em]
			{
				B_{\kprime}	&	&	\Kcap		&	& K_1 \\
										&	&					&	&	\\
				C_{\kprime} &	& V	&	&	L_1 \\};
			\path[-stealth]
				(m-1-3) edge [right hook->,dashed] node [right] {\scriptsize $k$} (m-3-3)
								edge [right hook->] node [above] {\scriptsize $u_1$} (m-1-5)
				(m-1-1) edge [right hook->] node [above] {\scriptsize $b_{\kprime}^{\prime}$} (m-1-3)
								edge [right hook->] node [left] {\scriptsize $x_{\kprime}$} (m-3-1)
								edge [right hook->,bend left=28] node [above] {\scriptsize $b_{l_1} \circ s_L$} (m-1-5)
				(m-1-5) edge [right hook->] node [right] {\scriptsize $l_1$} (m-3-5)
				(m-3-1) edge [right hook->,dashed] node [below] {\scriptsize $c_{\kprime}^{\prime}$} (m-3-3)
								edge [right hook->,bend right=28] node [below] {\scriptsize $c_{l_1} \circ t_L$} (m-3-5)
				(m-3-3) edge [right hook->,dashed] node [below] {\scriptsize $v_1$} (m-3-5);
		\end{tikzpicture}
		\caption{Obtaining the common kernel $k: \Kcap \hookrightarrow V$ via pushout.}
		\label{fig:obtaining-common-kernel}
	\end{figure}
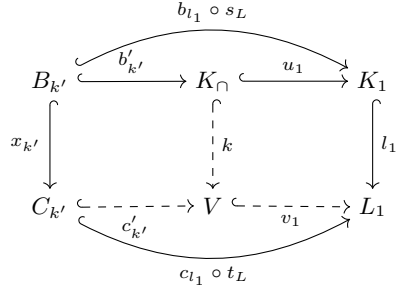

	With almost the same argument we ensure that $k$ computes the correct interface $\Kprime$ (see Fig.~\ref{fig:ensuring-correct-interface}):
	The left square, again, is the pushout computing $V$ and the outer square the given initial pushout over $\kprime$. 
	The morphism $p$ is obtained as in Lemma~\ref{lem:existence-extension-morphism}; in particular, $k_1 \circ p = e_1^{\prime\prime} \circ u_1$. 
	Using this, we compute 
	\begin{align*}
		k_1 \circ p \circ b_{\kprime}^{\prime}	& = e_1^{\prime\prime} \circ u_1 \circ b_{\kprime}^{\prime} \\
																						& = e_1^{\prime\prime} \circ b_{l_1} \circ s_L \\
																						& = k_1 \circ b_{\kprime} \enspace .
	\end{align*}
	In particular, since $k_1$ is a monomorphism, $p \circ b_{\kprime}^{\prime} = b_{\kprime}$. 
	This makes the upper part of Fig.~\ref{fig:ensuring-correct-interface} commute. 
	Again, the universal property of the left pushout implies the existence of a morphism $\pprime$ that makes the whole diagram commute. 
	Furthermore, pushout decomposition implies the second square to be a pushout, as desired. 
	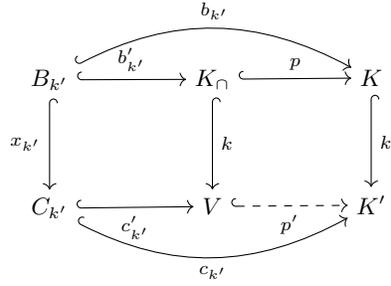
\begin{figure}
		\centering
		\begin{tikzpicture}
			\matrix (m) [	matrix of math nodes,
										nodes in empty cells,
										row sep=1.5em,
										column sep=1.5em,
										minimum width=1.5em]
			{
				B_{\kprime}	&	&	\Kcap		&	& K \\
										&	&					&	&	\\
				C_{\kprime} &	& V	&	&	\Kprime \\};
			\path[-stealth]
				(m-1-3) edge [right hook->] node [right] {\scriptsize $k$} (m-3-3)
								edge [right hook->] node [above] {\scriptsize $p$} (m-1-5)
				(m-1-1) edge [right hook->] node [above] {\scriptsize $b_{\kprime}^{\prime}$} (m-1-3)
								edge [right hook->] node [left] {\scriptsize $x_{\kprime}$} (m-3-1)
								edge [right hook->,bend left=28] node [above] {\scriptsize $b_{\kprime}$} (m-1-5)
				(m-1-5) edge [right hook->] node [right] {\scriptsize $\kprime$} (m-3-5)
				(m-3-1) edge [right hook->] node [below] {\scriptsize $c_{\kprime}^{\prime}$} (m-3-3)
								edge [right hook->,bend right=28] node [below] {\scriptsize $c_{\kprime}$} (m-3-5)
				(m-3-3) edge [right hook->,dashed] node [below] {\scriptsize $\pprime$} (m-3-5);
		\end{tikzpicture}
		\caption{Ensuring $k$ to compute $\Kprime$.}
		\label{fig:ensuring-correct-interface}
	\end{figure}
	Summarizing, we constructed a common kernel $k$ that is compatible with $E$ and computes the given object $\Kprime$ as interface. \qed
\end{proof}

\begin{proof}[of Corollary~\ref{cor:derivable-gcrs-graph}]
	The categories of graphs, typed graphs, and attributed graphs are all known to meet the conditions of Proposition~\ref{prop:enhancement-characterization-monic}, i.e., they are $\M$-adhesive categories and have initial pushouts. 
	Thus, the first part of the statement follows directly from the set-theoretic characterization of initial pushouts in these categories. 
	
	For the second statement, consider the case of graphs without edges. 
	Given a concurrent rule $\conrule$, there are 
	\begin{equation*}
		\sum_{i=0}^{\min(|L_1 \setminus K_1|,|R_2 \setminus K_2|)} i! \cdot \binom{|L_1 \setminus K_1|}{i} \cdot \binom{|R_2 \setminus K_2|}{i}
	\end{equation*}
	ways to derive a generalized concurrent rule from it as one extends $K$ to $\Kprime$ by adding $i$ elements to it which have to be mapped injectively to elements from $L_1 \setminus K_1$ and $R_2 \setminus K_2$, respectively. \qed
\end{proof}

\begin{proof}[of Proposition~\ref{prop:preservation-property}]
	Let $G_0 \xhookleftarrow{g_0} D \xhookrightarrow{g_2} G_2$ be a span stemming from an application of $\conrule$ at match $m$ and $d: K \to D$ the according morphism from the interface of the rule to the context object of this transformation. 
	Let $\kprime: K \hookrightarrow \Kprime$ be the enhancement morphism of $\genconrule$, i.e., the unique $\M$-morphism with $\lprime \circ \kprime = l$ and $\rprime \circ \kprime = r$ provided by construction. 
	Compare Fig.~\ref{fig:construct-transformation} for the following. 
	
	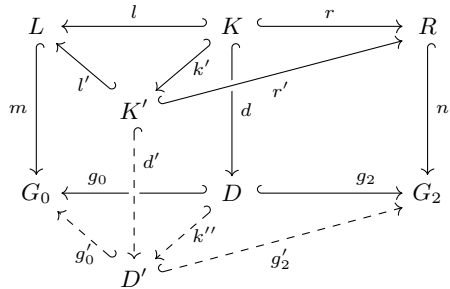
\begin{figure}
		\centering
		\begin{tikzpicture}
			\matrix (m) [	matrix of math nodes,
										nodes in empty cells,
										nodes={anchor=center},
										row sep=2em,
										column sep=2em,
										minimum width=2em]
			{
				L		&							&	K	&	& R	\\
						& \Kprime			&		&	&	\\
				G_0	&							& D	&	&	G_2 \\
						&	D^{\prime}	&		&	& \\};
			\path[-stealth]
				(m-1-1) edge [right hook->] node [left] {\scriptsize $m$} (m-3-1)
				(m-1-3) edge [right hook->] node [right] {\scriptsize $d$} (m-3-3)
								edge [left hook->] node [above] {\scriptsize $l$} (m-1-1)
								edge [right hook->] node [above] {\scriptsize $r$} (m-1-5)
								edge [left hook->] node [right] {\scriptsize $\kprime$} (m-2-2)
				(m-1-5) edge [right hook->] node [right] {\scriptsize $n$} (m-3-5)
				(m-3-3) edge [left hook->] node [above,near end] {\scriptsize $g_0$} (m-3-1)
								edge [right hook->] node [above,near end] {\scriptsize $g_2$} (m-3-5)
								edge [left hook->,dashed] node [right] {\scriptsize $\kprimeprime$} (m-4-2)
				(m-4-2) edge [left hook->,dashed] node [below] {\scriptsize $g_0^{\prime}$} (m-3-1)
				(m-4-2.10) edge [right hook->,dashed] node [below] {\scriptsize $g_2^{\prime}$} (m-3-5.210)
				(m-2-2) edge [left hook->] node [below] {\scriptsize $\lprime$} (m-1-1)
								edge [-,draw=white, line width=4pt] (m-4-2)
								edge [right hook->,dashed] node [right,near start] {\scriptsize $d^{\prime}$} (m-4-2)
				(m-2-2.10) edge [-,draw=white, line width=4pt] (m-1-5.210)
								edge [right hook->] node [below] {\scriptsize $\rprime$} (m-1-5.210);
		\end{tikzpicture}
		\caption{Constructing transformation via generalized concurrent rule from transformation via the concurrent rule.}
		\label{fig:construct-transformation}
	\end{figure}
	
	First, compute $K^{\prime} \xrightarrow{d^{\prime}} D^{\prime} \xhookleftarrow{\kprimeprime} D$ as pushout of $K^{\prime} \xhookleftarrow{\kprime} K \xrightarrow{d} D$. 
	Note that $\kprimeprime \in \M$ as it arises by pushout along $\kprime \in \M$. 
	In particular, $\kprimeprime$ is an isomorphism if and only if $\kprime$ is one (as this pushout is a pullback as well). 
	Moreover, one computes
	\begin{equation*}
		m \circ \lprime \circ \kprime = m \circ l = g_0 \circ d \text{ and } n \circ \rprime \circ \kprime = n \circ r = g_2 \circ d \enspace .
	\end{equation*}
	This means, by the universal property of that pushout, we obtain morphisms $g_i^{\prime}: D^{\prime} \hookrightarrow G_i$ such that $g_i^{\prime} \circ \kprimeprime = g_i$ for $i= 0,2$. 
	By pushout decomposition, both induced squares (the front squares in Fig.~\ref{fig:construct-transformation}) are pushouts. 
	Moreover, $m \models \mathit{ac}$, where $\mathit{ac}$ is the application condition of $\genconrule$, since $\mathit{ac}$ is also the application condition of $\conrule$ (compare Definitions~\ref{def:concurrent-rule} and~\ref{def:generalized-concurrent-rule}) and $\conrule$ is applicable at $m$.
	In particular, $G_0 \xhookleftarrow{g_0^{\prime}} D^{\prime} \xhookrightarrow{g_2^{\prime}} G_2$ is a transformation from $G_0$ to $G_2$ via $\genconrule$ at match $m$ and the desired morphism $\kprimeprime$ exists. \qed
\end{proof}

\begin{proof}[of Theorem~\ref{thm:generalized-concurrency-theorem}]
	The \emph{synthesis case} holds by virtue of the synthesis case of the Concurrency Theorem (see, e.g., \cite[Theorem~4.17]{EGHLO14} for its statement in the context of $\M$-adhesive categories and rules with application conditions) and Proposition~\ref{prop:preservation-property}: 
	Whenever such an $E$-related sequence of applications of $\rho_1$ and $\rho_2$ is given, the transformation $G_0 \Rightarrow_{\conrule,m} G_2$ exists by the Concurrency Theorem, and, hence, the transformation $G_0 \Rightarrow_{\genconrule,m} G_2$ by Proposition~\ref{prop:preservation-property}.

	For the 
	\emph{analysis case}, 
	it suffices to show that $\conrule$ is applicable at match $m$ if $\rho_1$ is applicable at $m \circ e_1^{\prime}$. 
	Applicability of $\rho_2$ at a suitable match then, again, follows by the analysis case of the Concurrency Theorem.
	The second direction of the stated equivalence is trivial. 
	Figure~\ref{fig:gen-con-thm_analysis_monic} displays the substance of that proof; the solid lines are given and the dashed ones are constructed throughout the proof. 
	
	\begin{figure}
		\centering
		\includegraphics[width=.75\textwidth]{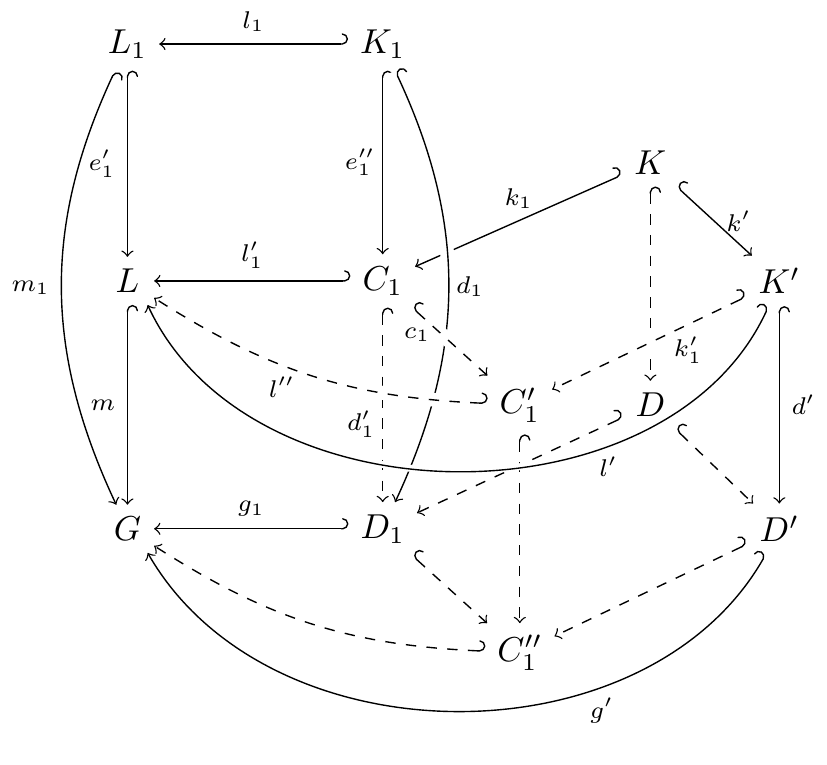}%
		\caption{Proving $\conrule$ to be applicable at $m$.}%
		\label{fig:gen-con-thm_analysis_monic}%
	\end{figure}
	
	First, the outer square to the left is the pushout (with context object $D_1$) that exists because of the applicability of $\rho_1$ at $m \coloneqq m \circ e_1^{\prime}$. 
	Furthermore, the solid bent square at the front is the pushout (with context object $D^{\prime}$) that exists because of the applicability of $\genconrule$ at $m$. 
	To show that also $\conrule$ is applicable at $m$, it suffices to construct a pushout complement $D$ for $m \circ l$ with $l \coloneqq l_1^{\prime} \circ k_1$. 
	We construct this pushout as composition of two pushouts.
	
	First, since $m,l_1 \in \M$, $\M$-pushout-pullback decomposition ensures that pulling back $m$ and $g_1$ decomposes the outer left square into two pushouts. 
	However, the unique pushout complement of $e_1^{\prime} \circ l_1$ is known to be given by $l_1^{\prime} \circ e_1^{\prime\prime}$ such that (up to isomorphism) $C_1$ is the object resulting from pulling back $m$ and $g_1$. 
	In particular, we obtain $d_1^{\prime}: C_1 \hookrightarrow D_1$ such that both the left squares are pushouts. 
	
	Subsequently, we compute $C_1^{\prime}$ as pushout of $k_1$ and $\kprime$. 
	Its universal property induces the morphism $\llprime$; in particular, $\lprime = \llprime \circ k_1^{\prime}$ and $\llprime \in \M$ by the existence of $\M$-effective unions. 
	Then, $C_1^{\prime\prime}$ is computed as pushout of $c_1$ and $d_1^{\prime}$; again, the morphism $C_1^{\prime\prime} \hookrightarrow G$ is obtained by its universal property. 
	Now, invoking $\lprime = \llprime \circ k_1^{\prime}$, pushout decomposition, and uniqueness of $C_1^{\prime\prime}$ as pushout complement for $m \circ \llprime$, the two vertical squares $C_1^{\prime} \xhookrightarrow{\llprime} L \xhookrightarrow{m} G \hookleftarrow C_1^{\prime\prime} \hookleftarrow C_1^{\prime}$ and $\Kprime \xhookrightarrow{k_1^{\prime}} C_1^{\prime} \hookrightarrow C_1^{\prime\prime} \hookleftarrow D^{\prime} \xhookleftarrow{d^{\prime}} K^{\prime}$ can be recognized to decompose the bent pushout at the front (that is given by the applicability of $\genconrule$ at $m$) into two pushouts. 
	
	In particular, we obtained the top and the front faces of the right cube and they are all pushouts. 
	Completing the cube by computing $D$ as pullback and invoking the vertical weak van Kampen property of the right front face, the left back face of that cube is a pushout, as well. 
	This means, since $l = l_1^{\prime} \circ k_1$, composing that pushout with the lower pushout of the two left ones constitutes the left pushout of a transformation of $\conrule$ at match $m$ with context object $D$. 
	In particular, $\conrule$ is applicable at $m$. 
	
	Finally, we did not explicitly mention application conditions but dealt with them implicitly, invoking the Concurrency Theorem for rules with application conditions. 
	To at least somewhat motivate the definition of the application condition of a GCR (or concurrent rule) and for the convenience of the reader, we reproduce the relevant computation that can also be found in the proof of \cite[Theorem~4.17]{EGHLO14}: 
	Whenever $m_1,m_2$ are the matches for $\rho_1,\rho_2$ constituting an $E$-related transformation sequence and $m$ is the corresponding match for the (generalized) concurrent rule (or vice versa in case analysis is possible), we have
	\begin{align*}
		m_1 \models \mathit{ac}_1 \text{ and } m_2 \models \mathit{ac}_2	& \iff m \models \Sh(e_1^{\prime},\mathit{ac}_1) \text{ and } h \models \Sh(h,\mathit{ac}_2) \\
																																			& \iff m \models \Sh(e_1^{\prime},\mathit{ac}_1) \text{ and } m \models \Le(\pprime,\Sh(h,\mathit{ac}_2)) \\
																																			& \iff m \models \Sh(e_1^{\prime},\mathit{ac}_1) \wedge \Le(\pprime,\Sh(h,\mathit{ac}_2)) \\
																																			& \iff m \models \mathit{ac}
	\end{align*}
	where $\pprime = L \xhookleftarrow{l_1^{\prime}} C_1 \xhookrightarrow{r_1^{\prime}} E$, $m_1 = m \circ e_1^{\prime}$, and $h: E \to G_1$ is the morphism with $h \circ e_1 = n_1$ and $h \circ e_2 = m_2$ that exists because the transformation sequence is $E$-related. 
	The computation relies on Facts~\ref{fact:correctness-shift-morphism} and \ref{fact:correctness-shift-rule}. \qed
\end{proof}

}{}

\end{document}